\newtheorem{remark}[theorem]{Remark}
\newcommand{\deltabar}{{\mkern0.75mu\mathchar '26\mkern -9.75mu\delta }}
\begin{document}

\title{General relativistic Lagrangian continuum theories\\
\medskip 
\Large Part I: reduced variational principles and junction conditions for hydrodynamics and elasticity}
\author{Fran\c{c}ois Gay-Balmaz\footnote{Division of Mathematical Sciences, Nanyang Technological University, 21 Nanyang Link, Singapore 637371; email: \url{francois.gb@ntu.edu.sg}}\;\;\footnote{CNRS \& LMD, Ecole Normale Sup\'erieure, \url{gaybalma@lmd.ens.fr}.\;\;Data sharing not applicable to this article as no datasets were generated or analysed during the current study.}}
\date{}
\maketitle

\begin{abstract}


We establish a Lagrangian variational framework for general relativistic continuum theories that permits the development of the process of Lagrangian reduction by symmetry in the relativistic context. Starting with a continuum version of the Hamilton principle for the relativistic particle, we deduce two classes of reduced variational principles that are associated to either spacetime covariance, which is an axiom of the continuum theory, or material covariance, which is related to particular properties of the system such as isotropy. The covariance hypotheses and the Lagrangian reduction process are efficiently formulated by making explicit the dependence of the theory on given material and spacetime tensor fields that are transported by the world-tube of the continuum via the push-forward and pull-back operations. It is shown that the variational formulation, when augmented with the Gibbons-Hawking-York (GHY) boundary terms, also yields the Israel-Darmois junction conditions between the solution at the interior of the relativistic continua and the solution describing the gravity field produced outside from it.
The expression of the first variation of the GHY term with respect to the hypersurface involves some extensions of previous results that we also derive in the paper. We consider in details the application of the variational framework to relativistic fluids and relativistic elasticity. For the latter case, our setting also allows to clarify the relation between formulations of relativistic elasticity based on the relativistic right Cauchy-Green tensor or on the relativistic Cauchy deformation tensor. The setting developed here will be further exploited for modelling purpose in subsequent parts of the paper.

\end{abstract}

\tableofcontents



\section{Introduction}

General relativistic continuum theories are essential in the understanding of the structure and evolution of astrophysical systems. Phenomena of high interest which request these theories include gravitational collapse, supernova explosion, galaxy formation, accretion onto a neutron star or a black hole, as well as the emission of gravitational waves. While the use of relativistic fluid models is well-established, relativistic elasticity models are also believed to be of high relevance for instance for the physics of neutron star crusts.

As we will review below, variational principles have played a main role in the derivation of relativistic continuum models, and still form today an essential modelling tool in this area on the theoretical side. In general, for mechanical systems in absence of irreversible process, should they be relativistic or not, the most elegant and physically justified variational principle is the Hamilton principle. This principle is free from any constraints on the field variations, and can be constructed directly from the knowledge of the Lagrangian density of the system. It however requires to work in the material (or Lagrangian) representation, which corresponds to the description of each individual material particle. Depending on the problem at hands, this may not be the most practical description, one reason being that the equations in material coordinates often take a complicated form compared to their Eulerian (spatial) or convective (body) versions. It is thus desirable to formulate the variational principle directly in terms of the variables associated to those representations. Several types of Eulerian variational formulations have been developed for relativistic fluids and solids, with various types of constraints and from various point of views, see below. This makes hard the study of the relation between them and their derivation from a unified point of view for both general relativistic fluids and solids.

The most efficient and justified approach to obtain in full generality the variational formulation for relativistic continua in such representations would be to systematically \textit{deduce} it from a continuum version of the Hamilton principle for the relativistic particle by making use of the symmetries of the system. These symmetries are associated to the spacetime covariance, which is a requested physical condition, and also to particular properties of the system like isotropy or homogeneity. 
Such a systematic approach to rigorously derive relevant variational formulations from the classical Hamilton principle by symmetries is well-known in Newtonian continuum mechanics and fits with the general process of \textit{Lagrangian reduction by symmetry} that we will review later. 
It is therefore highly desirable to develop an analogue process for the general relativistic case, that also yields a reliable approach for the derivation of useful extensions of those theories that will be explored in subsequent parts of this work.

\paragraph{Goal of the paper.}
The first purpose of this paper is to establish a Lagrangian variational framework for general relativistic continuum theories that systematically parallels the Lagrangian reduction approach mentioned above for the Newtonian case. In particular, our framework allows to rigorously deduce several useful variational formulations by starting from the most natural and physically justified principle, namely, a \textit{continuum version of the Hamilton principle for the relativistic particle}, as illustrated below:
\begin{align*} 
&\delta \int_{ \lambda _0}^{ \lambda _1} \!\! - \sqrt{- \mathsf{g} \big( \dot x , \dot x\big) }  \; m_0c  \; d \lambda=0\; \;\leadsto \;\; \delta \int_{ \lambda _0}^{ \lambda _1}\! \!\!\int_ \mathcal{B}\! -\frac{1}{c}  \sqrt{ -\mathsf{g}  \big(\dot \Phi , \dot \Phi  \big) }\; \big(c^2 +  \mathcal{W}\big) \; d \lambda \wedge R_0 =0.
\end{align*} 
The left hand side represents the Hamilton principle for a relativistic particle of mass $m_0$ with world-line $x( \lambda )$ in the spacetime $(\mathcal{M} ,\mathsf{g})$, where $\dot x= dx/d \lambda $. On the right hand side we write its continuum version which plays a fundamental role in this paper: the map $ \Phi ( \lambda , \mathsf{X}) \in \mathcal{M} $, with $\mathsf{X} \in \mathcal{B} $ and $\dot \Phi = d \Phi / d \lambda $, denotes the world-tube of a relativistic continuum media parameterized with the material manifold $ \mathcal{B} $, the volume form $R_0$ on $ \mathcal{B} $ is the mass form, and $\mathcal{W}$ is an internal energy expression which depends on the material and spacetime tensor fields occurring in the theory.

From this principle, and exactly as in the nonrelativistic case, symmetries of the Lagrangian density can be used to systematically deduce two main types of variational formulations yielding the relativistic version of the convective (body) and spatial (Eulerian) variational formulations. The two classes of symmetries that we consider are spacetime covariance, which is an axiom of the continuum theory, and material covariance, which is related to specific properties of the continuum. 

In order to develop a systematic relativistic analogue to the ``reduction by symmetry" approach of Newtonian continua, we make explicit the dependence of the theory on tensor fields given on the parametrization manifold and on the spacetime. This implies, for instance, the inclusion of reference tensor fields associated to the mass density and to the world-velocity. 
\textcolor{black}{In this work, we do not include dissipative mechanisms such as thermal or viscous effects.}

\medskip 

The second purpose of this paper concerns the junction or matching conditions between the solution at the interior of the relativistic continua and the solution describing the gravity field produced outside from it. The matching conditions between two spacetimes is a fundamental issue in general relativity which has been considered in classic works by \cite{Da1927}, \cite{Li1955}, \cite{OBSy1952}, \cite{Is1966}, and \cite{BoVi1981}. The Israel-Darmois junction conditions state that the interior and exterior Lorentzian metrics induce the same metric on the boundary hypersurface (assumed in this work to consist of spacelike and timelike pieces) and that, in absence of singular matter distribution, define on it the same extrinsic curvature (second fundamental form). These conditions have important applications in several gravitational topics such as wormholes (\cite{Vi1989a,Vi1989b}, \cite{LoSiVi2020}), interior structures of black-holes (\cite{BaFr1996}), gravitational collapse (\cite{FaJeLlSe1992}, \cite{FaSeTo1996}, \cite{Mu2021}), or bubble dynamics in the early Universe (\cite{BeKuTk1987}, \cite{BlGuGu1987}, \cite{DeVi2017}, \cite{De2020}), to name only a few works.

We shall show how these junction conditions, as well as their implication on the boundary conditions on the stress-energy-momentum tensor, can be directly obtained from a natural extension of the variational formulation by including the Gibbons-Hawking-York (GHY) term associated to the boundary of the relativistic continuum. This boundary term, given by the integral of the trace of the extrinsic curvature, has been introduced in \cite{Yo1972}, \cite{GiHa1977} and makes the gravitational variational principle well-defined on spacetimes with boundaries. While the variation of the GHY term with respect to the metric is well-known, its variation with respect to the hypersurface involves some extensions of previous results on the first variation of integrated mean curvature that we also derive in the paper.





\paragraph{Variational principles for relativistic continua.} Variational principles for general relativistic continua were first given by \cite{Ta1954} for the case of the perfect fluid.  Taub's variational principle is based on varying the particles' world-lines and the metric and gives both the field equations for the gravitational field created by the fluid and the equations of motion of the fluid in this gravitational field. This variational approach was then further developed and applied to several models occurring in special relativistic elasticity (\cite{Gr1971}, \cite{MaEr1972a,Ma1972a,MaEr1972b}), general relativistic elasticity (\cite{Ma1971}, \cite{Ca1973}), and relativistic fluids (\cite{CaKh1992}, \cite{CaLa1995,CaLa1998}).
This type of variational formulations forms today an essential tool for modelling in general relativistic continuum theories, as shown for instance in the recent works \cite{GaAnHa2020}, \cite{BaWa2020}, \cite{FeCa2020}, \cite{AdCo2021}, \cite{Ad2021}.

While these approaches are based on varying the world-lines of the continuum, variational principles have also been proposed based on varying the projection map which assigns to each point of the continuum in spacetime its material label (\cite{KiMa1992}, \cite{BeSc2003}).
Other types of variational principles involve arbitrary Eulerian variations, rather than arbitrary world-lines variations, but require the introduction of Clebsch-type auxiliary fields or Lagrange multiplier terms in the action functional (\cite{Sc1870},  \cite{Ma1972b}, \cite{KhLe1982}, \cite{LeKh1982}). Approaches with auxiliary fields are however harder to use for modelling purpose and lack physical justification.

The general Lagrangian variational framework that we develop in this paper encompasses the approach of \cite{Ta1954} and subsequent works mentioned above on relativistic fluid and elasticity by recasting them in a systematic and unified Lagrangian covariant reduction approach. 
It contains however essential differences from these approaches. In order to make a clear distinction between the material, spacetime, and body descriptions, and to rigorously relate the processes of covariant reduction to the symmetries of Lagrangian density, we introduce the appropriate material tensor fields on the reference manifolds, so that the corresponding Eulerian quantities are given by the push-forward with the world-tube $ \Phi $. For instance, we introduce the material mass density $R$, the material entropy density $S$, as well as a material vector $W$, so that the corresponding Eulerian mass density, entropy density, and world-velocity are written as $ \varrho = \Phi _*R$, $ \varsigma = \Phi _*S$, $ w= \Phi _* W$. Making explicit the dependence of the Lagrangian density on such tensor fields, rather than quantities such as the proper mass density or the matter current that are constructed from them and from the Lorentzian metric, is crucial for the development of the Lagrangian covariant reduction process. It allows for a general definition of material covariance in relation with the isotropy of the continuum, which permits the construction of a fully Eulerian variational formulation by covariant reduction. This variational formulation is shown to be also available for anisotropic continua at the expense of the inclusion of additional material tensor fields. Finally, the present approach also makes the variational formulation free from any Lagrange multiplier terms such as the constant magnitude constraint for the world-velocity in \cite{Ta1954}, and allows to deduce all the variational formulations from the continuum version of the Hamilton principle for the relativistic particle. In addition, since we are not only considering the variations of the metric, but also of the continuum configuration, our variational approach is directly applicable to continuum mechanics in a fixed gravitational field and to special relativity.
The approach developed here can be seen as the covariant relativistic version of the process of Lagrangian reduction by symmetry in free boundary Newtonian continuum mechanics developed in \cite{GBMaRa2012} that we will review in \S\ref{Review_Newtonian}. For Newtonian fluids on fixed domains, such techniques reduce to the Euler-Poincar\'e reduction on diffeomorphism groups \cite{HoMaRa1998} which itself has its roots in the Lie group geometric formulation of the ideal fluid due to \cite{Ar1966}.
We refer to Tables \ref{comparison} and \ref{comparison_equations} for a schematic correspondence between the relativistic and nonrelativistic Lagrangian reduction processes.



\section{Reduced variational formulations of Newtonian continuum theories}\label{Review_Newtonian}

Let us consider a non relativistic continuum body (fluid or solid) whose configuration at each time can be described by an embedding $ \varphi : \mathcal{B} \rightarrow \mathcal{S} $. Here $ \mathcal{S} = \mathbb{R} ^n $ is the ambient space in which the body evolves and $ \mathcal{B} $ is the reference configuration of the body, given by a compact $n$-dimensional submanifold of $\mathbb{R} ^n $ with smooth boundary. A motion of the body is described by a curve $ t \in [t_0,t_1]\mapsto  \varphi _t\in \operatorname{Emb}( \mathcal{B} , \mathcal{S} )$ in the manifold of all embeddings of $ \mathcal{B} $ into $ \mathcal{S} $. Given a motion $\varphi _t $, the location at time $t$ of the point of the continuum with label $\mathsf{X} \in \mathcal{B} $ is given by $x= \varphi _t(\mathsf{X})= \varphi (t, \mathsf{X}) \in \mathcal{S}$.

We review below the symmetry reduced variational formulations for free boundary continuum theories by following \cite{GBMaRa2012}. These variational formulations are systematically deduced from the Hamilton principle by exploiting the symmetries of the systems. These symmetries are best explained by explicitly emphasizing the dependence of the theory on given fixed tensor fields on the reference configuration $ \mathcal{B} $ and on the ambient space $ \mathcal{S}$, see \cite{MaHu1983,SiMaKr1988,GBMaRa2012}. We will restrict our approach to continuum theories depending on two volume forms $R, S \in \Omega ^n ( \mathcal{B} )$ representing the mass and entropy densities in the reference configuration, and on two Riemannian metrics $G \in S^2_+( \mathcal{B} )$ and $g \in S^2_+( \mathcal{S} )$ on $ \mathcal{B} $ and $ \mathcal{S} $, respectively. By adopting this setting, the relevant dynamic fields for the description of the continuum in the symmetry reduced formulations, besides the velocity, are obtained by applying the pull-back or push-forward operations associated with the configuration map $ \varphi $. For instance the Eulerian mass density $ \varrho $, the Eulerian entropy density $ \varsigma $, and the Cauchy deformation tensor $\mathsf{c}$ are obtained as
\begin{equation}\label{PF_Newtonian} 
\varrho = \varphi _* R, \qquad \varsigma = \varphi _* S, \qquad \mathsf{c}= \varphi _* G,
\end{equation} 
while the right Cauchy-Green tensor is
\begin{equation}\label{PB_Newtonian} 
C= \varphi ^* g.
\end{equation} 
We will also describe the mass and entropy density by using the functions $ \rho $ and $s$ such that 
\begin{equation}\label{rho_s}
\rho  \mu (g)= \varrho \quad\text{and}\quad s \mu (g)= \varsigma,
\end{equation} 
with $ \mu (g)$ the volume form associated to $g$ on $ \mathcal{S} $.
The specific entropy defined by
\begin{equation}\label{specific_s}
\eta = s/ \rho
\end{equation} 
will also be used.
For simplicity, we shall consider only zero traction boundary conditions and ignore surface tension effects. The variational treatment can be extended to this case, see \cite{GBMaRa2012}.

\subsection{Material description and the Euler-Lagrange equations}

Given the fields $R,S,G,g$ as above, we consider a class of continuum bodies described by Lagrangian functions $L:T \operatorname{Emb}( \mathcal{B} , \mathcal{S} ) \times  \Omega ^n ( \mathcal{B} ) \times \Omega ^n ( \mathcal{B} ) \times S^2_+( \mathcal{B} ) \times S^2_+( \mathcal{S} ) \rightarrow \mathbb{R} $ of the form
\begin{equation}\label{General_L} 
L( \varphi , \dot \varphi , R, S, G, g \circ \varphi ) = \int_ \mathcal{B} \mathscr{L} ( \dot \varphi , T \varphi  , R, S, G, g \circ \varphi) ,
\end{equation} 
with Lagrangian density
\begin{equation}\label{General_LD} 
\mathscr{L} ( \dot \varphi , T \varphi  , R, S, G, g \circ \varphi) = \frac{1}{2} g( \dot \varphi , \dot \varphi ) R - \mathscr{W}\! \left( T \varphi , R, S, G, g \circ \varphi \right)  R.
\end{equation} 
In \eqref{General_L}, $T \varphi :T \mathcal{B} \rightarrow T \mathcal{S} $ denotes the tangent map (derivative) to the configuration map $ \varphi \in \operatorname{Emb}( \mathcal{B} , \mathcal{S} )$, also known as the deformation gradient, and $V=\dot \varphi \in T_ \varphi \operatorname{Emb}( \mathcal{B} , \mathcal{S} )$ is the material velocity of the continua with $ T_ \varphi \operatorname{Emb}( \mathcal{B} , \mathcal{S} ):= \{ \textcolor{black}{  V} \in C^\infty(\mathcal{B}, T\mathcal{S}) \mid \textcolor{black}{  V(\mathsf{X})} \in T_{ \varphi (\mathsf{X})} \mathcal{S}  \}$ the tangent space to $\operatorname{Emb}( \mathcal{B} , \mathcal{S} )$ at $ \varphi $.
The first term in \eqref{General_LD} represents the kinetic energy density and the second term is (minus) the total stored energy density, with $\mathscr{W}$ an arbitrary function of the point values of $T \varphi $, $R$, $S$, $G$, $g \circ \varphi $. Expressions \eqref{General_L} and \eqref{General_LD} are referred to as the Lagrangian functions and Lagrangian densities in the material representation. \textcolor{black}{The formulation with infinite dimensional manifolds such as the manifold of embeddings will not be essential for the approach that we take in the paper. It is mentioned here only to show how this Lagrangian point of view fits within the standard Lagrangian reduction by symmetry, see Remark \ref{rmk_LRBS}.}

In this representation, the equations of motion are obtained by the Hamilton principle
\begin{equation}\label{Ham_principle_Newton} 
\left. \frac{d}{d\varepsilon}\right|_{\varepsilon=0} \int_{t_0}^{t_1}L( \varphi_ \varepsilon  , \dot \varphi _ \varepsilon , R, S, G, g \circ \varphi _ \varepsilon ) dt=0
\end{equation} 
for arbitrary variations $ \varphi _ \varepsilon $ of the curve $ \varphi (t) \in \operatorname{Emb}( \mathcal{B} , \mathcal{S} )$ with fixed endpoints \textcolor{black}{at $t=t_0,t_1$}, exactly as in classical (finite dimensional) Lagrangian mechanics. Note that $R, S, G, g$ are held fixed in this variational principle, i.e, they appear as parameters in this formulation.

\paragraph{Material frame indifference.} From the axiom of material frame indifference, the function $\mathscr{W}$ in \eqref{General_LD} must satisfy 
\begin{equation}\label{mat_frame_indiff} 
\mathscr{W}\!(T ( \psi \circ \varphi ) , R, S, G, \psi _* g \circ \psi \circ \varphi ) =\mathscr{W}\! \left( T \varphi , R, S, G, g \circ \varphi \right),
\end{equation}
for all spatial diffeomorphisms $ \psi  \in \operatorname{Diff}( \mathcal{S} )$. This implies that $\mathscr{W}$ depends on $T \varphi $ and $g$ only through the right Cauchy-Green tensor $ C= \varphi ^* g$, i.e., $\mathscr{W}$ depends only on the point values of $R,S,G,C$. As a consequence, there is a function $ \mathcal{W}$, the convective stored energy function, such that
\[
\mathscr{W}\! \left( T \varphi , R, S, G, g \circ \varphi \right)= \mathcal{W} \left( \frac{R}{ \mu (C)} , \frac{S}{R}, G,C \right).
\]
Note that we decided to choose the first two arguments of $ \mathcal{W}$ as the combinations $\frac{R}{ \mu (C)} $ and $\frac{S}{R}$ to ease the comparison with the relativistic case later, and because it is in this form that the concrete expression for $\mathcal{W}$ are usually given. Of course other forms of the function $ \mathcal{W}$, such as $\mathcal{W}(R,S,G,C)$, are possible as long as there is a bijective correspondence between the representations.

\paragraph{Material covariance.} By definition, the function $\mathscr{W}$ satisfies the property of material covariance if the equality
\begin{equation}\label{mat_cov} 
\mathscr{W}(T (\varphi \circ \psi ) , \psi ^* R, \psi ^* S, \psi ^* G, g \circ \varphi \circ \psi ) =\mathscr{W} \left( T \varphi , R, S, G, g \circ \varphi \right) \circ \psi 
\end{equation}
holds for all body diffeomorphisms $ \psi \in \operatorname{Diff}( \mathcal{B} )$. Assuming that material frame indifference in \eqref{mat_frame_indiff} is satisfied, condition \eqref{mat_cov} can be equivalently written in terms of $ \mathcal{W}$ as
\[
\mathcal{W} \left( \frac{R}{ \mu (C)}   \circ \psi , \frac{S}{R}  \circ \psi , \psi ^* G, \psi ^* C \right)  = \mathcal{W} \left(  \frac{R}{ \mu (C)}   , \frac{S}{R} , G, C \right)  \circ \psi,
\]
for all $ \psi \in \operatorname{Diff}( \mathcal{B} )$. This condition on $ \mathcal{W}$ means that the continuum is homogeneous isotropic, see \cite[\S3.5]{MaHu1983}. It allows the definition of a function $ \varpi$, the Eulerian stored energy function, such that
\[
\varpi( \rho  , \eta , \mathsf{c},g) = \mathcal{W}(\rho  \circ \varphi , \eta \circ \varphi ,  \varphi ^* \mathsf{c}, \varphi ^* g) \circ \varphi ^{-1} ,
\]
with $\mathsf{c}= \varphi _* G$ the Cauchy deformation tensor.

\subsection{
Reduced variational principle in the convective frame}

By assuming the condition \eqref{mat_frame_indiff} on $\mathscr{W}$, the Lagrangian density $\mathscr{L}$ can be written in terms of $R$,  $S$, $G$, and $C= \varphi ^* g$ as
\begin{align*} 
\mathscr{L} ( \dot \varphi , T \varphi  , R, S, G, g \circ \varphi)&= \frac{1}{2} C( \mathcal{V} , \mathcal{V} ) R - \mathcal{W} \left( \frac{R}{ \mu (C)} , \frac{S}{R}, G,C \right)R\\
&=: \mathcal{L} ( \mathcal{V} , R,S,G,C),
\end{align*} 
where we defined the convective velocity $ \mathcal{V} = T \varphi  ^{-1} \circ\dot  \varphi \in \mathfrak{X} ( \mathcal{B} )$. Therefore, $\mathscr{L}$  has an associated convective Lagrangian density $ \mathcal{L} $ as defined above.

In this case, the Hamilton principle \eqref{Ham_principle_Newton} can be shown to induce the following convective variational formulation in terms of $ \mathcal{L} $:
\[
\delta\int_{t_0}^{t_1}\! \int_ \mathcal{B}  \mathcal{L} ( \mathcal{V} , R,S,G,C) dt=0, \quad\text{for variations}\quad  \delta \mathcal{V}= \partial _t \zeta -[ \mathcal{V}, \zeta ], \quad  \delta C= \pounds _ \zeta C,
\] 
where \textcolor{black}{$ \zeta$ is an} arbitrary time dependent vector field on $ \mathcal{B} $, vanishing at $t=t_0,t_1$, $[ \mathcal{V} , \zeta ]$ is the Lie bracket of vector fields, \textcolor{black}{and $ \pounds _ \zeta C$ is the Lie derivative of the right Cauchy-Green tensor along $ \zeta $}, see Fig. \ref{figure_1} on the left. \textcolor{black}{These variations are obtained from the relations $\mathcal{V} _ \varepsilon  = T\varphi _ \varepsilon  ^{-1} \circ \dot  \varphi_ \varepsilon  $, $C_ \varepsilon  = \varphi _ \varepsilon  ^* G$ and by defining $\zeta =T \varphi  ^{-1} \circ \delta \varphi $}. A direct application of this principle yields the equations in convective representation as
\begin{equation}\label{reduced_EL_conv} 
\frac{\partial }{\partial t} \frac{\partial \mathcal{L} }{\partial \mathcal{V}} - \pounds _ \mathcal{V} \frac{\partial \mathcal{L} }{\partial \mathcal{V} } =  \frac{\partial \mathcal{L} }{\partial C}: \nabla C - 2 \operatorname{div} ^ \nabla  \left( C \cdot \frac{\partial \mathcal{L} }{\partial C}\right), 
\end{equation} 
together with the boundary condition
\begin{equation}\label{BC_EL_conv} 
\left( \mathcal{V} \otimes   \frac{\partial \mathcal{L} }{\partial \mathcal{V} } - 2 C \cdot \frac{\partial \mathcal{L} }{\partial C  }   \right)( \cdot  , N ^\flat )=0\quad \text{on} \quad \partial \mathcal{B},
\end{equation} 
see \cite{GBMaRa2012} for a detailed derivation. In \eqref{reduced_EL_conv}, $\pounds _ \mathcal{V} \frac{\partial \mathcal{L} }{\partial \mathcal{V} }$ denotes the Lie derivative of the one-form density $\frac{\partial \mathcal{L} }{\partial \mathcal{V} }$, \textcolor{black}{see Remark \ref{density_equal_nforms}}, in the direction $ \mathcal{V} $ and $\operatorname{div}^ \nabla $ is the divergence operator associated to a given torsion free covariant derivative $ \nabla $. \textcolor{black}{The dot ``$\,\cdot\,$", resp., the colon ``$\,:\,$", indicate the contraction of one covariant index and one contravariant index, resp., the contraction of all covariant and contravariant indices}. In \eqref{BC_EL_conv} the outward pointing normal vector field $N$ and the flat operator are computed relative to the metric $C$.

\color{black} 
\begin{remark}\label{density_equal_nforms}\rm In this paper we use the word density as a synonym for $(n+1)$-form as this terminology is standard. No confusion arises since all the manifolds involved are assumed to be orientable. For instance, a $(p,q)$-tensor density on $ \mathcal{M} $ is defined as a section of the vector bundle $T^p_q \mathcal{M} \otimes \wedge ^{n+1} \mathcal{M}  \rightarrow \mathcal{M}$.
\end{remark} 
\color{black} 

\subsection{Reduced variational principle in the Eulerian frame}

Assuming material covariance \eqref{mat_cov}, we obtain that the material Lagrangian density $\mathscr{L}$ is invariant under the right action of the group $ \operatorname{Diff}(\mathcal{B} )$ and thus induces the spatial Lagrangian density $\ell$ as
\[
\mathscr{L}( \dot \varphi , T \varphi , R, S, G, g \circ \varphi )= \varphi ^* [\ell(u, \varrho , \varsigma , \mathsf{c}, g)]
\]
with
\[
\ell(u, \varrho   , \varsigma , \mathsf{c},g) =   \frac{1}{2} g(u,u) \varrho - \varpi( \rho  , \eta , \mathsf{c}, g) \varrho ,
\]
where $u= \dot \varphi \circ \varphi ^{-1} $ is the Eulerian velocity and we recall the relations \eqref{PF_Newtonian}, \eqref{rho_s}, and \eqref{specific_s}.

The Hamilton principle \eqref{Ham_principle_Newton} induces the following Eulerian variational formulation
\begin{align*} 
&\!\!\left.\frac{d}{d\varepsilon}\right|_{\varepsilon=0}  \int_{t_0}^{t_1} \!\int_ {\textcolor{black}{  \mathcal{N} _ \varepsilon  }}\ell(u_ \varepsilon , \varrho _ \varepsilon   , \varsigma _ \varepsilon , \mathsf{c}_ \varepsilon ,g) dt=0 \quad \text{for variations}\\
&\delta u= \partial _t \xi +[u , \xi ], \quad   \delta \varrho = - \pounds _ \xi  \varrho, \quad  \delta \varsigma  = - \pounds _\xi  \varsigma  , \quad  \delta  \mathsf{c} = - \pounds _\xi   \mathsf{c},\phantom{\int_A^B}
\end{align*} 
where \textcolor{black}{$ \xi $ is an arbitrary time dependent vector field on $ \mathcal{N}$} vanishing at $t=t_0,t_1$, see Fig. \ref{figure_1} on the right. \textcolor{black}{The variation $ \delta \mathcal{N} $ is also slaved to $ \xi $ since it is given by the equivalence class of $ \xi |_{ \partial \mathcal{N} }$ modulo vector fields tangent to $ \partial \mathcal{N} $. These variations are obtained by the relations $\mathcal{N} _ \varepsilon  = \varphi _ \varepsilon   ( \mathcal{B} )$, $u_ \varepsilon  = \dot \varphi _ \varepsilon  \circ \varphi_ \varepsilon   ^{-1} $, $\varrho_ \varepsilon   = (\varphi _ \varepsilon  )_* R$, $\varsigma_ \varepsilon  = (\varphi_ \varepsilon  ) _* S$, $\mathsf{c}_ \varepsilon  = (\varphi_ \varepsilon  ) _* G$ and by defining $\xi = \delta \varphi \circ \varphi ^{-1} $.} A direct application of this principle yields the equations in Eulerian representation as
\begin{equation}\label{Eulerian_continuum} 
\frac{\partial }{\partial t} \frac{\partial \ell }{\partial u} +  \pounds _u \frac{\partial\ell}{\partial u} =  \varrho d \frac{\partial \ell}{\partial \varrho }+ \varsigma  d \frac{\partial \ell}{\partial \varsigma} - \frac{\partial \ell}{\partial c}: \nabla \mathsf{c} +2 \operatorname{div}^ \nabla \left(\frac{\partial \ell}{\partial \mathsf{c}} \cdot \mathsf{c} \right) 
\end{equation} 
together with the boundary conditions
\[
\left(  \left(  \ell - \varrho \frac{\partial \ell}{\partial \varrho }- \varsigma  \frac{\partial \ell}{\partial \varsigma  } \right)  \delta  - 2 \frac{\partial \ell}{\partial \mathsf{c}} \cdot \mathsf{c} \right) ( \cdot ,n ^\flat )=0 \quad \text{on} \quad   \varphi (\partial \mathcal{B} ),
\]
with $n$ the outward pointing unit normal vector field to $ \varphi ( \partial \mathcal{B} )$ with respect to $g$ and where $\flat$ is the flat operator associated to $g$. In \eqref{Eulerian_continuum} $ \nabla $ is a torsion free covariant derivative and $ \operatorname{div}^ \nabla $ the associated divergence operator. For instance, one can choose the Levi-Civita covariant derivative associated to $g$ or the one associated to $\mathsf{c}$. In the latter case, the second to last term in \eqref{Eulerian_continuum} vanishes. The continuity equations $ \partial _t \varrho + \pounds  _u \varrho =0$, $ \partial _t \varsigma  + \pounds  _u \varsigma  =0$, and $ \partial _t \mathsf{c} + \pounds  _u  \mathsf{c}  =0$ follow from \eqref{PF_Newtonian}. One also notices that \eqref{Eulerian_continuum} can be equivalently written as
\begin{equation}\label{rewriting_Euler}
\partial _t \frac{\partial \ell}{\partial u} + \operatorname{div}^ \nabla \left(  \left(  \ell - \varrho \frac{\partial \ell}{\partial \varrho }- \varsigma  \frac{\partial \ell}{\partial \varsigma  } \right)  \delta  + u \otimes \frac{\partial \ell}{\partial u} - 2 \frac{\partial \ell}{\partial \mathsf{c}} \cdot \mathsf{c} \right)=\frac{\partial \ell}{\partial g}: \nabla g,
\end{equation}
where the right hand side vanishes if $ \nabla $ is chosen as the Levi-Civita covariant derivative associated to $g$.
The (1,1) tensor field density, \textcolor{black}{see Remark \ref{density_equal_nforms}}, appearing on the left hand side takes the coordinate expression
\[
\left( \ell - \varrho \frac{\partial \ell}{\partial \varrho }- \varsigma  \frac{\partial \ell}{\partial \varsigma  } \right)  \delta^ \mu _ \nu   + u^ \mu   \frac{\partial \ell}{\partial u^ \nu } - 2 \frac{\partial \ell}{\partial \mathsf{c}_{\mu  \alpha }}  \mathsf{c}_{\nu  \alpha }.
\]
We refer to \cite{GBMaRa2012} for a detailed derivation of these equations. 
For motion in a fixed domain and in absence of $\mathsf{c}$, this formulation reduces to the Euler-Poincar\'e reduction on diffeomorphism groups, see \cite{HoMaRa1998}.

\begin{figure}[h!]
{\noindent
\footnotesize
\begin{center}
\hspace{.3cm}
\begin{xy}
\xymatrix{
& *+[F-:<3pt>]{
\begin{array}{c}
\vspace{0.1cm}\text{Hamilton principle}\\
\vspace{0.1cm}\displaystyle \delta \int_{t_0}^{t_1}\!\!\int_ \mathcal{B}  \mathscr{L}\,dt=0,\; \delta \varphi \;\text{free}\\
\vspace{0.1cm}\text{Material description}
\end{array}
}
\ar[ddl]|{\begin{array}{c}\textit{Spatial covariance} \\
\textit{$ \operatorname{Diff}( \mathcal{S} )$-symmetry}\\
\end{array}}
\ar[ddr]|{\begin{array}{c}\textit{Material covariance} \\
\textit{$ \operatorname{Diff}( \mathcal{B} )$-symmetry}\\
\end{array}} & & \\
& & & \\
*+[F-:<3pt>]{
\begin{array}{c}
\vspace{0.1cm}\text{Convective reduced}\\
\vspace{0.1cm}\text{variational formulation}\\
\vspace{0.1cm}\displaystyle\delta \int_{t_0}^{t_1}\!\! \int_ \mathcal{B}  \mathcal{L}\,dt=0,\\
\vspace{0.1cm}\delta \mathcal{V}= \partial _t \zeta -[ \mathcal{V}, \zeta ],\\
\vspace{0.1cm}\delta C= \pounds _ \zeta C
\\
\end{array}
}
& &  *+[F-:<3pt>]{
\begin{array}{c}
\vspace{0.1cm}\text{Eulerian reduced}\\
\vspace{0.1cm}\text{variational formulation}\\
\vspace{0.1cm}\displaystyle\delta \int_{t_0}^{t_1} \!\!\int_ { \varphi (\mathcal{B})}\!\!\ell \,dt=0,\\
\vspace{0.1cm}\delta u= \partial _t \xi +[u , \xi ],\\
\vspace{0.1cm}\delta \varrho = - \pounds _ \xi  \varrho,\;\delta \varsigma  = - \pounds _\xi  \varsigma,\\
\vspace{0.1cm}\delta  \mathsf{c} = - \pounds _\xi   \mathsf{c}
\\
\end{array}
}\\
}
\end{xy}
\end{center}
}
\caption{Illustration of the variational principles in the three representations of Newtonian continuum mechanics.}
\label{figure_1}
\end{figure}
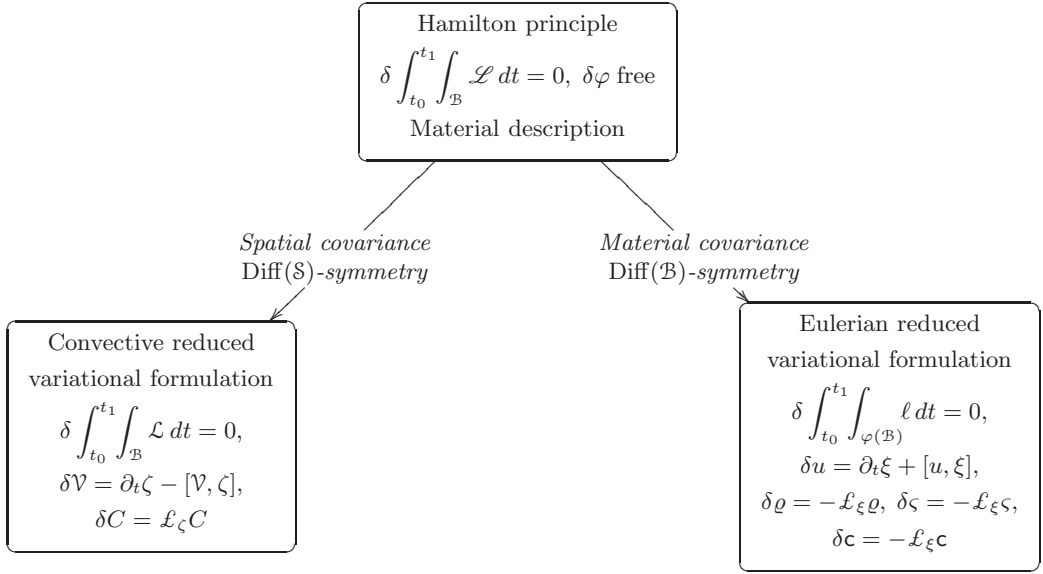
\medskip 

\color{black} 
\begin{remark}[Convective picture]\rm
Besides the well-known material and Eulerian descriptions of continuous media, we have seen that a third description, the convective picture, naturally arises in our setting which is associated to spatial covariance in the same way the Eulerian description is associated to material covariance. Its physical meaning and historical origin are both explained by the use of the so called convected coordinates. Given a coordinate system $\mathsf{X}^A$ on $ \mathcal{B} $, let $ \mathcal{X} ^A_t= \mathsf{X}^A \circ \varphi _t ^{-1}$ be the corresponding convected system of coordinates which is frozen into the medium and deforms with it.
Then, the components of the material and Eulerian velocities $V= \dot  \varphi $ and $u= \dot  \varphi \circ \varphi ^{-1} $ with respect to $ \mathcal{X} ^A_t$ equal those of the convective velocity $ \mathcal{V} = T \varphi ^{-1} \circ \dot  \varphi $ with respect to $\mathsf{X}^A$. Therefore, the convective picture is an intrinsic formulation of the description of the continuum in convected coordinates.
For rigid body motion, the convective description reduces to the body frame description. Convective (or body) variables are also an indispensable tool in the study of rods and shells, for example, regarded as one- and two-dimensional continua respectively.
\end{remark}

\color{black} 

\begin{remark}[Lagrangian reduction by symmetry]\label{rmk_LRBS}\rm In this review we have put the emphasis on the Lagrangian densities, rather than on the Lagrangian functions which are obtained by integrating the densities on the body as in \cite{GBMaRa2012}. The reason being that we shall work with Lagrangian densities in the relativistic case. Using the integrated Lagrangian makes, however, more transparent the link with the general process of \textit{Lagrangian reduction by symmetry}. This general process can be described as follows. Let $L:TQ \rightarrow \mathbb{R} $ be a Lagrangian function defined on the tangent bundle $TQ$ of the configuration manifold $Q$ of some mechanical system. Consider a free and proper action of a Lie group $G$ on $Q$ and assume that $L$ is invariant under the action of $G$ naturally induced on $TQ$ (the tangent lifted action). Let $\ell:(TQ)/G \rightarrow \mathbb{R} $ be the associated reduced Lagrangian defined on the quotient space. Then, the Lagrangian reduction approach yields a systematic way to derive the reduced Euler-Lagrange equations for $\ell$ on $(TQ)/G$, by considering the variational principle on $(TQ)/G$ induced from the Hamilton principle on $TQ$, see \cite{MaSc1993}, \cite{CeMaRa2001}.
In the case of free boundary continuum mechanics reviewed above, we have $Q= \operatorname{Emb}( \mathcal{B} , \mathcal{S} )$ and the symmetry group $G$ is a subgroup of $ \operatorname{Diff}( \mathcal{B} )$ or $ \operatorname{Diff}( \mathcal{S} )$. Continuum mechanics has also been studied from the point of view of Hamiltonian reduction by symmetry, with the aim of systematically deriving the Poisson brackets governing the dynamics in the Eulerian and convective representations. We refer to \cite{MaWe1983}, \cite{MaRaWe1984}, \cite{LeMaMoRa1986}, \cite{MaRa1989} for the Eulerian description and to \cite{HoMaRa1986}, \cite{SiMaKr1988} for the convective description.
\end{remark}


\section{Covariance properties and reduced Lagrangian densities}\label{sec_3}

In this section we describe a geometric setting for relativistic continua that allows to emphasize two notions of covariance: the usual notion of covariance with respect to spacetime diffeomorphisms as well as the notion of covariance with respect to diffeomorphisms of the reference configuration. 
These are referred to as \textit{spacetime covariance} and \textit{material covariance}. This setting is a natural extension of the geometric approach to Newtonian continuum mechanics reviewed above in \S\ref{Review_Newtonian}.

Fundamental for our description are the notions of \textit{material (or reference) tensor fields} and \textit{spacetime tensor fields} which are given tensors fields on the reference configuration and on the spacetime. Their identification is crucial for the two definitions of covariance.


Using this setting and under the covariance assumptions, it is possible to systematically define three related Lagrangian densities for a given relativistic continuum theory. Each of these three Lagrangian densities depends on \textit{dynamical variables} that are subject to variations in the critical action principle, as well as on \textit{parametric variables} that are held fixed. We use the following terminology for these densities:
\begin{itemize}
\item[\bf (1)] \textit{The material Lagrangian density}, denoted $\mathscr{L}$, which depends on the configuration map of the continua (the world-tube) and its first derivatives;
\item[\bf (2)] \textit{The spacetime (or Eulerian) Lagrangian density}, denoted $\ell$, defined from $ \mathscr{L} $ under the assumption of material covariance;
\item[\bf (3)] \textit{The convective Lagrangian density}, denoted $ \mathcal{L} $, defined from $\mathscr{L}$ under the assumption of spacetime covariance.
\end{itemize}

We shall consider the material Lagrangian density as the primary object, from which the expressions of $ \mathcal{L} $ and $\ell$ are deduced under the corresponding covariance assumption. This is justified by the fact that the critical action principle for the material Lagrangian density is given by the Hamilton principle applied to the configuration map of the continuum (the world-tube). This principle \textcolor{black}{does not} involve any constraints in the variations of the variables and is the field theoretic analogue to the Hamilton principle $ \delta \int L(q, \dot q) d t=0$ of classical mechanics. The critical action principles for the densities $\mathcal{L}$ and $\ell$ take however a more involved form, with constrained variations, that is systematically deduced from the Hamilton principe for $ \mathscr{L} $.


\color{black} 
\begin{remark}[Reduction terminology]\rm As we shall see, in the relativistic context the passing from the material to the Eulerian and convective pictures does not fit in the classical Lagrangian reduction scheme $TQ \rightarrow (TQ)/G$ of Newtonian mechanics, see Remark \ref{rmk_LRBS}, since there is not notion of absolute time at which the reduction is performed. We still use the terminology ``Lagrangian reduction" in the relativistic context since it preserves its main characteristics, namely, the use of Lie group symmetries in the material Lagrangian picture to deduce Eulerian/convected versions of Hamilton's principle, via the push-forward/pull-back tensorial operations, leading to the appearance of constrained variations. See Figures \ref{figure_1} and \ref{figure_2} for the striking analogies between the Newtonian and relativistic reductions.
\end{remark} 
\color{black}
 
\subsection{Geometric setting and Lagrangian density in material description}

Here we first review the definition of the world-tube and the associated generalized velocity and world-velocity for relativistic continua. Then, we introduce the notion of reference tensor fields and spacetime tensor fields, and we give the general geometric setting for the definition of the Lagrangian density of a relativistic continuum in the material description. In this description, the equation of evolution are obtained by the Hamilton principle applied to the world-tube.

\paragraph{World-tube and world-velocity.} Consider a $(n+1)$-dimensional spacetime $ \mathcal{M} $ endowed with a Lorentzian metric $\mathsf{g}$\footnote{Note the different notation used for the \textit{Riemannian} metric $g$ on the ambient space $ \mathcal{S} $ in \S\ref{Review_Newtonian} and the \textit{Lorentzian} metric $\mathsf{g}$ on spacetime $ \mathcal{M} $ used here.}. The relativistic motion of a continuous media in $\mathcal{M}$ is described by an embedding
\[
\Phi: \mathcal{D} =[a,b] \times \mathcal{B}  \rightarrow \mathcal{M} ,
\]
called the \textit{world-tube}, where $ \mathcal{B} $ is a $n$-dimensional compact orientable manifold with smooth boundary describing the material continuum, and where $\mathsf{g}( \partial _ \lambda \Phi , \partial _ \lambda \Phi ) <0$. For each $\mathsf{X}\in \mathcal{B} $, the curve $\lambda \in I\mapsto \Phi( \lambda ,  \mathsf{X}) \in \mathcal{M} $ is the world-line of the particle with label $ \mathsf{X}\in \mathcal{B} $. \textcolor{black}{We shall denote by $ \mathcal{N} = \Phi ( \mathcal{D} )$ the region of spacetime occupied by the continuum and by $ \partial _{\rm cont} \mathcal{N} = \Phi ([a,b] \times \partial \mathcal{B} )$ the timelike region of spacetime occupied by its boundary. }


The \textit{generalized velocity} of the media is the vector field on $\Phi( \mathcal{D} )$ defined as
\begin{equation}\label{gen_vel_1}
w=  \partial_ \lambda  \Phi  \circ \Phi ^{-1} \in \mathfrak{X} ( \Phi (\mathcal{D} )),
\end{equation} 
and its normalized version
\[
u= \frac{c}{\sqrt{-\mathsf{g}(w,w)}} w\in \mathfrak{X} ( \Phi (\mathcal{D} )), \qquad \mathsf{g}(u,u)=- c ^2 ,
\]
is the \textit{world-velocity}, \textcolor{black}{with $c$ the speed of light} . Note that we can write the generalized velocity as $w= \Phi _* \partial _ \lambda $, where $\Phi_*$ denotes the push-forward of the vector field $ \partial _ \lambda $  by $\Phi$.

For our subsequent development it is advantageous to consider a more general setting in which $\mathcal{D}$ is an arbitrary $(n+1)$-dimensional manifold with piecewise smooth boundary and $W \in \mathfrak{X} ( \mathcal{D} )$ is a given nowhere vanishing vector field on $\mathcal{D}$. This setting is helpful both for conceptual and notational reasons. In this context, the generalized velocity of the world-tube is defined as
\begin{equation}\label{gen_vel_2} 
w= \Phi _* W \in \mathfrak{X} ( \Phi (\mathcal{D} )).
\end{equation} 
The relevant case is of course $ \mathcal{D} =[a,b] \times \mathcal{B} $ and $W= \partial _ \lambda $ in which case \eqref{gen_vel_2} recovers \eqref{gen_vel_1}. \textcolor{black}{In this paper, we shall not try to attribute a physical meaning to $ \mathcal{D} $ when not given by $ \mathcal{D} = [a,b] \times \mathcal{B} $. The introduction of $ \mathcal{D} $ is above all  made for notational simplicity and to show that the abstract framework with the two covariance properties can be formulated within this general context if one is willing to do so.}

\paragraph{Reference and spacetime tensor fields.} The dynamics of relativistic continua depends on given tensor fields on $\mathcal{D}$ and $ \mathcal{M} $, called \textit{reference tensor fields} and \textit{spacetime tensor fields}, respectively. For simplicity of the exposition we assume that, besides the reference vector field $W$ on $ \mathcal{D} $, the dynamics depends on a given $(p,q)$-tensor field $K$ on $\mathcal{D} $, and a given $(r,s)$-tensor field $ \gamma $ on $ \mathcal{M} $. We use the notations $ K \in \mathcal{T} ^p_q( \mathcal{D} )$ and $ \gamma \in \mathcal{T} ^r_s( \mathcal{M} )$. It is assumed that
\begin{equation}\label{L_W_K_0} 
\pounds _WK=0,
\end{equation} 
where $\pounds _W$ denotes the Lie derivative of a tensor field in the direction $W$. Thus, $K$ is assumed to be constant on the flow $ \phi _ \tau $ of $W$, i.e. 
\[
\phi _ \tau ^* K=K.
\]

\begin{remark}[Generalization and examples]\label{remark_generalization}{\rm We present the theory for general tensor fields, but it directly also applies to tensor field densities, differential forms, and pseudo-Riemannian metrics, for instance. Also the extension to the case of the dependence on a collection $\{K_i\}$ and $\{ \gamma  _j\}$ of reference and spacetime tensor fields is straightforward. The collection of spacetime tensor fields necessarily includes the Lorentzian metric $\mathsf{g}$.
An important reference tensor field for our development is the \textit{reference volume form} $R \in \Omega ^{n+1}( \mathcal{D} )$, $ \mathcal{D} =[a,b] \times \mathcal{B}$, defined by
\[
R= d \lambda \wedge \pi _\mathcal{B}  ^* R_0,
\]
where $R_0 \in \Omega ^n( \mathcal{B} )$ is a volume form on $\mathcal{B} $, the mass form, and $ \pi _ \mathcal{B} :[a,b] \times \mathcal{B} \rightarrow \mathcal{B} $ is the projection. The assumption \eqref{L_W_K_0} is satisfied with the vector field $W= \partial _ \lambda $ since
\[
\pounds _{ \partial _ \lambda } R = \pounds _{ \partial _ \lambda }d \lambda \wedge \pi _B ^* R_0 + d \lambda \wedge \pounds _{ \partial _ \lambda }\pi _B ^* R_0=0.
\]}
\end{remark} 

Given a world-tube $\Phi:\mathcal{D}  \rightarrow \mathcal{M} $, the Eulerian expression of a given reference tensor field $K$ is defined by the push-forward of $K$ onto $ \Phi ( \mathcal{D} ) \subset \mathcal{M} $ by the world-tube:
\begin{equation}\label{Eulerian_expression} 
\kappa = \Phi _* K \in \mathcal{T} ^p_q( \Phi ( \mathcal{D} )).
\end{equation} 
From the formula \textcolor{black}{$\Phi_* (\pounds _W K)= \pounds _{ \Phi_* W} \Phi _* K$}, we get the important relation
\[
\pounds _w \kappa  =0,
\]
where $w$ is the generalized velocity. Similarly, the material expression of a given spacetime tensor field $ \gamma $ is defined by the pull-back of $ \gamma $ to $ \mathcal{D} $ by the world-tube:
\begin{equation}\label{Pull_back_gamma} 
\Gamma = \Phi ^* \gamma \in \mathcal{T} ^r_s( \mathcal{D} ).
\end{equation} 
Note that although $ \Phi $ is not a diffeomorphism $ \mathcal{D} \rightarrow  \mathcal{M} $, but only an embedding, the pull-back of an arbitrary tensor field can be defined since $ \mathcal{D} $ and $ \mathcal{M} $ have the same dimension\footnote{Explicitly, for $ \alpha ^1,..., \alpha ^r \in T^*_X \mathcal{D} $ and $ u _1,...,u_s \in T_X \mathcal{D} $, we have 
\[
\Gamma (X)( \alpha ^1,..., \alpha ^r, u_1,...,u_s)= \gamma ( \Phi (X))\left(  (T^*_X \Phi )^{-1} ( \alpha ^1 ),... (T^*_X \Phi )^{-1} ( \alpha ^r ),  T_X \Phi (u_1),..., T_X \Phi (u_s)\right) 
\]
with $T_X \Phi :T_X \mathcal{D} \rightarrow T_{ \Phi (X)} \mathcal{M} $ and $T^*_X \Phi :  T_{ \Phi (X)}^* \mathcal{M} \rightarrow  T_X ^*\mathcal{D}$ isomorphisms and $T_X \mathcal{M} =T_{ \Phi (X)}[ \Phi ( \mathcal{D} )]$ for all $X \in \mathcal{D} $.}.

\paragraph{Lagrangian density in the material description.} The material description of a relativistic continuum uses the world-tube as a primary variable. This description involves both the configuration manifold $ \mathcal{D} $ and the spacetime $ \mathcal{M} $, as opposed to the Eulerian representation and the convective representation, which only use one of them. It is in the material description that the critical action principle takes the simplest form. It is given by the Hamilton principle for the variations of the world-tube.

In general, the Lagrangian density $ \mathscr{L} $ of a relativistic continuum depends on the material point $X \in \mathcal{D} $ and on the value of the world-tube and of its first derivative at this point. It also depends parametrically on the point values of the given reference and spacetime tensor fields $W$, $K$ and $ \gamma $. It is thus a bundle map
\begin{equation}\label{material_Lagrangian} 
\mathscr{L} : J^1( \mathcal{D}  \times \mathcal{M} ) \times T \mathcal{D}  \times T^p_q\mathcal{D}  \times T^r_s \mathcal{M}  \rightarrow \wedge ^{n+1} \mathcal{D} 
\end{equation}
covering the identity on $ \mathcal{D} $. Here $J^1( \mathcal{D}  \times \mathcal{M} ) \rightarrow \mathcal{D}  \times \mathcal{M} $ denotes the first jet bundle of the trivial fiber bundle $\mathcal{D}  \times \mathcal{M}  \rightarrow \mathcal{D} $. The vector fiber of $J^1( \mathcal{D}  \times \mathcal{M} )$ at $(X,x) \in \mathcal{D} \times M$ is $L(T_X \mathcal{D} , T_x \mathcal{M} )$, the space of linear maps $T_X \mathcal{D} \rightarrow T_x \mathcal{M} $.
Also, $T^p_q \mathcal{D}  \rightarrow \mathcal{D} $ denotes the vector bundle of $(p,q)$-tensors on $\mathcal{D} $, similarly for $T^r_s \mathcal{M}  \rightarrow \mathcal{M} $. Note that $J^1( \mathcal{D}  \times \mathcal{M} ) \times T \mathcal{D}  \times T^p_q\mathcal{D}  \times T^r_s \mathcal{M}$ is a bundle over $ \mathcal{D} \times \mathcal{M} $, which can also be regarded as a bundle over $ \mathcal{D} $. It is the latter interpretation that is understood in \eqref{material_Lagrangian}.

The case when $\mathscr{L}$ depends on differential forms or pseudo-Riemannian metrics, see Remark \ref{remark_generalization}, is treated similarly, by considering appropriate subbundles of $T^p_q \mathcal{D} $ and $T^r_s \mathcal{M} $.

Sometimes it is required to appropriately restrict the bundles to open subbundles in order to have the Lagrangian density well-defined. This is typically the case to ensure the condition $\mathsf{g}( \partial _ \lambda \Phi , \partial _ \lambda \Phi ) <0$. We will not indicate such restrictions as they are obvious from the context.

In local coordinates the Lagrangian density $\mathscr{L}$ reads
\[
\mathscr{L} =\bar{ \mathscr{L} }\big(  X^a, x^\mu, v^\mu_a,W^b, K^{a_1...a_p}_{b_1...b_q}, \gamma^{\mu_1...\mu_r}_{\nu_1...\nu_s} \big) d^{n+1} X.
\]
Here $X^a$, $a=1,...,n+1$ and $ x^\mu$, $\mu=1,...,n+1$ are local coordinates on $\mathcal{D} $ and $\mathcal{M} $, and $( X^a, x^\mu, v^\mu_a)$ are the local coordinates induced on $J^1( \mathcal{D} \times \mathcal{M} )$.
We also use the notation $d^{n+1}X= dX^1 \wedge ... \wedge dX^{n+1}$.
The evaluation of the Lagrangian density on a world tube $ \Phi $, a reference vector field $W \in \mathfrak{X} ( \mathcal{D} )$, a reference tensor field $K \in \mathcal{T}^p_q(\mathcal{D} )$, and a spacetime tensor field $\gamma\in \mathcal{T}^r_s(\mathcal{M} )$, is written as
\begin{equation}\label{Lagrangian_K} 
\mathscr{L}(j^1\Phi, W, K, \gamma \circ \Phi  ),
\end{equation}
where $j^1\Phi: \mathcal{D}  \rightarrow J^1( \mathcal{D}  \times \mathcal{M} )$ denotes the first jet extension of the world-tube $\Phi$, given locally as $X^a \mapsto (X ^a , \Phi ^\mu(X^a), \Phi ^\mu_{,b}(X^a))$, \textcolor{black}{with $ \Phi ^\mu_{,b}:= \partial _{X^b}\Phi ^ \mu$}.

\textcolor{black}{When $ \mathcal{D} =[a,b] \times \mathcal{B}\ni X=( \lambda , \mathsf{X}) $, we can describe the first jet of the world-tube as the couple $( \partial _ \lambda\Phi  , T \Phi )$ with $T \Phi $ the tangent map to $\Phi ( \lambda , \cdot ):  \mathcal{B} \rightarrow \mathcal{M} $ for each fixed $ \lambda \in [a,b]$, in analogy with the non-relativistic notation $(\dot \varphi , T \varphi )$. The first jet notation is more suggestive since, as opposed to the time of the Newtonian case, the parameter $ \lambda $ is not treated differently from the labels $X$ when passing to the Eulerian and convected pictures.}


For given $W$, $K$, and $ \gamma $, the Hamilton principle for this Lagrangian density reads
\begin{equation}\label{Ham_principle} 
\left. \frac{d}{d \varepsilon }\right|_{\varepsilon=0} \int_\mathcal{D}  \mathscr{L} ( j^1\Phi_ \varepsilon , W, K , \gamma \circ \Phi _ \varepsilon )=0
\end{equation} 
for arbitrary variations $\Phi _ \varepsilon  $ of the world tube $\Phi $ \textcolor{black}{with fixed endpoints at $ \lambda = a,b$}. It is important to note that $W$, $K$, $ \gamma $ are held fixed during the process of taking variations. This principle yields the Euler-Lagrange equations and boundary conditions
\[
\partial _a \frac{\partial \bar{\mathscr{L}}}{\partial \Phi^\mu_{,a}}  - \frac{\partial \bar{\mathscr{L}}}{\partial \Phi^\mu} = \frac{\partial \bar{\mathscr{L}}}{\partial \gamma ^{\mu_1...\mu_r}_{\nu_1...\nu_s}} \partial_\mu \gamma^{\mu_1...\mu_r}_{\nu_1...\nu_s} , \qquad \frac{\partial \bar{\mathscr{L} }}{\partial \Phi^\mu_{,a}}d^nX_a=0 \quad \text{on} \quad  \textcolor{black}{[a,b] \times \partial \mathcal{B}},
\]
where $d^nX_a= \mathbf{i} _{ \partial _a}d^{n+1}X$, \textcolor{black}{with $ \mathbf{i} _U \lambda $ denoting the insertion of the vector field $U$ in the first slot of a covariant tensor field $ \lambda $, here given by the $(n+1)$-form $d^{n+1}X$}. This boundary condition arises from letting the variations $ \delta \Phi = \left. \frac{d}{d\varepsilon}\right|_{\varepsilon=0} \Phi _ \varepsilon  $ be arbitrary on $[a,b] \times  \partial \mathcal{B} $. Other boundary conditions can be obtained by imposing restriction on the boundary variations or adding a boundary term in the action functional.



\subsection{Spacetime covariance and the convective Lagrangian density}

In this section we state the definition of spacetime covariance for a Lagrangian density of the type \eqref{Lagrangian_K}, which is a general assumption on relativistic theories. We show that when a Lagrangian density is spacetime covariant, it has an associated convective Lagrangian density.

We say that the Lagrangian density \eqref{Lagrangian_K} is \textit{spacetime covariant} if it satisfies
\begin{equation}\label{spacetime_equivariance} 
\mathscr{L} ( j ^1 ( \psi \circ \Phi ), W,K, \psi _\ast \gamma \circ \psi \circ\Phi  )= \mathscr{L} ( j^1\Phi , W,K, \gamma \circ\Phi ),
\end{equation} 
for all $ \psi  \in \operatorname{Diff}( \mathcal{M} )$ and for all $ \Phi $, $W$, $K$, $ \gamma $.

A simple example of a spacetime covariant Lagrangian density is $\mathscr{L}: J^1( \mathcal{D} \times \mathcal{M} ) \times S^2_L \mathcal{M} \rightarrow \wedge ^{n+1} \mathcal{D} $ given by
\begin{equation}\label{ex_1} 
\mathscr{L} ( j ^1 \Phi , \mathsf{g} \circ \Phi )= \Phi ^* [\mu(\mathsf{g})],
\end{equation} 
where $S^2 _L\mathcal{M} \subset T^0_2 \mathcal{M}  \rightarrow \mathcal{M} $ denotes the bundle of Lorentzian metrics on $ \mathcal{M} $ \textcolor{black}{(as indicated by the index $L$)} and $ \mu (\mathsf{g})$ is the volume form associated to the Lorentzian metric $\mathsf{g}$. Spacetime covariance is checked as
\[
\mathscr{L} ( j ^1 ( \psi \circ \Phi ), \psi _*\mathsf{g} \circ \psi \circ  \Phi ) = ( \psi \circ \Phi ) ^* [ \mu ( \psi _* \mathsf{g} )]= \Phi ^* [\mu(\mathsf{g})] = \mathscr{L} ( j ^1 \Phi , \mathsf{g} \circ \Phi ).
\]
More generally, $\mathscr{L}: J^1( \mathcal{D} \times \mathcal{M} ) \times V \mathcal{D} \times S^2 _L\mathcal{M} \rightarrow \wedge ^{n+1} \mathcal{D} $ given by
\begin{equation}\label{ex_2}
\mathscr{L} (j^1 \Phi , R , \mathsf{g} \circ \Phi) = e \Big( \frac{R}{\Phi^* [\mu(\mathsf{g})]}\Big) R 
\end{equation} 
with $V \mathcal{D}  \subset \wedge ^{n+1} \mathcal{D} \rightarrow \mathcal{D} $ the bundle of \textcolor{black}{volume forms on $ \mathcal{D} $} and $ e : \mathbb{R} \rightarrow \mathbb{R} $ a strictly positive function, is spacetime covariant. 
  
\begin{proposition}[The convective Lagrangian density]\label{convected_Lagrangian}  Let $ \mathscr{L} $ be a Lagrangian density \eqref{Lagrangian_K} with the spacetime covariance property \eqref{spacetime_equivariance}. Then, there is an associated {\bfi convective Lagrangian density}, given by a bundle map $ \mathcal{L} : T \mathcal{D}  \times  T ^p _q  \mathcal{D}  \times  T^r _s \mathcal{D} \rightarrow  \wedge  ^{n+1} \mathcal{D}$ covering the identity on $ \mathcal{D} $, such that
\begin{equation}\label{scrL_to_calL} 
\mathscr{L} ( j^1\Phi ,W, K, \gamma \circ\Phi )= \mathcal{L}( W, K , \Gamma ),
\end{equation} 
where
\[
\Gamma:=\Phi ^* \gamma \in \mathcal{T} ^r_s( \mathcal{D} ).
\]
\end{proposition}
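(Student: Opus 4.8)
The plan is to construct $\mathcal{L}$ explicitly from $\mathscr{L}$ by using spacetime covariance to "freeze" the dependence on the world-tube into pulled-back data. First I would fix a point $X \in \mathcal{D}$ and data $(W(X), K(X))$ at that point together with a jet value $j^1\Phi(X) \in J^1(\mathcal{D}\times\mathcal{M})$ and a spacetime tensor value $\gamma$ at $\Phi(X)$. The key observation is that the jet $j^1\Phi(X)$ encodes, besides the base point $\Phi(X)$, the linear isomorphism $T_X\Phi : T_X\mathcal{D} \to T_{\Phi(X)}\mathcal{M}$ (recall $\dim\mathcal{D}=\dim\mathcal{M}$ and $\mathsf{g}(\partial_\lambda\Phi,\partial_\lambda\Phi)<0$ on the relevant open subbundle), and that $\Gamma(X) = (T_X\Phi)^*\gamma$ is exactly the pointwise pull-back through this isomorphism, as spelled out in the footnote after \eqref{Pull_back_gamma}. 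So I would define, for a linear isomorphism $A : T_X\mathcal{D}\to V$ (with $V$ an abstract $(n+1)$-dimensional vector space) and appropriate tensor data,
\[
\mathcal{L}(W_X, K_X, \Gamma_X) := \mathscr{L}\big(j^1\Phi, W_X, K_X, \gamma\circ\Phi\big)
\]
for \emph{any} choice of $\Phi$, $\gamma$ realizing $\Gamma_X = \Phi^*\gamma$ at $X$; the content of the proposition is that this is well-defined.

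The main step is well-definedness, i.e. independence of the choice of $(\Phi, \gamma)$ realizing a given $(X, W_X, K_X, \Gamma_X)$. Suppose $(\Phi_1,\gamma_1)$ and $(\Phi_2,\gamma_2)$ both give the same $X$, the same $W_X$, $K_X$, and satisfy $\Phi_1^*\gamma_1 = \Phi_2^*\gamma_2$ at $X$ (it suffices to match first jets at $X$ since $\mathscr{L}$ depends only on the $1$-jet). Since $T_X\Phi_1$ and $T_X\Phi_2$ are isomorphisms, there is a local diffeomorphism $\psi \in \operatorname{Diff}(\mathcal{M})$ with $\psi\circ\Phi_1$ agreeing with $\Phi_2$ to first order at $X$ — concretely, $\psi$ interpolates between $\Phi_1(X)$ and $\Phi_2(X)$ and has derivative $T_X\Phi_2 \circ (T_X\Phi_1)^{-1}$ at $\Phi_1(X)$. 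The pull-back identity $\Phi_1^*\gamma_1 = \Phi_2^*\gamma_2 = (\psi\circ\Phi_1)^*\gamma_2 = \Phi_1^*(\psi^*\gamma_2)$, together with the fact that $\Phi_1^*$ is injective on tensors at $X$, forces $\gamma_1 = \psi^*\gamma_2$ at $\Phi_1(X)$, equivalently $\psi_*\gamma_1 = \gamma_2$. Then spacetime covariance \eqref{spacetime_equivariance} gives
\[
\mathscr{L}\big(j^1\Phi_1, W_X, K_X, \gamma_1\circ\Phi_1\big)
= \mathscr{L}\big(j^1(\psi\circ\Phi_1), W_X, K_X, \psi_*\gamma_1\circ\psi\circ\Phi_1\big)
= \mathscr{L}\big(j^1\Phi_2, W_X, K_X, \gamma_2\circ\Phi_2\big),
\]
as required; the middle-to-right equality uses that $j^1\Phi_2 = j^1(\psi\circ\Phi_1)$ at $X$ and $\gamma_2 = \psi_*\gamma_1$. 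This also shows $\mathcal{L}$ no longer depends on the jet variable at all (only on $W_X$, $K_X$, $\Gamma_X$), so it is a bundle map $T\mathcal{D}\times T^p_q\mathcal{D}\times T^r_s\mathcal{D}\to\wedge^{n+1}\mathcal{D}$ over $\mathcal{D}$; smoothness is inherited from $\mathscr{L}$ via a smooth local choice of the realizing $(\Phi,\gamma)$.

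Finally I would record that \eqref{scrL_to_calL} holds by construction: given an actual world-tube $\Phi$ and spacetime tensor $\gamma$, setting $\Gamma = \Phi^*\gamma$ and evaluating the just-defined $\mathcal{L}$ at $(W, K, \Gamma)$ reproduces $\mathscr{L}(j^1\Phi, W, K, \gamma\circ\Phi)$ pointwise. I expect the genuinely delicate point to be the construction of the interpolating diffeomorphism $\psi$ realizing a prescribed $0$-jet and $1$-jet of $\psi\circ\Phi_1 = \Phi_2$ at a single point: one must check that a germ of a diffeomorphism with prescribed value and derivative at one point always exists (it does — e.g. in charts take an affine map and cut off), and that \eqref{spacetime_equivariance}, stated for global $\psi\in\operatorname{Diff}(\mathcal{M})$, may be applied with a $\psi$ that is only locally defined, which is legitimate because $\mathscr{L}$ is a pointwise bundle map and only the $1$-jet of $\psi$ at $\Phi_1(X)$ enters. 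Everything else is bookkeeping with the pull-back formula in the footnote.
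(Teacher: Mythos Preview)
Your proof is correct and follows essentially the same route as the paper. The only organizational difference is that the paper fixes a single reference embedding $\Phi_0:\mathcal{D}\to\mathcal{N}_0\subset\mathcal{M}$ and writes $\mathcal{L}(W,K,\Gamma):=\mathscr{L}(j^1\Phi_0,W,K,(\Phi_0)_*\Gamma\circ\Phi_0)$ as an explicit formula, then checks via spacetime covariance (with $\psi$ an extension of $\Phi_0\circ\Phi^{-1}$) both that this agrees with $\mathscr{L}(j^1\Phi,W,K,\gamma\circ\Phi)$ and that it is independent of $\Phi_0$; your version instead defines $\mathcal{L}$ implicitly via an arbitrary realization and proves well-definedness directly, using a $\psi$ built to match $1$-jets at a single point, which is the same covariance argument carried out pointwise.
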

\begin{proof} We fix an embedding $\Phi_0: \mathcal{D}  \rightarrow \mathcal{N}_0 \subset \mathcal{M} $. Then it suffices to choose $\psi \in \operatorname{Diff}( \mathcal{M} )$ to be an arbitrary extension of $\Phi_0 \circ \Phi ^{-1} : \Phi ( \mathcal{D} ) \rightarrow \mathcal{N}_0$ and we have
\[
\mathscr{L} (j^1 \Phi , W, K, \gamma \circ \Phi )= \mathscr{L} (j^1 \Phi _0,W,K, (\Phi _0)_* \Gamma \circ \Phi_0).
\]
We note that $(\Phi _0)_* \Gamma \circ \Phi_0= T^r_s \Phi_0 \circ \Gamma$ depends only on the values $\Gamma(X)$ of $ \Gamma $, not on $ \Gamma $ as a field.
We define
\[
\mathcal{L}(W , K , \Gamma):= \mathscr{L} (j^1 \Phi _0,W,K, (\Phi _0)_* \Gamma \circ \Phi_0).
\]
\textcolor{black}{One can then check}, using again spacetime covariance, that $ \mathcal{L} $ does not depend on the embedding $ \Phi _0$.
\end{proof}

For the examples \eqref{ex_1} and \eqref{ex_2}, the corresponding convective Lagrangian densities are the bundle maps $ \mathcal{L} : S^2_ L \mathcal{D}  \rightarrow \wedge ^{n+1} \mathcal{D} $ and $ \mathcal{L} :  V \mathcal{D}\times S^2_ L \mathcal{D}  \rightarrow \wedge ^{n+1} \mathcal{D} $ given by 
\[
\mathcal{L} (\Gamma )= \mu ( \Gamma ) \quad\text{and}\quad \mathcal{L} ( R, \Gamma)= e \left( \frac{R}{\mu( \Gamma )} \right) R,
\]  
where $ \Gamma = \Phi ^* g$.

Note that in local coordinates the convective Lagrangian density reads
\begin{equation}\label{local_calL} 
\mathcal{L} = \bar{ \mathcal{L} }\big(X^a,W^b, K^{a_1...a_p}_{b_1...b_q}, \Gamma ^{a_1...a_r}_{b_1...b_s}\big) d^{n+1}X.
\end{equation}
In particular, $ \mathcal{L} $ can depend explicitly on $X \in \mathcal{D} $ in general. We refer to $\mathcal{L} $ as the convective (or body) Lagrangian density associated to $\mathscr{L}$, since it depends only on material (or body) tensor fields, without reference to $\mathcal{M} $.

\begin{remark}[Choice of spacetime tensors and isotropy subgroup]\label{remark_gamma}\rm
In some situations one particular choice for the spatial tensor field $ \gamma $ is enough for the description of the continuum theory under study. This is the case for instance for the Minkowski metric in the case of special relativity. In such situations, for the definition \eqref{scrL_to_calL} to hold, it is enough to assume that $\mathscr{L}$ satisfies the spacetime covariance \eqref{spacetime_equivariance} for $ \psi   $ in the isotropy subgroup of the chosen field. For $ \mathcal{M} = \mathbb{R} ^4$ and $ \gamma $ the Minkowski metric, this corresponds to covariance with respect to the Poincar\'e group.
The convective Lagrangian is then defined only for $ \Gamma $ which can be written in terms of the given $ \gamma $ as  $ \Gamma = \Phi ^* \gamma$ for some world-tube $ \Phi $.
\end{remark}

\subsection{Material covariance and the spacetime Lagrangian density}\label{subsec_mat_cov}

In this section we state the definition of material covariance for a Lagrangian density of the type \eqref{Lagrangian_K}. We show that when a Lagrangian density is \textcolor{black}{materially covariant}, it has an associated spacetime (or Eulerian) Lagrangian density. 
As opposed to spacetime covariance, material covariance is related to the properties of the material, such as isotropy. However, material covariance can also be achieved in the anisotropic case by suitably extending the collection of material tensor fields of the theory, \textcolor{black}{to include structural (or anisotropic) tensors}, see Remark \ref{MT_aniso}.

We say that the Lagrangian density \eqref{Lagrangian_K} is \textit{\textcolor{black}{materially covariant}} if it satisfies 
\begin{equation}\label{equivariance_general} 
\mathscr{L} \left(  j ^1(  \Phi \circ \varphi ), \varphi ^* W,\varphi ^* K  , \gamma \circ \Phi \circ \varphi  \right) = \varphi ^\ast  \left[ \mathscr{L} \left(  j ^1 \Phi , W, K, \gamma \circ \Phi \right) \right] ,
\end{equation} 
for all $\varphi \in \operatorname{Diff}( \mathcal{D} )$ and for all $ \Phi $, $W$, $K$, $ \gamma $. While the material covariance is stated in \eqref{equivariance_general} in terms of the world-tube and tensor fields, it is really a property of the Lagrangian density as a bundle map \eqref{Lagrangian_K}, as a direct check in local coordinates shows.

Examples of \textcolor{black}{materially covariant} Lagrangian densities are given by \eqref{ex_1} and \eqref{ex_2}.

\begin{proposition}[The spacetime Lagrangian density]\label{epsilon_N}  Let $ \mathscr{L} $ be a Lagrangian density \eqref{Lagrangian_K} with the material covariance property \eqref{equivariance_general}. Then, for any domain $\mathcal{N} \subset \mathcal{M} $ diffeomorphic to $ \mathcal{D} $, there exists an associated {\bfi spacetime Lagrangian density}, given by a bundle map $ \ell_ \mathcal{N}  :  T  \mathcal{N}  \times  T ^p _q   \mathcal{N}  \times  T^r _s  \mathcal{N}  \rightarrow \textcolor{black}{\wedge  ^{n+1}  \mathcal{N}}$ covering the identity on $ \mathcal{N} $ such that
\begin{equation}\label{scrL_to_ell} 
\mathscr{L} ( j ^1 \Phi ,W, K, \gamma  \circ \Phi     )=\Phi ^\ast[ \ell _{ \mathcal{N} } ( w, \kappa , \gamma    )], 
\end{equation} 
where
\[
 \mathcal{N} = \Phi ( \mathcal{D} ) \subset \mathcal{M} ,  \quad w= \Phi _* W \in \mathfrak{X} ( \Phi (\mathcal{D} )),\quad \text{and} \quad \kappa =\Phi_\ast  K \in \mathcal{T} ^p_q( \Phi ( \mathcal{D} )).
\]
\end{proposition}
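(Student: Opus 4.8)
The plan is to mirror, on the material side, the argument already given for the convective Lagrangian density in Proposition~\ref{convected_Lagrangian}, exchanging the roles of $\mathcal{D}$ and $\mathcal{M}$ and of push-forward and pull-back. Concretely, fix a domain $\mathcal{N}\subset\mathcal{M}$ diffeomorphic to $\mathcal{D}$ and a reference embedding $\Phi_0:\mathcal{D}\to\mathcal{N}$. Given any world-tube $\Phi$ with $\Phi(\mathcal{D})=\mathcal{N}$, the composition $\varphi:=\Phi_0^{-1}\circ\Phi$ is a diffeomorphism of $\mathcal{D}$, so $\Phi=\Phi_0\circ\varphi$ with $\varphi\in\operatorname{Diff}(\mathcal{D})$. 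Applying material covariance \eqref{equivariance_general} to $\Phi_0$ and $\varphi$ gives
\[
\mathscr{L}\bigl(j^1(\Phi_0\circ\varphi),\varphi^*W_0,\varphi^*K_0,\gamma\circ\Phi_0\circ\varphi\bigr)=\varphi^*\bigl[\mathscr{L}(j^1\Phi_0,W_0,K_0,\gamma\circ\Phi_0)\bigr],
\]
where I choose $W_0:=(\varphi^{-1})^*W=\varphi_*W$ and $K_0:=\varphi_*K$, so that $\varphi^*W_0=W$ and $\varphi^*K_0=K$ and the left-hand side equals $\mathscr{L}(j^1\Phi,W,K,\gamma\circ\Phi)$. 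Thus
\[
\mathscr{L}(j^1\Phi,W,K,\gamma\circ\Phi)=\varphi^*\bigl[\mathscr{L}(j^1\Phi_0,\varphi_*W,\varphi_*K,\gamma\circ\Phi_0)\bigr].
\]

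Next I would push this identity forward by $\Phi_0$ to turn it into an identity over $\mathcal{N}$. Since $\Phi=\Phi_0\circ\varphi$, one has $\Phi^*=\varphi^*\circ\Phi_0^*$, hence $\varphi^*=\Phi^*\circ(\Phi_0^*)^{-1}=\Phi^*\circ(\Phi_0)_*$; combining with the displayed equation,
\[
\mathscr{L}(j^1\Phi,W,K,\gamma\circ\Phi)=\Phi^*\Bigl[(\Phi_0)_*\mathscr{L}\bigl(j^1\Phi_0,\varphi_*W,\varphi_*K,\gamma\circ\Phi_0\bigr)\Bigr].
\]
Now I observe that $(\Phi_0)_*(\varphi_*W)=(\Phi_0\circ\varphi)_*W=\Phi_*W=w$ and likewise $(\Phi_0)_*(\varphi_*K)=\Phi_*K=\kappa$, so the quantity in brackets is a density on $\mathcal{N}$ built pointwise out of $w$, $\kappa$, and $\gamma$. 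This is exactly the kind of statement that lets me \emph{define} the bundle map: set
\[
\ell_{\mathcal{N}}(w,\kappa,\gamma):=(\Phi_0)_*\Bigl[\mathscr{L}\bigl(j^1\Phi_0,(\Phi_0)_*^{-1}w\,\big|_{\text{as a field pulled back}},\ldots\bigr)\Bigr],
\]
more precisely by evaluating $\mathscr{L}$ at $j^1\Phi_0$ with reference fields $(\Phi_0)^*w$ and $(\Phi_0)^*\kappa$ and spacetime tensor $\gamma$, then pushing forward by $\Phi_0$; the key point, just as in Proposition~\ref{convected_Lagrangian}, is that because $\mathscr{L}$ is a bundle map its value depends only on the pointwise values of its tensor arguments, so $\ell_{\mathcal{N}}$ is well defined as a bundle map $T\mathcal{N}\times T^p_q\mathcal{N}\times T^r_s\mathcal{N}\to\Lambda^{n+1}\mathcal{N}$ and does not see the extension choices.

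The remaining obligation is to check that $\ell_{\mathcal{N}}$ so defined is independent of the auxiliary reference embedding $\Phi_0$: if $\Phi_0'$ is another embedding with image $\mathcal{N}$, then $\Phi_0'=\Phi_0\circ\chi$ for some $\chi\in\operatorname{Diff}(\mathcal{D})$, and applying material covariance once more shows the two candidate definitions agree. This is the only genuinely delicate step, and it is the analogue of the closing sentence in the proof of Proposition~\ref{convected_Lagrangian}; I expect it to be the main --- but routine --- obstacle, essentially a bookkeeping check that the $\operatorname{Diff}(\mathcal{D})$-equivariance \eqref{equivariance_general} exactly cancels the ambiguity. Once this is done, reading the chain of equalities above backwards yields \eqref{scrL_to_ell}, namely $\mathscr{L}(j^1\Phi,W,K,\gamma\circ\Phi)=\Phi^*[\ell_{\mathcal{N}}(w,\kappa,\gamma)]$, which is the claim. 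I would also note in passing, as the statement's indexing by $\mathcal{N}$ suggests, that unlike the convective density $\mathcal{L}$ the family $\{\ell_{\mathcal{N}}\}$ need not patch into a single globally defined density on $\mathcal{M}$ unless one assumes a compatible global structure; the proof as given produces one $\ell_{\mathcal{N}}$ per admissible domain, which is all that is asserted.
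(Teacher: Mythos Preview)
Your argument is correct and essentially identical to the paper's: fix an embedding $\Phi_0:\mathcal{D}\to\mathcal{N}$, define $\ell_{\mathcal{N}}(w,\kappa,\gamma):=(\Phi_0)_*\bigl[\mathscr{L}(j^1\Phi_0,\Phi_0^*w,\Phi_0^*\kappa,\gamma\circ\Phi_0)\bigr]$, and use material covariance \eqref{equivariance_general} to verify both independence of the choice of $\Phi_0$ and the relation \eqref{scrL_to_ell}. One minor discrepancy worth noting: your closing caveat that the $\ell_{\mathcal{N}}$ need not patch to a global density on $\mathcal{M}$ runs counter to the paper, which asserts immediately after the proposition that gluing does yield a single $\ell$ on $\mathcal{M}$, though this lies outside what the proposition itself claims.
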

\begin{proof} Given $ \mathcal{N} \subset \mathcal{M} $ diffeomorphic to $ \mathcal{D} $, we define $\ell_ \mathcal{N} :  T  \mathcal{N}  \times  T ^p _q   \mathcal{N}  \times  T^r _s  \mathcal{N}  \rightarrow \textcolor{black}{ \wedge  ^{n+1} } \mathcal{N}$ as
\begin{equation}\label{def_ell_N} 
\ell_{ \mathcal{N} }(w, \kappa , \gamma ):= \Phi _* [ \mathscr{L} ( j ^1 \Phi , \Phi ^* w,  \Phi ^* \kappa , \gamma  \circ \Phi     )]
\end{equation} 
where $ \Phi : \mathcal{D} \rightarrow \mathcal{M}$ is an embedding with $ \Phi ( \mathcal{D} )= \mathcal{N} $. By using the material covariance \eqref{equivariance_general} one can check that this definition does not depend on the chosen embedding with $ \Phi ( \mathcal{D} )= \mathcal{N}$. Even though the relation \eqref{def_ell_N} uses the fields $w$, $ \kappa $, $ \gamma $, this equality really defines $\ell_ \mathcal{N} $ on the point values $w(x)$, $ \kappa (x)$, $ \gamma (x)$ as a direct check in local coordinates shows, i.e., it gives $\ell_ \mathcal{N} $ as a map defined on the bundle $T \mathcal{N} \times T^p_q \mathcal{N} \times T^r_s \mathcal{N} $.
\end{proof}

\medskip 

We refer to $\ell_ \mathcal{N}$ as the spacetime Lagrangian associated to $\mathscr{L}$ on $ \mathcal{N}$, since it only depends on spacetime tensor fields, without reference to $ \mathcal{D} $. Gluing together these expressions, we get a spacetime Lagrangian density defined over the whole spacetime, denoted simply $\ell : T \mathcal{M}  \times  T ^p _q  \mathcal{M}  \times  T^r _s  \mathcal{M} \rightarrow \wedge  ^{n+1} \mathcal{M} $.

\medskip

For the examples \eqref{ex_1} and \eqref{ex_2}, the associated spacetime Lagrangians are the bundle maps $\ell: S^2_L \mathcal{M} \rightarrow \wedge ^{n+1} \mathcal{M} $ and $\ell: V \mathcal{M} \times S^2_L \mathcal{M} \rightarrow \wedge ^{n+1} \mathcal{M} $
\[
\ell(\mathsf{g}) = \mu(\mathsf{g}) \quad\text{and}\quad \ell(\varrho, \mathsf{g}) =  e\Big( \frac{\varrho }{ \mu (\mathsf{g})} \Big) \varrho.
\]
Given a world-tube $ \Phi: \mathcal{D} \rightarrow \mathcal{M}  $, the volume form $ \varrho = \Phi _* R$ is the Eulerian expression of $R$ on $ \Phi ( \mathcal{D} )$, see \eqref{Eulerian_expression}.

\begin{remark}[Choice of material tensors and isotropy subgroup]\label{remark_W_K}\rm
In many situations, see the examples later, the description of the continuum relies only on one particular choice for the fields $W$ or $K$, such as $W= \partial _ \lambda $ for $ \mathcal{D} =[a,b] \times \mathcal{B} $. In this case, for the definition \eqref{scrL_to_ell} to hold, it is enough to assume that $\mathscr{L}$ satisfies the material covariance \eqref{equivariance_general} for $ \varphi $ in the isotropy subgroup of the chosen fields. For instance, if we focus on $W= \partial _ \lambda $, the corresponding isotropy subgroup is
\begin{equation}\label{isom_sdp} 
\operatorname{Diff}_{ \partial _ \lambda }( \mathcal{D} )=\{ \varphi \in \operatorname{Diff}( \mathcal{D} ) \mid \varphi ^* \partial _ \lambda = \partial _ \lambda \} \simeq \operatorname{Diff}( \mathcal{B} ) \,\circledS\, \mathcal{F} ( \mathcal{B} , \mathbb{R} )\ni ( \psi , f)
\end{equation} 
with $ \varphi ( \lambda , \mathsf{X})= ( \lambda + f(\mathsf{X}), \psi (\mathsf{X}))$. \textcolor{black}{As written above, this group is isomorphic to the semidirect product, denoted by $\circledS$, of  $\operatorname{Diff}( \mathcal{B} )$ with the space $\mathcal{F} ( \mathcal{B} , \mathbb{R} )$ of smooth functions from $ \mathcal{B} $ to $ \mathbb{R} $,  the semidirect product group multiplication being given by $( \psi _1,f_1)( \psi _2,f_2)= ( \psi _1 \circ \psi _2, f_1 \circ \psi_2 + f_2)$.}
In this case, $\ell_ \mathcal{N} $ in \eqref{def_ell_N} is defined only for $w$ which can be written in terms of the given $W$ as  $w= \Phi _*W$ for some world-tube $ \Phi $, similarly for other fields $K$. This remark is analogous to Remark \ref{remark_gamma} about spacetime covariance.
\end{remark}



\section{Eulerian and convective covariant reductions}\label{sec_4}

In this section we derive the Eulerian form, resp. the convective form of the Hamilton principle \eqref{Ham_principle} for relativistic continua under the assumption of material covariance, resp. spacetime covariance. To obtain the Eulerian form, we use the relation \eqref{scrL_to_ell} between the material Lagrangian density and its spacetime version  as well as the definition of the Eulerian form of the given material tensor fields. Similarly, to obtain the convective form, we use the relation \eqref{scrL_to_calL} between the material Lagrangian density and its convective version, as well as the definition of the convective form of the given spacetime tensor fields. These derivations use some technical results on Lie derivatives that we give below.

\subsection{Preliminaries on Lie derivatives}\label{technical_prelim}

Recall that the local expression of the Lie derivative of a $(p,q)$-tensor field $ \kappa $ along a vector field $ \zeta $ is
\begin{equation}\label{Lie_der}
\begin{aligned} 
&(\pounds _ \zeta \kappa  )^{\alpha_1 ... \alpha_p}_{\beta _1 ... \beta _q} \\
&\quad =   \zeta ^ \gamma\partial _ \gamma \kappa  ^{\alpha_1 ... \alpha_p}_{\beta _1 ... \beta _q}- \kappa  ^{\alpha_1 ... \alpha _{r-1}\gamma  \alpha _{r+1}... \alpha_p}_{\beta _1 ... \beta _q} \partial _ \gamma \zeta ^{ \alpha _r  } + \kappa  ^{\alpha_1 ... \alpha_p}_{\beta _1 ... \beta _{r-1} \gamma \beta _{r+1} ... \beta _q} \partial _{ \beta _r} \zeta  ^ \gamma \\
&\quad = \zeta ^ \gamma\partial _ \gamma \kappa  ^{\alpha_1 ... \alpha_p}_{\beta _1 ... \beta _q}  + \widehat{ \kappa  }\;^{\alpha_1 ... \alpha_p\mu}_{\beta _1 ... \beta _q\nu} \partial _ \mu \zeta ^ \nu,
\end{aligned}
\end{equation}  
where $\widehat{ \kappa  }$ is the $(p+1, q+1)$ tensor field defined by
\begin{equation}\label{hat_kappa} 
\widehat{ \kappa  }\;^{ \alpha _1 ... \alpha _p\nu} _{ \beta _1... \beta _q\mu}= \sum_r \big( \kappa  ^{ \alpha _1 ... \alpha _p} _{ \beta _1... \beta _{r-1} \mu \beta _{r+1} ...\beta _q} \delta ^ \nu_{ \beta _r}- \kappa  ^{ \alpha _1 ... \alpha  _{r-1} \nu \alpha  _{r+1} ...\alpha _p} _{ \beta _1... \beta _q} \delta ^ {\alpha  _r} _\mu\big).
\end{equation} 
Formulas \eqref{Lie_der} can also be written in terms of a given torsion free covariant derivative $ \nabla $ on $M$ by replacing $ \partial _ \gamma $ with $ \nabla _ \gamma $. With such choice, we have the global formula
\begin{equation}\label{global_formula} 
\pounds _ \zeta  \kappa = \nabla _ \zeta \kappa  + \widehat{ \kappa  }: \nabla \zeta .
\end{equation}

For future use, we recall these formulas in the case where $ \kappa  $ is, respectively, a function $f$, a vector field $w= w^\mu \partial _\mu$, a $2$ covariant symmetric tensor field $ \mathsf{c} =  \mathsf{c}_{\mu\nu} dx^\mu \otimes dx^\nu$, and a $(n+1)$-form $\varrho   = \overline{ \varrho }\, d^{n+1}x$:
\begin{align*} 
\pounds _ \zeta f &= \partial _ \gamma f \zeta ^ \gamma\\
\pounds _ \zeta w^\mu&= \partial _ \gamma w^\mu \zeta ^ \gamma-  w ^ \gamma \partial _ \gamma \zeta ^\mu\\
\pounds _ \zeta \mathsf{c}_{\mu\nu}&= \partial _ \gamma \mathsf{c} _{\mu\nu} \zeta ^ \gamma + \mathsf{c} _{ \gamma \nu} \partial _ \mu \zeta  ^ \gamma + \mathsf{c} _{ \mu \gamma } \partial _ \nu \zeta  ^ \gamma\\
\overline{ \pounds _ \zeta \varrho  }&= \partial _ \gamma ( \bar{ \varrho  } \,\zeta ^ \gamma ).
\end{align*} 

The derivation of the reduced Euler-Lagrange equations uses the following technical result.

\begin{lemma}\label{technical_lemma} Let $ \kappa  $ be a $(p,q)$-tensor field and $\pi$ a $(q,p)$-tensor field density.
Then, we have locally
\begin{equation}\label{formula_Lemma_local}
\begin{aligned} 
(\pounds _ \zeta  \kappa ) ^{ \alpha _1 ... \alpha _p} _{ \beta _1... \beta _q}  \pi  _{ \alpha _1 ... \alpha _p} ^{ \beta _1... \beta _q}&=  \zeta ^ \mu \left( \partial _ \mu \kappa  ^{ \alpha _1 ... \alpha _p} _{ \beta _1... \beta _q} \pi _{ \alpha _1 ... \alpha _p} ^{ \beta _1... \beta _q} - \partial _ \nu( \widehat{ \kappa  }\;^{ \alpha _1 ... \alpha _p\nu} _{ \beta _1... \beta _q\mu}\pi  _{ \alpha _1 ... \alpha _p} ^{ \beta _1... \beta _q})\right) \\
& \qquad + \partial _ \nu \left( \widehat{ \kappa  }\;^{ \alpha _1 ... \alpha _p\nu} _{ \beta _1... \beta _q\mu}\pi  _{ \alpha _1 ... \alpha _p} ^{ \beta _1... \beta _q} \zeta ^ \mu \right).
\end{aligned}
\end{equation}  
The same formula holds with $\partial _ \gamma $ replaced by $ \nabla _ \gamma $, with $ \nabla $ a torsion free covariant derivative. In this case, we have the global formula
\begin{equation}\label{formula_Lemma_global} 
\textcolor{black}{ \pounds _ \zeta  \kappa  : \pi = \nabla _ \zeta \kappa  : \pi - \operatorname{div}^ \nabla (  \pi \!\therefore\! \widehat{ \kappa  }) \cdot \zeta+ \operatorname{div}( ( \pi\!\therefore\! \widehat{ \kappa  } ) \cdot \zeta  ) },
\end{equation}
where the colon ``$\,:\,$" in $\pounds _ \zeta  \kappa  : \pi $ and $\nabla _ \zeta \kappa  : \pi$ denotes the full contraction and where $ \pi \!\therefore\! \widehat{\kappa }$ is the $(1,1)$ tensor field density obtaining by contracting all the respective indices of $ \widehat{ \kappa  } $ and $ \pi $ except the last covariant and contravariant indices of $\widehat{ \kappa  }$.
\end{lemma}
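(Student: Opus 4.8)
The plan is to verify the local identity \eqref{formula_Lemma_local} by a direct index manipulation, and then recognize each of its three terms as the coordinate expression of the corresponding term in the global formula \eqref{formula_Lemma_global}. First I would start from the second line of \eqref{Lie_der}, namely $(\pounds _ \zeta \kappa)^{\alpha_1...\alpha_p}_{\beta_1...\beta_q} = \partial_\gamma \kappa^{\alpha_1...\alpha_p}_{\beta_1...\beta_q}\zeta^\gamma + \widehat\kappa\,^{\alpha_1...\alpha_p\mu}_{\beta_1...\beta_q\nu}\partial_\mu\zeta^\nu$, and contract both sides with the tensor density $\pi_{\alpha_1...\alpha_p}^{\beta_1...\beta_q}$. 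The first term contracts immediately to $\zeta^\mu\,\partial_\mu\kappa^{\alpha_1...\alpha_p}_{\beta_1...\beta_q}\pi_{\alpha_1...\alpha_p}^{\beta_1...\beta_q}$, which is the first piece on the right of \eqref{formula_Lemma_local}. For the second term, $\widehat\kappa\,^{\alpha_1...\alpha_p\nu}_{\beta_1...\beta_q\mu}\pi_{\alpha_1...\alpha_p}^{\beta_1...\beta_q}\,\partial_\nu\zeta^\mu$, the only manipulation needed is the Leibniz rule applied in reverse: write $A^\nu_{\ \mu}\partial_\nu\zeta^\mu = \partial_\nu(A^\nu_{\ \mu}\zeta^\mu) - (\partial_\nu A^\nu_{\ \mu})\zeta^\mu$ with $A^\nu_{\ \mu} := \widehat\kappa\,^{\alpha_1...\alpha_p\nu}_{\beta_1...\beta_q\mu}\pi_{\alpha_1...\alpha_p}^{\beta_1...\beta_q}$. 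This splits into exactly the remaining two pieces of \eqref{formula_Lemma_local}. So the local statement is essentially just one application of the product rule, once the definition \eqref{hat_kappa} of $\widehat\kappa$ has been folded into \eqref{Lie_der}.

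Next I would pass to the covariant-derivative version. Since $\nabla$ is torsion free, the very same identity \eqref{Lie_der} holds with $\partial_\gamma$ replaced by $\nabla_\gamma$ (this is exactly \eqref{global_formula}), and the reverse Leibniz step $A^\nu_{\ \mu}\nabla_\nu\zeta^\mu = \nabla_\nu(A^\nu_{\ \mu}\zeta^\mu) - (\nabla_\nu A^\nu_{\ \mu})\zeta^\mu$ is valid for any covariant derivative. Hence the local formula with $\nabla$ in place of $\partial$ follows verbatim. The one subtlety worth a remark is that $A^\nu_{\ \mu}$ is a tensor \emph{density} (because $\pi$ is), so that $\nabla_\nu(A^\nu_{\ \mu}\zeta^\mu)$ and $\operatorname{div}(\cdot)$ coincide on the vector density $A^\nu_{\ \mu}\zeta^\mu$: for a vector density $\mathfrak v^\nu$ one has $\nabla_\nu \mathfrak v^\nu = \partial_\nu \mathfrak v^\nu$ regardless of the torsion-free connection, which is why the last term in \eqref{formula_Lemma_global} is written with the plain divergence $\operatorname{div}$ rather than $\operatorname{div}^\nabla$. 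I would state this explicitly so that the reader sees the two divergence symbols in \eqref{formula_Lemma_global} are not a typo.

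Finally I would assemble the global formula \eqref{formula_Lemma_global} by identifying each term. The contraction $\pounds_\zeta\kappa\cdot\pi$ on the left is the full index contraction of the two sides of \eqref{formula_Lemma_local}. On the right: $\zeta^\mu\nabla_\mu\kappa^{...}_{...}\pi^{...}_{...}$ is the full contraction $\nabla_\zeta\kappa\cdot\pi$; the density $A^\nu_{\ \mu} = \widehat\kappa\,^{\alpha_1...\alpha_p\nu}_{\beta_1...\beta_q\mu}\pi_{\alpha_1...\alpha_p}^{\beta_1...\beta_q}$ is precisely the $(1,1)$-tensor density $\pi:\widehat\kappa$ described in the statement (contract every index of $\widehat\kappa$ and $\pi$ except the trailing pair $(\nu,\mu)$ of $\widehat\kappa$); so $-(\nabla_\nu A^\nu_{\ \mu})\zeta^\mu = -\operatorname{div}^\nabla(\pi:\widehat\kappa)\cdot\zeta$ and $\nabla_\nu(A^\nu_{\ \mu}\zeta^\mu) = \operatorname{div}((\pi:\widehat\kappa)\cdot\zeta)$. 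This gives \eqref{formula_Lemma_global} exactly. The main obstacle, such as it is, is purely bookkeeping: getting the index positions in the definition \eqref{hat_kappa} of $\widehat\kappa$ to line up correctly with the ``last covariant and contravariant indices'' convention for $\pi:\widehat\kappa$, and being careful that the density weight is carried by $\pi$ so that the final term is a genuine divergence of a vector density. There is no analytic difficulty; the content is the reverse product rule plus the torsion-free identity \eqref{global_formula}, both already available.
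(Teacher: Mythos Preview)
Your proposal is correct and follows exactly the approach the paper indicates: the paper's proof consists of the single sentence ``This follows directly from a computation in local coordinates using \eqref{Lie_der} and \eqref{hat_kappa},'' and your write-up is precisely that computation spelled out, including the reverse Leibniz step and the identification of the global terms. Your added remark on why the last divergence is connection-independent (vector density) is a useful clarification that the paper leaves implicit.
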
 
\begin{proof} This follows directly from a computation in local coordinates using \eqref{Lie_der} and \eqref{hat_kappa}.
\end{proof}

\begin{remark}[Divergences]\rm
Note that the first divergence operator is associated to the torsion free covariant derivative $ \nabla $, while the last one is canonically defined since it acts on a vector field density, hence the notations $ \operatorname{div}^ \nabla $ and $ \operatorname{div}$ in \eqref{formula_Lemma_global}. Given a $(1,1)$ tensor field density $T= T^ \mu _ \nu \partial _ \mu \otimes dx ^ \nu \otimes d^{n+1}x$ and a vector field density $X= X^ \mu \partial _ \mu \otimes d^{n+1} x$, these divergences are defined as
\begin{equation}\label{def_div} 
\begin{aligned} 
(\operatorname{div}^ \nabla \! T)(u)&= \nabla _{ \partial _ \mu } T ( u, dx ^ \mu ) = \nabla _ \mu T^ \mu _ \nu u ^ \nu d^{n+1} x ,\;\;\forall u \in T \mathcal{M}\\
\operatorname{div}X&= d( \mathbf{i} _{ \partial _ \mu } \!\left\langle dx ^ \mu , X \right\rangle  )= d( X^ \mu d^nx_ \mu )= \partial _ \mu X^ \mu d^{n+1}x,
\end{aligned} 
\end{equation} 
where $ \mathbf{i} _{ \partial _ \mu } \!\left\langle dx ^ \mu , X \right\rangle $ denotes the insertion of the vector $ \partial _ \mu $ in the $(n+1)$-form $\left\langle dx ^ \mu , X \right\rangle$, \textcolor{black}{and with $d^nx_\mu= \mathbf{i} _{ \partial _ \mu } d^{n+1}x$}. We note that for $\ell$ a $(n+1)$-form and $ \zeta $ a vector field, we have
\begin{equation}\label{example_div}
\operatorname{div}(\ell \zeta )= d( \mathbf{i} _ \zeta \ell)= \pounds _ \zeta  \ell. 
\end{equation}
\end{remark} 

\medskip 

For a function $ f $, a vector field $w= w^\mu \partial _ \mu$, a $2$ covariant symmetric tensor field $ \mathsf{c} =  \mathsf{c} _{\mu\nu} dx^\mu \otimes dx^\nu$, and a $(n+1)$-form $ \varrho  = \bar{\varrho } d^{n+1}x$, formula \eqref{formula_Lemma_local} gives 
\begin{align*} 
\pounds _ \zeta f \bar\pi &= \zeta ^ \gamma \partial _  \gamma f\bar\pi \\
\pounds _ \zeta w ^ \mu \pi _ \mu &= \zeta ^ \gamma ( \partial _ \gamma w ^ \nu \pi _ \nu + \partial _ \nu ( w ^ \nu \pi _ \gamma )) - \partial _ \nu ( w ^ \nu \pi _ \mu \zeta ^ \mu )\\
\pounds _ \zeta \mathsf{c} _{ \mu \nu } \pi ^{ \mu \nu }&= \zeta ^ \gamma \left( \partial _ \gamma \mathsf{c} _{\mu\nu} \pi ^{ \mu \nu }- 2 \partial _ \mu( \mathsf{c} _{ \gamma \nu } \pi ^{ \mu \nu }) \right)  + 2 \partial _ \gamma ( \mathsf{c} _{\mu\nu} \pi ^{ \gamma \nu } \zeta ^ \mu ) \\
\textcolor{black}{ \overline{\pounds _ \zeta \varrho } \;\pi }&= - \zeta ^ \gamma \bar{\varrho } \partial _ \gamma \pi + \partial _ \gamma \left( \bar{ \varrho  } \;\pi \zeta ^ \gamma \right).
\end{align*}

\subsection{Reduced Euler-Lagrange equations on spacetime}

We now present the variational principle induced in the Eulerian description by the Hamilton principle \eqref{Ham_principle} in the material description.

To derive the covariantly reduced Euler-Lagrange equation, we will use the formula
\begin{equation}\label{variation_moving_domain} 
\left. \frac{d}{d\varepsilon}\right|_{\varepsilon=0} \int_{ \Phi _ \varepsilon  ( \mathcal{D} )} \ell_ \varepsilon  = \int_ { \Phi (\mathcal{D} )} \left. \frac{d}{d\varepsilon}\right|_{\varepsilon=0}  \ell + \int_{  \Phi(\partial \mathcal{D} )} \mathbf{i} _{ \delta \Phi \circ \Phi ^{-1} } \ell 
\end{equation} 
for the variations of an integral on a moving domain, where $ \Phi _ \varepsilon $ is a path of world-tubes with $ \Phi_{ \varepsilon =0}= \Phi $. In the second term $ \mathbf{i} _ \zeta \ell \in \Omega ^n( \mathcal{N} )$ denotes the insertion of the vector field $ \zeta $ in the $(n+1)$-form $\ell$, \textcolor{black}{with here $ \zeta = \delta \Phi \circ \Phi ^{-1} $}.

Given a $(1,1)$ tensor field density $T$, written locally as $T=T^ \mu _ \nu \partial _ \mu \otimes dx ^ \nu \otimes d^{n+1}x$, we use the notation $ \operatorname{tr}(T)= T^\mu_\nu dx^ \nu \otimes d^nx_\mu  \in \Omega ^1( \mathcal{N} ) \otimes \Omega ^{n}( \mathcal{N} )$ for the contraction of the contravariant index and the first density index. We denote by $i_{ \partial \mathcal{N} }: \partial \mathcal{N} \rightarrow \mathcal{N} $ the inclusion.

\begin{theorem}[Covariant Eulerian reduction]\label{spacetime_reduced_EL} Let $ \mathscr{L} : J^1( \mathcal{D}  \times \mathcal{M} ) \times T \mathcal{D}  \times T^p_q\mathcal{D}  \times T^r_s \mathcal{M}  \rightarrow \wedge ^{n+1} \mathcal{D} $ be a Lagrangian density with the material covariant property \eqref{equivariance_general} and consider the associated spacetime Lagrangian density $\ell : T \mathcal{M}  \times  T ^p _q  \mathcal{M}  \times  T^r _s  \mathcal{M} \rightarrow \wedge  ^{n+1} \mathcal{M} $.

Fix the reference tensor fields $W \in \mathfrak{X} ( \mathcal{D} )$, $K \in \mathcal{T} _q^p( \mathcal{D} )$, and the spacetime tensor field $ \gamma \in \mathcal{T} ^r_s( \mathcal{M} )$.
For each world-tube $ \Phi : \mathcal{D} \rightarrow \mathcal{M} $, define $\mathcal{N}= \Phi ( \mathcal{D})$, $w= \Phi _* W$, $ \kappa = \Phi _ * K$, \textcolor{black}{and $ \partial _{\rm cont} \mathcal{N} = \Phi ([a,b] \times \partial \mathcal{B} )$ the spacetime region occupied by the boundary of the continuum}. Then, the following statements are equivalent:
\begin{itemize}
\item[\bf (i)] $\Phi$ is a critical point of the {\bfi Hamilton principle}
\[
\left. \frac{d}{d\varepsilon}\right|_{\varepsilon=0} \int_\mathcal{D}  \mathscr{L} ( j^1\Phi_ \varepsilon , W, K , \gamma \circ \Phi _ \varepsilon  )=0
\]
for arbitrary variations $\Phi _ \varepsilon  $ \textcolor{black}{with fixed endpoints at $ \lambda =a,b$}.

\item[\bf (ii)] $\Phi$ is a solution of the {\bfi Euler-Lagrange equations}
\[
\partial _a \frac{\partial \bar{\mathscr{L}}}{\partial \Phi^\mu_{,a}}  - \frac{\partial \bar{\mathscr{L}}}{\partial \Phi^\mu} = \frac{\partial \bar{\mathscr{L}}}{\partial \gamma } \partial_\mu \gamma , \qquad \frac{\partial \bar{ \mathscr{L} }}{\partial \Phi^\mu_{,a}}d^nX_a=0\quad \text{on} \quad  \textcolor{black}{[a,b] \times \partial \mathcal{B}}.
\]
\item[\bf (iii)] $w \in \mathfrak{X} (\mathcal{N} )$ and $ \kappa \in \mathcal{T} ^p_q( \mathcal{N} )$ are critical points of the {\bfi Eulerian variational principle} \color{black} 
\begin{equation}\label{Eulerian_VP}
\begin{aligned} 
&\!\!\left. \frac{d}{d\varepsilon}\right|_{\varepsilon=0}\int_{ \mathcal{N}_ \varepsilon } \ell\big( w_ \varepsilon , \kappa _ \varepsilon , \gamma \big)=0 \quad \text{for variations}\\
& \delta \mathcal{N} = \zeta |_{ \partial \mathcal{N} } \big/T \partial \mathcal{N} , \quad \delta w = -\pounds _ \zeta w, \quad   \delta \kappa = - \pounds _ \zeta \kappa,\phantom{\int_A^B}
\end{aligned} 
\end{equation}\color{black} 
\textcolor{black}{where $ \zeta$} is an arbitrary vector field on $ \mathcal{N} $ \textcolor{black}{such that $ \zeta |_{ \Phi (a, \mathcal{B} )}= \zeta |_{ \Phi (b, \mathcal{B} )}=0$}.
\item[\bf (iv)] $w \in \mathfrak{X} (\mathcal{N} )$ and $ \kappa \in \mathcal{T} ^p_q( \mathcal{N} )$ are solution of the {\bfi reduced Euler-Lagrange equations on spacetime}
\begin{equation}\label{spacetime_EL} 
\left\{
\begin{array}{l}
\displaystyle\vspace{0.2cm}\operatorname{div}^ \nabla \!\Big(  \ell \delta  + w \otimes \frac{\partial \ell}{\partial w} -  \textcolor{black}{  \frac{\partial \ell}{\partial \kappa }\!\therefore\!  \widehat{ \kappa  }}\Big)  =  \frac{\partial ^\nabla\!\ell}{\partial x}+\frac{\partial \ell}{\partial \gamma } : \nabla  \gamma\\
\displaystyle\vspace{0.2cm} i_{ \partial \mathcal{N} }^*\Big(\operatorname{tr}\Big(  \ell \delta  + w \otimes \frac{\partial \ell}{\partial w} -   \textcolor{black}{ \frac{\partial \ell}{\partial \kappa }\!\therefore\!  \widehat{ \kappa  } }\Big) \cdot \zeta\Big)=0, \;\;\forall \zeta \quad \text{on} \quad   \textcolor{black}{\partial_{\rm cont} \mathcal{N}}\\
\displaystyle\pounds _w \kappa =0,
\end{array}
\right.
\end{equation} 
written with the help of a torsion free covariant derivative $ \nabla $ \textcolor{black}{and with $ \delta $ the Kronecker delta\footnote{\textcolor{black}{No confusion should arise with the same symbol $ \delta $ also used for variations}.}}. In local coordinates, writing $\ell=\bar\ell d^{n+1}x$, the boundary condition reads
\[
\Big( \bar\ell  \delta  + w \otimes \frac{\partial \bar\ell}{\partial w}-   \textcolor{black}{\frac{\partial \bar{\ell}}{\partial \kappa }\!\therefore\! \widehat{ \kappa  } }\Big) ^\mu_\nu d^nx_\mu=0, \quad \text{on} \quad   \textcolor{black}{\partial_{\rm cont} \mathcal{N}}.
\]
\end{itemize}
\end{theorem}
\begin{proof} \textcolor{black}{First, let us note that the constrained variations in {\bf (iii)} are induced by the variations $\Phi _ \varepsilon  $ of the world-tube by using the relations $ \mathcal{N} _ \varepsilon = \Phi _ \varepsilon  ( \mathcal{D} )$, $w_ \varepsilon = (\Phi_ \varepsilon  )_*W$, $ \kappa _ \varepsilon = (\Phi_ \varepsilon  )_*K$ and defining $ \zeta= \delta \Phi \circ \Phi ^{-1}$}. For simplicity, we include only the material tensor $K$, since the treatment of the vector field $W$ is a particular case of it. Using Lemma \ref{technical_lemma} and \eqref{variation_moving_domain}, we have\color{black} 
\begin{align*} 
&\left. \frac{d}{d\varepsilon}\right|_{\varepsilon=0}\int_{ \Phi _ \varepsilon  ( \mathcal{D} )} \ell\big( (\Phi_ \varepsilon  )_*K, \gamma \big)\\
&\quad = \int_\mathcal{N}  \frac{\partial \ell}{\partial \kappa  } : \delta \kappa+ \int_{ \partial \mathcal{N} } \mathbf{i} _{ \delta \Phi \circ \Phi ^{-1} }\ell = - \int_\mathcal{N}  \frac{\partial \ell}{\partial \kappa  } : \pounds _ \zeta \kappa + \int_{ \partial \mathcal{N} } \mathbf{i} _{ \zeta }\ell\\
&\quad = - \int_\mathcal{N}  \Big(    \frac{\partial \ell}{\partial \kappa } :  \nabla _ \zeta \kappa - \operatorname{div}^ \nabla \!\Big( \frac{\partial \ell}{\partial \kappa } \!\therefore\!  \widehat{ \kappa  }\Big)  \cdot \zeta \Big)  + \int_{ \partial \mathcal{N} }\operatorname{tr}  \Big( \ell \delta -  \frac{\partial \ell}{\partial \kappa }\!\therefore\! \widehat{ \kappa  } \Big) \cdot \zeta   \\
&\quad = - \int_ \mathcal{N}  \Big(    \operatorname{div}^ \nabla \!\Big( \ell \delta  -   \frac{\partial \ell}{\partial \kappa }\!\therefore\!  \widehat{ \kappa  }\Big) \cdot \zeta - \frac{\partial ^ \nabla \!\ell}{\partial x} \cdot \zeta - \frac{\partial \ell}{\partial \gamma } : \nabla _ \zeta \gamma  \Big)  + \int_{ \partial \mathcal{N} } \operatorname{tr}  \Big( \ell \delta - \frac{\partial \ell}{\partial \kappa }\!\therefore\!  \widehat{ \kappa  } \Big) \cdot \zeta.
\end{align*} \color{black} 
In the third equality we used $\mathbf{i} _ \zeta \ell =\operatorname{tr}(\ell \delta ) \cdot  \zeta $. In the fourth equality we used the following formula
\begin{equation}\label{nabla_derivative_ell} \color{black} 
\nabla_ \zeta  [ \ell( \kappa , \gamma )]=  \frac{\partial ^\nabla\!\ell}{\partial x} \cdot \zeta +\frac{\partial \ell}{\partial \kappa } : \nabla _ \zeta \kappa + \frac{\partial \ell}{\partial \gamma } : \nabla _ \zeta \gamma 
\end{equation} 
for the derivative of the bundle map $\ell$ with respect to the given connection $ \nabla $, with $\frac{\partial ^ \nabla \ell}{\partial x}$ the derivative of $\ell$ with respect to the base point defined with the help of $ \nabla $. 

To include the reference vector field $W$ we note that for $w$ a vector field, we have
\[
\textcolor{black}{ \frac{\partial \ell}{\partial w}\!\therefore\!  \widehat{w}}= - w \otimes \frac{\partial \ell}{\partial w} .
\]
Since the vector field $ \zeta $ is arbitrary, we get the equations \eqref{spacetime_EL}. 
\end{proof}

The result stated in the above theorem still holds when material covariance is satisfied only with respect to the isotropy subgroup $ \operatorname{Diff}_{W,K}( \mathcal{D} )$ of the material tensor fields, see Remark \ref{remark_W_K}.

\begin{remark}[On the form of the boundary condition]\rm
We chose to write the boundary condition in \eqref{spacetime_EL} without introducing a Lorentzian metric, since we \textcolor{black}{did not} assume the presence of a metric in the continuum theory so far. If such a metric is present then, under the hypothesis that the boundary consist of nondegenerate hypersurfaces, the unit normal vector field can be used to rewrite the boundary condition in a more concrete way. This will be considered later, see Lemma \ref{Lemma_boundary}.
\end{remark}

\paragraph{Spacetime covariance.} We now examine the case in which the material Lagrangian density $\mathscr{L}$ in \eqref{Lagrangian_K}  is also spacetime covariant with respect to $ \operatorname{Diff}( \mathcal{M} )$, see \eqref{spacetime_equivariance}, in addition to the material covariance \eqref{equivariance_general}.

\begin{lemma}\label{spacetime_mat_ell} Let $ \mathscr{L} : J^1( \mathcal{D}  \times \mathcal{M} ) \times T \mathcal{D}  \times T^p_q\mathcal{D}  \times T^r_s \mathcal{M}  \rightarrow \wedge ^{n+1} \mathcal{D} $ be a \textcolor{black}{materially covariant} Lagrangian density and consider the associated spacetime Lagrangian $\ell : T \mathcal{M}  \times T ^p_q \mathcal{M} \times  T^r_s \mathcal{M}  \rightarrow  \wedge ^{n+1} \mathcal{M}$. Then if $ \mathscr{L} $ is also spacetime covariant, we have
\[
\psi ^\ast\left[ \ell (  \psi_\ast w, \psi_\ast \kappa , \psi _\ast \gamma ) \right] = \ell ( w, \kappa , \gamma ), \quad \forall\psi \in \operatorname{Diff}(\mathcal{M} ).
\]
Therefore, $\ell$ satisfies the following equalities:
\[
\frac{\partial ^ \nabla \!\ell}{\partial x}=0 \quad\text{and}\quad \textcolor{black}{ - w \otimes \frac{\partial \ell}{\partial w} +   \frac{\partial \ell}{\partial \kappa }\!\therefore\!  \widehat{ \kappa  } +   \frac{\partial \ell}{\partial \gamma  }\!\therefore\!   \widehat{ \gamma   } =\ell \delta }.
\]
\end{lemma}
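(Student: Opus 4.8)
The plan is to deduce the first identity from combining material covariance (Proposition~\ref{epsilon_N}, i.e.\ \eqref{scrL_to_ell}) with spacetime covariance \eqref{spacetime_equivariance}, and then to obtain the two pointwise consequences by differentiating this identity with respect to $\psi$. First I would fix $\psi\in\operatorname{Diff}(\mathcal{M})$ and an arbitrary world-tube $\Phi:\mathcal{D}\to\mathcal{M}$, and set $w=\Phi_*W$, $\kappa=\Phi_*K$. Applying \eqref{scrL_to_ell} to the world-tube $\psi\circ\Phi$, whose image is $\psi(\mathcal{N})$ with $\mathcal{N}=\Phi(\mathcal{D})$, and noting that the generalized velocity and Eulerian tensor field of $\psi\circ\Phi$ are $(\psi\circ\Phi)_*W=\psi_*w$ and $(\psi\circ\Phi)_*K=\psi_*\kappa$, gives
\[
\mathscr{L}\bigl(j^1(\psi\circ\Phi),W,K,\gamma\circ\psi\circ\Phi\bigr)=(\psi\circ\Phi)^*\bigl[\ell\bigl(\psi_*w,\psi_*\kappa,\gamma\bigr)\bigr]=\Phi^*\psi^*\bigl[\ell\bigl(\psi_*w,\psi_*\kappa,\gamma\bigr)\bigr].
\]
On the other hand, spacetime covariance \eqref{spacetime_equivariance} applied with the spacetime tensor field $\psi^{-1}_*\gamma$ in place of $\gamma$ (equivalently, reading \eqref{spacetime_equivariance} backwards) yields $\mathscr{L}(j^1(\psi\circ\Phi),W,K,\gamma\circ\psi\circ\Phi)=\mathscr{L}(j^1\Phi,W,K,(\psi^{-1}_*\gamma)\circ\Phi)$, and then \eqref{scrL_to_ell} again rewrites the right-hand side as $\Phi^*[\ell(w,\kappa,\psi^{-1}_*\gamma)]$. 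Relabeling $\gamma\mapsto\psi_*\gamma$ and comparing the two expressions, and using that $\Phi^*$ is injective on forms supported on $\mathcal{N}$ (since $\Phi:\mathcal{D}\to\mathcal{N}$ is a diffeomorphism), gives $\psi^*[\ell(\psi_*w,\psi_*\kappa,\psi_*\gamma)]=\ell(w,\kappa,\gamma)$ for all $\psi$, which is the first claim.

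For the infinitesimal consequences, I would take a one-parameter family $\psi_\varepsilon$ of diffeomorphisms with $\psi_0=\mathrm{id}$ and $\left.\tfrac{d}{d\varepsilon}\right|_{0}\psi_\varepsilon=\zeta$ an arbitrary vector field on $\mathcal{M}$, and differentiate the invariance identity at $\varepsilon=0$. The left-hand side, written as a bundle map evaluated on the pushed-forward fields and then pulled back, differentiates by the Leibniz rule into three types of terms: the variation of the arguments, which produces $-\pounds_\zeta w$, $-\pounds_\zeta\kappa$, $-\pounds_\zeta\gamma$ contracted with the corresponding partial derivatives of $\ell$ (note the push-forward contributes a $+$ and the outer pull-back a $-$, but here it is cleanest to differentiate $\psi_\varepsilon^*[\ell(\psi_{\varepsilon*}w,\dots)]$ directly); plus the term $\pounds_\zeta[\ell(w,\kappa,\gamma)]$ coming from the outer $\psi_\varepsilon^*$; the whole thing must vanish since the right-hand side is $\varepsilon$-independent. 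Using \eqref{nabla_derivative_ell} to expand $\nabla_\zeta[\ell(w,\kappa,\gamma)]$ and the identity $\pounds_\zeta\ell=\operatorname{div}(\ell\zeta)$ for top forms (cf.\ \eqref{example_div}), together with the global Lie-derivative formula \eqref{global_formula} $\pounds_\zeta\sigma=\nabla_\zeta\sigma+\widehat\sigma:\nabla\zeta$ applied to $\kappa$ and $\gamma$ and the analogous formula $\pounds_\zeta w=\nabla_\zeta w-w\otimes\nabla\zeta$ (equivalently $\widehat{w}:\nabla\zeta=-w\otimes\nabla\zeta$ as noted at the end of the proof of Theorem~\ref{spacetime_reduced_EL}), all the $\nabla_\zeta$-terms cancel against the $\nabla_\zeta$-part of \eqref{nabla_derivative_ell}, leaving a pointwise identity of the schematic form $\tfrac{\partial^\nabla\ell}{\partial x}\cdot\zeta+\bigl(\ell\delta+w\otimes\tfrac{\partial\ell}{\partial w}-\tfrac{\partial\ell}{\partial\kappa}:\widehat\kappa-\tfrac{\partial\ell}{\partial\gamma}:\widehat\gamma\bigr):\nabla\zeta=0$ (after an integration-by-parts bookkeeping step as in Lemma~\ref{technical_lemma}), valid for all $\zeta$. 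Taking $\zeta$ arbitrary with $\nabla\zeta=0$ at a point forces $\tfrac{\partial^\nabla\ell}{\partial x}=0$; taking $\zeta=0$ with $\nabla\zeta$ arbitrary symmetric/antisymmetric at a point forces the tensorial coefficient to vanish, i.e.\ $-w\otimes\tfrac{\partial\ell}{\partial w}+\tfrac{\partial\ell}{\partial\kappa}:\widehat\kappa+\tfrac{\partial\ell}{\partial\gamma}:\widehat\gamma=\ell\delta$.

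The main obstacle I anticipate is purely bookkeeping rather than conceptual: carefully tracking signs and which index of $\widehat\kappa$, $\widehat\gamma$ gets contracted when passing between the $\pounds_\zeta$ form and the $\nabla\zeta$-contracted form, and making sure the ``divergence'' total-derivative terms that appear (analogous to the last term in \eqref{formula_Lemma_global}) are genuinely of the form $\operatorname{div}(\,\cdot\,\zeta)$ so that they integrate to a boundary term and do not contribute to the pointwise identity. A clean way to sidestep the integration-by-parts subtlety is to argue directly at the level of bundle maps in local coordinates: expand $\left.\tfrac{d}{d\varepsilon}\right|_0$ of $\bar\ell\bigl(\psi_\varepsilon^{-1},\ \psi_{\varepsilon*}w,\ \psi_{\varepsilon*}\kappa,\ \psi_{\varepsilon*}\gamma\bigr)\cdot J(\psi_\varepsilon^{-1})$ using the coordinate formulas \eqref{Lie_der}, separate the terms proportional to $\zeta^\mu$ (these give $\tfrac{\partial^\nabla\ell}{\partial x}=0$) from those proportional to $\partial_\nu\zeta^\mu$ (these give the tensorial identity), and invoke that $\zeta^\mu$ and $\partial_\nu\zeta^\mu$ can be prescribed independently at any point. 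This avoids any boundary terms altogether and makes the two displayed consequences fall out immediately.
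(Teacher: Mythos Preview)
Your proposal is correct and follows essentially the same route as the paper: derive the invariance identity for $\ell$ by combining \eqref{scrL_to_ell} with \eqref{spacetime_equivariance}, then differentiate along a path $\psi_\varepsilon$ through the identity and separate the coefficients of $\zeta$ and $\nabla\zeta$. One small simplification: no integration-by-parts step is needed at all. The paper's argument is purely pointwise from the start: differentiating $\ell(\psi_\varepsilon^*w,\psi_\varepsilon^*\kappa,\psi_\varepsilon^*\gamma)=\psi_\varepsilon^*[\ell(w,\kappa,\gamma)]$ gives directly $\frac{\partial\ell}{\partial w}\cdot\pounds_\zeta w+\frac{\partial\ell}{\partial\kappa}\cdot\pounds_\zeta\kappa+\frac{\partial\ell}{\partial\gamma}\cdot\pounds_\zeta\gamma=\operatorname{div}(\ell\zeta)=\nabla_\zeta\ell+\ell\delta:\nabla\zeta$, and then \eqref{global_formula} and \eqref{nabla_derivative_ell} cancel the $\nabla_\zeta$-terms algebraically, leaving exactly your ``schematic'' identity with no boundary terms ever appearing. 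Your ``clean sidestep'' via local coordinates is in fact the whole argument, not an alternative to it.
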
 
\begin{proof} The first assertion follows from the relation \eqref{scrL_to_ell} combined with the spacetime covariance condition \eqref{spacetime_equivariance}. Taking a path of diffeomorphisms $ \psi _ \varepsilon  \in \operatorname{Diff}( \mathcal{M} )$ passing through the identity at $\varepsilon =0$ and taking the $ \varepsilon $-derivative of the condition $\ell (  \psi_ \varepsilon ^\ast w, \psi_ \varepsilon ^\ast \kappa , \psi _ \varepsilon ^\ast \gamma )= \psi _ \varepsilon ^*[\ell(w, \kappa , \gamma )]$, we get $ \frac{\partial \ell}{\partial w} \cdot \pounds _ \zeta w + \textcolor{black}{ \frac{\partial \ell}{\partial  \kappa } : \pounds _ \zeta  \kappa + \frac{\partial \ell}{\partial  \gamma } : \pounds _ \zeta \gamma }= \operatorname{div}(\ell \zeta )$, for the vector field $ \zeta = \left. \frac{d}{d\varepsilon}\right|_{\varepsilon=0} \psi _ \varepsilon $, where $\operatorname{div}(\ell \zeta )$ is the divergence of the vector field density $\ell \zeta $, see \eqref{def_div}--\eqref{example_div}. Choosing a torsion free covariant derivative $ \nabla $ and using the formulas \eqref{global_formula}, $ \operatorname{div}(\ell \zeta ) = \nabla _ \zeta [\ell( \cdot )] + \ell  \operatorname{div}^ \nabla\!\zeta=\nabla _ \zeta [\ell( \cdot )] + \ell \delta :  \nabla \zeta  $, and \eqref{nabla_derivative_ell},  the requested identity follows since the vector field $ \zeta $ is arbitrary.
\end{proof}

In the next corollary, we obtain that the reduced spacetime Euler-Lagrange equations can be written exclusively in terms of the partial derivative with respect to $ \gamma $, when $\mathscr{L}$ satisfies both covariance properties. We recall that $ \gamma $ is not a variable in the spacetime description, but a given fixed parameter.

\begin{corollary}\label{corol_spacetime} Assume that the Lagrangian density \eqref{Lagrangian_K} is material and spacetime covariant. Then the reduced Euler-Lagrange equations \eqref{spacetime_EL} are equivalently written as
\begin{equation}\label{spacetime_EL_sc_gamma} \color{black} 
\left\{
\begin{array}{l}
\displaystyle\vspace{0.2cm}\operatorname{div}^ \nabla \!\Big(\frac{\partial \ell}{\partial \gamma  }\!\therefore\!  \widehat{ \gamma   }\Big) =\frac{\partial \ell}{\partial \gamma } : \nabla  \gamma\\
\displaystyle\vspace{0.2cm} i_{ \partial \mathcal{N} }^*\Big(\operatorname{tr}\Big(\frac{\partial \ell}{\partial \gamma  }\!\therefore\!  \widehat{ \gamma   } \Big) \cdot \zeta\Big)=0, \;\;\forall \zeta \quad \text{on} \quad\textcolor{black}{\partial_{\rm cont} \mathcal{N}} \\
\displaystyle\pounds _w \kappa =0.
\end{array}
\right.
\end{equation}
The local form of the boundary conditions is
\[
\textcolor{black}{\Big( \frac{\partial \bar{\ell}}{\partial \gamma  }\!\therefore\!  \widehat{ \gamma   }\Big)^\mu_\nu d^nx_\mu=0}.
\]
\end{corollary}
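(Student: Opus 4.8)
The plan is to combine the general reduced Euler--Lagrange equations of Theorem~\ref{spacetime_reduced_EL} with the algebraic identity furnished by Lemma~\ref{spacetime_mat_ell}, which holds precisely because $\mathscr{L}$ is assumed both materially and spacetime covariant. First I would take the equations \eqref{spacetime_EL} as the starting point and substitute into them the two consequences recorded in Lemma~\ref{spacetime_mat_ell}, namely
\[
\frac{\partial ^\nabla\!\ell}{\partial x}=0
\qquad\text{and}\qquad
\ell\,\delta + w\otimes \frac{\partial \ell}{\partial w} - \frac{\partial \ell}{\partial \kappa}:\widehat{\kappa}
= \frac{\partial \ell}{\partial \gamma}:\widehat{\gamma}.
\]
The first identity kills the term $\partial^\nabla\!\ell/\partial x$ appearing on the right-hand side of the bulk equation in \eqref{spacetime_EL}. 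The second identity is exactly the statement that the $(1,1)$ tensor field density $\ell\,\delta + w\otimes \partial\ell/\partial w - (\partial\ell/\partial\kappa):\widehat{\kappa}$ whose divergence appears on the left-hand side of \eqref{spacetime_EL}, and whose trace appears in the boundary term, coincides with $(\partial\ell/\partial\gamma):\widehat{\gamma}$. Making these two replacements term by term in the three lines of \eqref{spacetime_EL} yields precisely the three lines of \eqref{spacetime_EL_sc_gamma}; the third line $\pounds_w\kappa=0$ is untouched since it is a pure transport equation not involving $\ell$.

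For the boundary condition I would note that it is obtained from the general boundary condition in \eqref{spacetime_EL} by the same substitution: one replaces the bracketed $(1,1)$ density inside $\operatorname{tr}(\,\cdot\,)$ by $(\partial\ell/\partial\gamma):\widehat{\gamma}$, using the second identity of Lemma~\ref{spacetime_mat_ell} pointwise, and the claimed form
\[
\Big( \frac{\partial \bar{\ell}}{\partial \gamma}: \widehat{\gamma}\Big)^\mu_\nu\, d^nx_\mu=0
\]
then follows directly from the local form of the boundary condition already displayed in Theorem~\ref{spacetime_reduced_EL}~(iv), since $\operatorname{tr}$ of a $(1,1)$ density $T$ is by definition $T^\mu_\nu\, dx^\nu\otimes d^nx_\mu$ and the arbitrariness of $\zeta$ on $\partial\mathcal{N}$ removes the $dx^\nu$ factor.

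There is essentially no obstacle here: the corollary is a direct algebraic consequence of the theorem together with the already-proved lemma, and the proof amounts to a substitution. The only point deserving a word of care is that the identity of Lemma~\ref{spacetime_mat_ell} is an identity of \emph{bundle maps} (equivalently, an identity holding pointwise in the fibre variables $w$, $\kappa$, $\gamma$), so it may be used inside both the divergence term and the trace term of \eqref{spacetime_EL} without any integration-by-parts subtlety; I would remark this explicitly. One may also observe, for completeness, that the right-hand side $\frac{\partial \ell}{\partial \gamma}\cdot\nabla\gamma$ of the bulk equation is left exactly as in \eqref{spacetime_EL}, so that the whole content of the corollary is that the divergence on the left, which a priori involves $\ell$, $w$ and $\kappa$, can be re-expressed using $\gamma$ alone.
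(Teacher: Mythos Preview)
Your proposal is correct and follows exactly the approach intended by the paper: the corollary is an immediate consequence of substituting the two identities of Lemma~\ref{spacetime_mat_ell} (which encodes spacetime covariance of $\ell$) into the reduced Euler--Lagrange equations \eqref{spacetime_EL} of Theorem~\ref{spacetime_reduced_EL}. The paper itself gives no separate proof for this corollary, treating it as a direct algebraic substitution, which is precisely what you do.
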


\medskip 

These results stated above still hold when material covariance is satisfied only with respect to the isotropy subgroup $ \operatorname{Diff}_{W,K}( \mathcal{D} )$ of the material tensor fields, see Remark \ref{remark_W_K}. Spacetime covariance with respect to the whole group $ \operatorname{Diff}( \mathcal{M} )$ is however needed.

\paragraph{The case of a Lorentzian metric on spacetime.} The equations obtained above hold for any spacetime tensor field $ \gamma $, not necessarily a metric. We shall consider in this paragraph the special case in which $ \gamma $ is a Lorentzian metric, denoted $ \gamma = \mathsf{g}$, hence the Lagrangian density, is a bundle map of the form
\begin{equation}\label{Lagrangian_K_g} 
\mathscr{L}: J^1( \mathcal{D} \times \mathcal{M} ) \times T \mathcal{D} \times T^p_q \mathcal{D} \times S^2_L \mathcal{M} \rightarrow \wedge ^{n+1} \mathcal{D},
\end{equation}
written as $\mathscr{L}(j^1\Phi, W,K, \mathsf{g} \circ \Phi  )$.

We shall need the following result to write the boundary condition when a Lorentzian metric is given.
Recall that a hypersurface $ \Sigma $ in $ \mathcal{M} $ is called nondegenerate if the induced metric $i_ \Sigma ^* \mathsf{g}$ is nondegenerate with $i_ \Sigma : \Sigma \rightarrow \mathcal{M} $ the inclusion. Given a normal vector field $n$ to a nondegenerate hypersurface $ \Sigma $, we use the notation
\[
\epsilon = \mathsf{g}(n,n) \in \{\pm1\}.
\]
Such an hypersurface is said to be timelike, resp., spacelike, if $i_ \Sigma ^* \mathsf{g}$ is Lorentzian ($ \epsilon =1$), resp. Riemannian ($ \epsilon =-1$).

\begin{lemma}\label{Lemma_boundary} Let $T$ be a $(1,1)$ tensor field density and assume that the piecewise smooth boundary $ \partial \mathcal{N} $ consists of nondegenerate hypersurfaces with respect to the Lorentzian metric $g$. Then we have
\begin{itemize}
\item[\bf (1)] $i_{ \partial \mathcal{N} } ^* \big[ \operatorname{tr}(T) \cdot \zeta \big]= \epsilon i_{ \partial \mathcal{N} } ^* \big[ \mathbf{i} _n(T( \zeta , n^\flat))\big]$, 
for all $ \zeta \in \mathfrak{X} ( \mathcal{N} )$, with $n$ a unit vector normal to $ \Sigma $ and $ \epsilon = \mathsf{g}(n,n) \in \{\pm1\}$.
\item[\bf (2)] $i_{ \partial \mathcal{N} } ^* \big[ \operatorname{tr}(T) \cdot \zeta \big]=0,\forall\; \zeta \in T \mathcal{N} |_{ \partial \mathcal{N} } $ $ \Longleftrightarrow $ $ T( \zeta , n^\flat )=0$ on $ \partial \mathcal{N} $, $\forall\; \zeta \in T \mathcal{N} |_{ \partial \mathcal{N} }$.
\end{itemize}
\textcolor{black}{Here $ \flat$ is the flat operator associated to $\mathsf{g}$.}
\end{lemma}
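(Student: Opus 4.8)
The plan is to work pointwise on a nondegenerate hypersurface $\Sigma$ of $\partial\mathcal{N}$, exploiting the orthogonal decomposition $T_x\mathcal{M} = T_x\Sigma \oplus \mathbb{R}n$ furnished by the unit normal $n$. First I would unwind the definition of $\operatorname{tr}(T)$: writing $T = T^\mu_\nu\,\partial_\mu\otimes dx^\nu\otimes d^{n+1}x$, the object $\operatorname{tr}(T)\cdot\zeta = T^\mu_\nu\zeta^\nu\,d^nx_\mu$ is an $n$-form on $\mathcal{M}$, where $d^nx_\mu = \mathbf{i}_{\partial_\mu}d^{n+1}x$. The point of part (1) is to identify, after pulling back by $i_{\partial\mathcal{N}}$, this $n$-form with $\epsilon\,i_{\partial\mathcal{N}}^*[\mathbf{i}_n(T(\zeta,n^\flat))]$, i.e.\ with $\epsilon$ times the restriction of the volume form of $\mathcal{M}$ contracted against $n$, scaled by the scalar $T(\zeta,n^\flat) = T^\mu_\nu\zeta^\nu n_\mu$. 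The key identity I would establish is that for any $n$-form $\omega$ on $\mathcal{M}$, its pullback to $\Sigma$ depends only on the ``normal component'' in the sense $i_\Sigma^*\omega = i_\Sigma^*(\mathbf{i}_n(n^\flat\wedge\omega))\cdot(\text{something})$; more directly, I would use the standard fact that $i_\Sigma^*(\mathbf{i}_{\partial_\mu}d^{n+1}x)$ is, up to the factor $\epsilon n_\mu$, the induced volume form on $\Sigma$ times $\mathbf{i}_n d^{n+1}x$ restricted. Concretely: decompose $d^{n+1}x$ as (normal one-form)$\,\wedge\,$(tangential $n$-form); since $n^\flat = n_\mu dx^\mu$ satisfies $\mathbf{i}_n n^\flat = \epsilon$, one has $\mathbf{i}_n d^{n+1}x$ spanning the tangential $n$-forms and $i_\Sigma^*(dx^\mu) $ killing the normal direction only through its $n^\flat$ part. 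Then $i_\Sigma^*(d^nx_\mu) = i_\Sigma^*(\mathbf{i}_{\partial_\mu}d^{n+1}x) = \epsilon\, n_\mu\, i_\Sigma^*(\mathbf{i}_n d^{n+1}x)$, which is the crux.

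Given that pointwise linear-algebra identity, part (1) follows by contracting with $T^\mu_\nu\zeta^\nu$:
\[
i_{\partial\mathcal{N}}^*\big[\operatorname{tr}(T)\cdot\zeta\big] = i_{\partial\mathcal{N}}^*\big[T^\mu_\nu\zeta^\nu\,d^nx_\mu\big] = \epsilon\, i_{\partial\mathcal{N}}^*\big[T^\mu_\nu\zeta^\nu n_\mu\,\mathbf{i}_n d^{n+1}x\big] = \epsilon\, i_{\partial\mathcal{N}}^*\big[\mathbf{i}_n\big(T(\zeta,n^\flat)\big)\big],
\]
where in the last step $T(\zeta,n^\flat) = T^\mu_\nu\zeta^\nu n_\mu$ is a scalar function and $\mathbf{i}_n$ of that scalar times nothing is interpreted as the scalar times $\mathbf{i}_n d^{n+1}x$ — I would be careful to state precisely what ``$\mathbf{i}_n(T(\zeta,n^\flat))$'' abbreviates (namely $T(\zeta,n^\flat)\,\mathbf{i}_n\mu(\mathsf{g})$ up to the density normalization, matching the paper's conventions for vector field densities). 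The nondegeneracy of $\Sigma$ is exactly what guarantees $\mathbf{i}_n d^{n+1}x$ restricts to a nonvanishing top form on $\Sigma$, so no information is lost.

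For part (2), the $\Leftarrow$ direction is immediate from part (1): if $T(\zeta,n^\flat)=0$ on $\partial\mathcal{N}$ for all $\zeta\in T\mathcal{N}|_{\partial\mathcal{N}}$, then the right-hand side of the identity in (1) vanishes, hence so does $i_{\partial\mathcal{N}}^*[\operatorname{tr}(T)\cdot\zeta]$. For $\Rightarrow$, suppose $i_{\partial\mathcal{N}}^*[\operatorname{tr}(T)\cdot\zeta]=0$ for all such $\zeta$; by part (1) this says $\epsilon\, T(\zeta,n^\flat)\,i_{\partial\mathcal{N}}^*[\mathbf{i}_n d^{n+1}x]=0$ for every $\zeta$. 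Since $\Sigma$ is nondegenerate, $i_{\partial\mathcal{N}}^*[\mathbf{i}_n d^{n+1}x]$ is a nowhere-vanishing top form on each hypersurface piece of $\partial\mathcal{N}$, and $\epsilon=\pm1\neq0$; therefore the scalar $T(\zeta,n^\flat)$ must vanish identically on $\partial\mathcal{N}$ for every $\zeta\in T\mathcal{N}|_{\partial\mathcal{N}}$, which is the claim. I would note that the argument applies piece-by-piece on the piecewise-smooth boundary, each piece being either spacelike or timelike, with $\epsilon$ possibly differing between pieces but always $\pm1$.

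The main obstacle is bookkeeping with densities rather than ordinary forms: $T$ is a $(1,1)$ tensor \emph{density}, so $\operatorname{tr}(T)\cdot\zeta$ is genuinely an $n$-form (not an $n$-form density), and I must make sure the identity $i_\Sigma^*(d^nx_\mu)=\epsilon n_\mu\,i_\Sigma^*(\mathbf{i}_n d^{n+1}x)$ is stated with the correct normalization of $n^\flat$ and of the ambient volume — in particular that $n$ is a \emph{unit} normal ($\mathsf{g}(n,n)=\epsilon$) is what pins down the factor. A secondary subtlety is that $\mathbf{i}_n d^{n+1}x$ is not literally the metric-induced volume form of $\Sigma$ but differs from it by the Jacobian relating $d^{n+1}x$ to $\mu(\mathsf{g})$; however, since both sides of the asserted identities are built from the same $d^{n+1}x$, this factor is common and harmless, so I would phrase the whole proof purely in terms of $d^{n+1}x$ and $\mathbf{i}_n$, invoking the metric only through $n^\flat$ and $\epsilon$. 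Once these conventions are fixed, everything reduces to the one pointwise computation in the decomposition $T_x\mathcal{M}=T_x\Sigma\oplus\mathbb{R}n$.
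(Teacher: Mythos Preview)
Your proposal is correct and follows essentially the same route as the paper. The paper factors $T=\mathsf{T}\otimes\mu(\mathsf{g})$, sets $X=\mathsf{T}(\zeta,\cdot)$, writes $\operatorname{tr}(T)\cdot\zeta=\mathbf{i}_X\mu(\mathsf{g})$, and then decomposes $X=\epsilon\,\mathsf{g}(X,n)n+X^{\|}$ so that the tangential part pulls back to zero on $\Sigma$; your coordinate identity $i_\Sigma^*(d^nx_\mu)=\epsilon\,n_\mu\,i_\Sigma^*(\mathbf{i}_n d^{n+1}x)$ is exactly this decomposition applied to each basis vector $\partial_\mu$, and your part~(2) matches the paper's verbatim (nonvanishing of the induced top form forces the scalar to vanish). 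One small clarification: $T(\zeta,n^\flat)$ is already an $(n{+}1)$-form because $T$ is a density, so $\mathbf{i}_n(T(\zeta,n^\flat))$ is literally the $n$-form $T^\mu_\nu\zeta^\nu n_\mu\,\mathbf{i}_n d^{n+1}x$ with no separate $\mu(\mathsf{g})$ needed---your hedging about ``density normalization'' is unnecessary once you keep everything in terms of $d^{n+1}x$ as you propose.
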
 
\begin{proof} We write $T= \mathsf{T} \otimes \mu (\mathsf{g})$, where $ \mu (\mathsf{g})$ is the volume form associated to $\mathsf{g}$ and $\mathsf{T}$ is a $(1,1)$ tensor field. To prove \textbf{(1)} we consider the vector field $X=  \mathsf{T}( \zeta , \cdot )$. We assume that the boundary consists of spacelike and timelike pieces. Let $n$ be the outward pointing normal vector field to $ \partial \mathcal{N} $. We decompose $X$ as $X= \epsilon \mathsf{g}(X,n)n+ X^{\|}$ with $\epsilon =\mathsf{g}(n,n)$. We compute
\begin{align*} 
i_{ \partial \mathcal{N} } ^*\big[ \operatorname{tr}(T) \cdot \zeta \big]&= i_{ \partial \mathcal{N} } ^* \big[ \mathbf{i}_{\partial _ \mu }T(\zeta, dx^\mu) \big] = i_{ \partial \mathcal{N} } ^* \big[X(dx^\mu)  \mathbf{i}_{ \partial _ \mu } \mu (\mathsf{g})\big]\\
&=i_{ \partial \mathcal{N} } ^* \big[ \mathbf{i}_{ X } \mu (\mathsf{g})\big]= i_{ \partial \mathcal{N} } ^*\big [ \mathbf{i}_{  \epsilon g(X,n)n+ X^{\|} } \mu (\mathsf{g}) \big ]\\
&=  \epsilon  i_{ \partial \mathcal{N} } ^* \big[  \langle X,n ^\flat  \rangle\mathbf{i}_{  n} \mu (\mathsf{g}) \big] = \epsilon  i_{ \partial \mathcal{N} } ^* \big[ \mathbf{i}_{  n} (T(\zeta ,n^\flat))\big],
\end{align*}
for all $ \zeta \in \mathfrak{X} ( \mathcal{N} )$. To prove \textbf{(2)}, we note that
\[
\epsilon  i_{ \partial \mathcal{N} } ^* \big[ \mathbf{i}_{  n} (T(\zeta ,n^\flat))\big]= \epsilon  i_{ \partial \mathcal{N} } ^* \big[ \mathsf{T}(\zeta ,n^\flat)\mathbf{i}_{  n}  \mu (\mathsf{g})\big] =\epsilon \mathsf{T}(\zeta ,n^\flat)|_{ \partial \mathcal{N} } \, i_{ \partial \mathcal{N} } ^* (\mathbf{i}_{  n}  \mu (\mathsf{g})).
\]
Since $i_{ \partial \mathcal{N} } ^* (\mathbf{i}_{  n}  \mu (\mathsf{g}))$ is a volume form on the boundary $i_{ \partial \mathcal{N} } ^* \big[ \mathbf{i}_{  n} (T(\zeta ,n^\flat))\big]=0, \forall \,\zeta $ is equivalent to $\mathsf{T}(\zeta ,n^\flat)|_{ \partial \mathcal{N} }=0, \forall\, \zeta $. The latter is in turn equivalent to $ T( \zeta , n^\flat )=0$ on $ \partial \mathcal{N} $, $\forall\, \zeta \in T \mathcal{N} |_{ \partial \mathcal{N} }$.
\end{proof}

\begin{corollary}
Assume that the Lagrangian density \eqref{Lagrangian_K_g} is \textcolor{black}{materially covariant} and consider the associated spacetime Lagrangian density $\ell(w, \kappa , \mathsf{g})$. Then the reduced Euler-Lagrange equations are
\begin{equation}\label{spacetime_EL_g}  \color{black} 
\left\{
\begin{array}{l}
\displaystyle\vspace{0.2cm}\operatorname{div}^ \nabla \!\Big(  \ell \delta  + w \otimes \frac{\partial \ell}{\partial w} -   \frac{\partial \ell}{\partial \kappa }\!\therefore\! \widehat{ \kappa  }\Big)  =  \frac{\partial ^ \nabla \ell}{\partial x}\\
\displaystyle\vspace{0.2cm}  \Big(\ell  \delta  + w \otimes \frac{\partial  \ell}{\partial w}-   \frac{\partial  {\ell}}{\partial \kappa }\!\therefore\!  \widehat{ \kappa  }\Big)( \cdot  , n^\flat) =0 \quad  \text{on} \quad  \textcolor{black}{\partial_{\rm cont} \mathcal{N}}\\
\displaystyle\pounds _w \kappa =0,
\end{array}
\right.
\end{equation} 
where $ \nabla $ is the Levi-Civita covariant derivative and $n$ is a unit normal vector field to $ \partial_{\rm cont} \mathcal{N} $, all associated to $\mathsf{g}$.

If, in addition, the material Lagrangian density is also spacetime covariant, then these equations can be written as
\begin{equation}\label{spacetime_EL_sc_g} 
\operatorname{div}^ \nabla \! \frac{\partial \ell}{\partial \mathsf{g}} =0, \qquad \frac{\partial \ell}{\partial \mathsf{g}}( \cdot , n ^\flat )=0 \quad \text{on} \quad  \partial \mathcal{N}, \qquad \pounds _w \kappa =0.
\end{equation} 
\end{corollary}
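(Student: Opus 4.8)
The plan is to obtain both parts by specializing the general results of this section to the case $\gamma=\mathsf{g}$ and by exploiting that the presence of a metric lets us pick $\nabla$ to be its Levi-Civita connection.

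For the first part, I would apply Theorem \ref{spacetime_reduced_EL} with the spacetime tensor field taken to be the Lorentzian metric, $\gamma=\mathsf{g}$, so that equations \eqref{spacetime_EL} hold for the associated spacetime Lagrangian $\ell(w,\kappa,\mathsf{g})$. Choosing $\nabla$ to be the Levi-Civita connection of $\mathsf{g}$, we have $\nabla\mathsf{g}=0$, which annihilates the term $\frac{\partial\ell}{\partial\gamma}\cdot\nabla\gamma$ on the right-hand side of the bulk equation in \eqref{spacetime_EL} and yields the first and third lines of \eqref{spacetime_EL_g}. The only genuine work is to recast the boundary condition $i_{\partial\mathcal{N}}^*\big(\operatorname{tr}(T)\cdot\zeta\big)=0$, $\forall\zeta$, with $T=\ell\delta+w\otimes\frac{\partial\ell}{\partial w}-\frac{\partial\ell}{\partial\kappa}:\widehat{\kappa}$, in terms of the unit normal; since $\partial\mathcal{N}$ is assumed to consist of nondegenerate hypersurfaces, this is exactly Lemma \ref{Lemma_boundary}~\textbf{(2)}, which converts it into $T(\cdot,n^\flat)=0$ on $\partial\mathcal{N}$, i.e. the second line of \eqref{spacetime_EL_g}.

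For the second part, I would invoke Lemma \ref{spacetime_mat_ell} (equivalently Corollary \ref{corol_spacetime}): spacetime covariance gives $\frac{\partial^\nabla\!\ell}{\partial x}=0$ together with $\ell\delta+w\otimes\frac{\partial\ell}{\partial w}-\frac{\partial\ell}{\partial\kappa}:\widehat{\kappa}=\frac{\partial\ell}{\partial\mathsf{g}}:\widehat{\mathsf{g}}$, so that the tensor density $T$ in \eqref{spacetime_EL_g} equals $\frac{\partial\ell}{\partial\mathsf{g}}:\widehat{\mathsf{g}}$ and the bulk right-hand side vanishes. It then remains to identify $\frac{\partial\ell}{\partial\mathsf{g}}:\widehat{\mathsf{g}}$ with a multiple of $\frac{\partial\ell}{\partial\mathsf{g}}$. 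Reading off $\widehat{\mathsf{g}}$ from the Lie-derivative formula for a symmetric $(0,2)$-tensor in \S\ref{technical_prelim}, one gets $\widehat{\mathsf{g}}{}^{\mu}_{\alpha\beta\nu}=\delta^\mu_\alpha\mathsf{g}_{\nu\beta}+\delta^\mu_\beta\mathsf{g}_{\alpha\nu}$, and since $\partial\ell/\partial\mathsf{g}$ is symmetric this gives $\frac{\partial\ell}{\partial\mathsf{g}}:\widehat{\mathsf{g}}=2\,\frac{\partial\ell}{\partial\mathsf{g}}\cdot\mathsf{g}$. Because $\nabla$ is metric, $\mathsf{g}$ passes through $\operatorname{div}^\nabla$, so $\operatorname{div}^\nabla\!\big(\frac{\partial\ell}{\partial\mathsf{g}}:\widehat{\mathsf{g}}\big)=2\,\big(\operatorname{div}^\nabla\frac{\partial\ell}{\partial\mathsf{g}}\big)\cdot\mathsf{g}$, and nondegeneracy of $\mathsf{g}$ lets us cancel the factor $\mathsf{g}$, giving $\operatorname{div}^\nabla\frac{\partial\ell}{\partial\mathsf{g}}=0$. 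Likewise $\big(\frac{\partial\ell}{\partial\mathsf{g}}:\widehat{\mathsf{g}}\big)(\cdot,n^\flat)=2\,\big(\frac{\partial\ell}{\partial\mathsf{g}}\cdot\mathsf{g}\big)(\cdot,n^\flat)=0$ together with nondegeneracy of $\mathsf{g}$ yields $\frac{\partial\ell}{\partial\mathsf{g}}(\cdot,n^\flat)=0$, and $\pounds_w\kappa=0$ is carried over unchanged; this is \eqref{spacetime_EL_sc_g}.

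The main obstacle, such as it is, is purely the index bookkeeping in the last step: fixing the precise form of $\widehat{\mathsf{g}}$ and the ensuing factor of $2$, being careful with the symmetrization convention implicit in $\partial\ell/\partial\mathsf{g}$ when differentiating with respect to a symmetric tensor, and justifying that $\mathsf{g}$ can be moved in and out of $\operatorname{div}^\nabla$ and then removed, which is precisely where the choice of the Levi-Civita connection and the nondegeneracy of $\mathsf{g}$ are used. Everything else is a direct specialization of Theorem \ref{spacetime_reduced_EL}, Lemma \ref{spacetime_mat_ell} (equivalently Corollary \ref{corol_spacetime}), and Lemma \ref{Lemma_boundary}.
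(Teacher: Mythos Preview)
Your proposal is correct and follows essentially the same route as the paper: specialize Theorem~\ref{spacetime_reduced_EL} with $\gamma=\mathsf{g}$ and $\nabla$ Levi-Civita so that $\nabla\mathsf{g}=0$ kills the right-hand side, invoke Lemma~\ref{Lemma_boundary} for the boundary condition, and then in the spacetime-covariant case use Corollary~\ref{corol_spacetime} together with the explicit computation $\big(\frac{\partial\ell}{\partial\mathsf{g}}:\widehat{\mathsf{g}}\big)^\mu_\nu=2\,\mathsf{g}_{\nu\gamma}\frac{\partial\ell}{\partial\mathsf{g}_{\gamma\mu}}$ from \eqref{hat_kappa}. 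Your additional remarks on passing $\mathsf{g}$ through $\operatorname{div}^\nabla$ and using nondegeneracy to strip it off are the natural way to make the last step explicit.
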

\begin{proof} \textcolor{black}{We note that $\partial_{\rm cont} \mathcal{N}$ is made of timelike (hence nondegenerate) hypersurfaces}. When $ \gamma =\mathsf{g}$ is a Lorentzian metric, we can choose $ \nabla $ as the Levi-Civita covariant derivative of $\mathsf{g}$, so that $ \nabla \mathsf{g}=0$. Hence the first equation in \eqref{spacetime_EL} reduces to the first equation of \eqref{spacetime_EL_sc_g}. The boundary condition is obtained by using Lemma \ref{Lemma_boundary}.

In the spacetime equivariant case, we note that for $ \gamma =\mathsf{g}$,  we have
\[
\textcolor{black}{ \Big(   \frac{\partial \ell}{\partial \mathsf{g}  }\!\therefore\!  \widehat{ \mathsf{g}   } \Big) ^\mu_\nu}= 2\mathsf{g}_{\nu \gamma } \frac{\partial \ell}{\partial \mathsf{g}_{ \gamma \mu }} \]
from \eqref{hat_kappa}.  Hence, using also Lemma \ref{Lemma_boundary}, \eqref{spacetime_EL_sc_gamma} becomes \eqref{spacetime_EL_sc_g}.
\end{proof}



\subsection{Reduced convective Euler-Lagrange equations}

In this section we derive the convective form of the Hamilton principle \eqref{Ham_principle} for relativistic continua under the assumption of spacetime covariance. We use the relation \eqref{scrL_to_calL} between the material Lagrangian density and its convective version as well as the definition of the material form of the given spacetime tensor field.

\begin{theorem}\label{convective_reduced_EL} Let $ \mathscr{L} : J^1( \mathcal{D}  \times \mathcal{M} ) \times T \mathcal{D}  \times T^p_q\mathcal{D}  \times T^r_s \mathcal{M}  \rightarrow \wedge ^{n+1} \mathcal{D} $ be a spacetime covariant Lagrangian density and consider the associated covariantly reduced convective Lagrangian density $\mathcal{L} : T \mathcal{D}  \times  T ^p _q  \mathcal{D}  \times  T^r _s  \mathcal{D} \rightarrow \wedge  ^{n+1} \mathcal{D} $.

Fix the reference tensor fields $W \in \mathfrak{X} ( \mathcal{D} )$, $K \in \mathcal{T} _q^p( \mathcal{D} )$, and the spacetime tensor field $ \gamma \in \mathcal{T} ^r_s( \mathcal{M} )$.
For each world-tube $ \Phi : \mathcal{D} \rightarrow \mathcal{M} $, define $ \Gamma = \Phi ^* \gamma $. Then, the following statements are equivalent:
\begin{itemize}
\item[\bf (i)] $\Phi$ is a critical point of the {\bfi Hamilton principle}
\[
\left. \frac{d}{d\varepsilon}\right|_{\varepsilon=0} \int_\mathcal{D}  \mathscr{L} ( j^1\Phi_ \varepsilon , W, K , \gamma \circ \Phi _ \varepsilon  )=0
\]
for arbitrary variations $\Phi _ \varepsilon  $  \textcolor{black}{with fixed endpoints at $ \lambda =a,b$}.

\item[\bf (ii)] $\Phi$ is a solution of the {\bfi Euler-Lagrange equations}
\[
\partial _a \frac{\partial \bar{\mathscr{L}}}{\partial \Phi^\mu_{,a}}  - \frac{\partial \bar{\mathscr{L}}}{\partial \Phi^\mu} = \frac{\partial \bar{\mathscr{L}}}{\partial \gamma } \partial_\mu \gamma , \qquad \frac{\partial \bar{ \mathscr{L} }}{\partial \Phi^\mu_{,a}}d^nX_a=0 \quad\text{on}\quad \textcolor{black}{[a,b] \times \partial \mathcal{B}}.
\]
\item[\bf (iii)] $\Gamma\in \mathcal{T} ^r_s( \mathcal{D} )$ is a critical point of the {\bfi convective variational principle}
\[
\left. \frac{d}{d\varepsilon}\right|_{\varepsilon=0}\int_{ \mathcal{D} } \mathcal{L}\big( W,K, \textcolor{black}{ \Gamma }  \big)=0 \quad \textcolor{black}{ \text{for variations} \quad \delta \Gamma = \pounds _Z\Gamma},
\]
\textcolor{black}{where $ Z$} is an arbitrary vector field on $\mathcal{D}$ \textcolor{black}{vanishing at $ \lambda =a,b$}.
\item[\bf (iv)] $\Gamma\in \mathcal{T} ^r_s( \mathcal{D} )$ is a solution of the {\bfi reduced convective Euler-Lagrange equations}\color{black} 
\begin{equation}\label{convected_EL} 
\left\{
\begin{array}{l}
\displaystyle\vspace{0.2cm}\operatorname{div}^ \nabla \!\Big( \mathcal{L}  \delta  -   \frac{\partial   \mathcal{L} }{\partial \Gamma }\!\therefore\!  \widehat{ \Gamma  }\Big)  =  \frac{\partial ^ \nabla \! \mathcal{L} }{\partial X}+\frac{\partial   \mathcal{L} }{\partial W } \cdot \nabla  W+\frac{\partial   \mathcal{L} }{\partial K } \cdot \nabla  K\\
\displaystyle i_{ \partial \mathcal{D} } ^*\Big( \operatorname{tr}\Big(\frac{\partial   \mathcal{L} }{\partial \Gamma }\!\therefore\!  \widehat{ \Gamma  }\Big)\cdot Z\Big)=0,\;\; \forall Z \quad \text{on} \quad  \textcolor{black}{[a,b] \times   \partial \mathcal{B}},
\end{array}
\right.
\end{equation}
written with the help of a torsion free covariant derivative $ \nabla $ on $ \mathcal{D} $.  In local coordinates, denoting $ \mathcal{L} =\bar{ \mathcal{L} } d^{n+1}X$, the boundary condition reads
\[
\textcolor{black}{ \Big(\frac{\partial   \bar{\mathcal{L}} }{\partial \Gamma }\!\therefore\!  \widehat{ \Gamma  }\Big) ^a_b d^nX_a=0,\;\; \text{on $ \partial \mathcal{D} $}}.
\]
\end{itemize}
\end{theorem}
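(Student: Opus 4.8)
The plan is to follow the proof of Theorem~\ref{spacetime_reduced_EL} almost verbatim, with the fixed spacetime field $\gamma$ and the fixed reference fields $W,K$ interchanging roles and with the push-forward relation replaced by the pull-back relation \eqref{scrL_to_calL} of Proposition~\ref{convected_Lagrangian}. The equivalence \textbf{(i)}$\Leftrightarrow$\textbf{(ii)} is the standard first-variation computation for a first-order field theory, already recorded right after \eqref{Ham_principle}, so nothing new is needed there. For \textbf{(i)}$\Leftrightarrow$\textbf{(iii)} I would use spacetime covariance: by \eqref{scrL_to_calL} we have $\mathscr{L}(j^1\Phi_\varepsilon,W,K,\gamma\circ\Phi_\varepsilon)=\mathcal{L}(W,K,\Phi_\varepsilon^\ast\gamma)$ as $(n+1)$-forms on the \emph{fixed} domain $\mathcal{D}$, so the two action functionals are literally equal and hence have the same critical points. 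The constrained form of the variation of $\Gamma=\Phi^\ast\gamma$ comes from naturality of pull-back and Lie derivative: setting $Z:=T\Phi^{-1}\circ\delta\Phi$, which is a genuine vector field on $\mathcal{D}$ because $\Phi$ is an embedding with $\dim\mathcal{D}=\dim\mathcal{M}$, one gets $\delta\Gamma=\delta(\Phi^\ast\gamma)=\Phi^\ast\big(\pounds_{\delta\Phi\circ\Phi^{-1}}\gamma\big)=\pounds_Z\Gamma$; moreover $\delta\Phi\mapsto Z$ is a bijection, so arbitrariness of $\delta\Phi$ on $\mathcal{D}$ (boundary included) amounts to arbitrariness of $Z$.

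The substantive step is \textbf{(iii)}$\Leftrightarrow$\textbf{(iv)}, namely computing $\left.\frac{d}{d\varepsilon}\right|_{\varepsilon=0}\int_{\mathcal{D}}\mathcal{L}(W,K,\Phi_\varepsilon^\ast\gamma)=\int_{\mathcal{D}}\frac{\partial\mathcal{L}}{\partial\Gamma}\cdot\pounds_Z\Gamma$. Here, in contrast with Theorem~\ref{spacetime_reduced_EL}, the domain $\mathcal{D}$ does not move, so formula \eqref{variation_moving_domain} plays no role; the term $\mathcal{L}\delta$ in the reduced stress tensor arises instead from an integration by parts. I would apply Lemma~\ref{technical_lemma} in the form \eqref{formula_Lemma_global} with $\kappa=\Gamma$ and $\pi=\frac{\partial\mathcal{L}}{\partial\Gamma}$, send the exact term $\operatorname{div}\big((\frac{\partial\mathcal{L}}{\partial\Gamma}:\widehat{\Gamma})\cdot Z\big)$ to the boundary by Stokes, producing $\int_{\partial\mathcal{D}}i_{\partial\mathcal{D}}^\ast\operatorname{tr}(\frac{\partial\mathcal{L}}{\partial\Gamma}:\widehat{\Gamma})\cdot Z$, and rewrite the remaining interior integrand $\frac{\partial\mathcal{L}}{\partial\Gamma}\cdot\nabla_Z\Gamma$ by the chain rule for the bundle map $\mathcal{L}$, i.e.\ the analogue of \eqref{nabla_derivative_ell}:
\[ \nabla_Z[\mathcal{L}(W,K,\Gamma)]=\frac{\partial^\nabla\!\mathcal{L}}{\partial X}\cdot Z+\frac{\partial\mathcal{L}}{\partial W}\cdot\nabla_Z W+\frac{\partial\mathcal{L}}{\partial K}\cdot\nabla_Z K+\frac{\partial\mathcal{L}}{\partial\Gamma}\cdot\nabla_Z\Gamma, \]
together with the pointwise identity $\nabla_Z[\mathcal{L}]=\operatorname{div}^\nabla(\mathcal{L}\delta)\cdot Z$ (immediate in local coordinates, or from $\operatorname{div}(\mathcal{L}Z)=\nabla_Z[\mathcal{L}]+\mathcal{L}\operatorname{div}^\nabla Z$ as used in the proof of Lemma~\ref{spacetime_mat_ell} after discarding the exact form $\operatorname{div}(\mathcal{L}Z)=\pounds_Z\mathcal{L}$, cf.\ \eqref{example_div}). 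Collecting terms, the interior part equals $\big(\operatorname{div}^\nabla(\mathcal{L}\delta-\frac{\partial\mathcal{L}}{\partial\Gamma}:\widehat{\Gamma})-\frac{\partial^\nabla\!\mathcal{L}}{\partial X}-\frac{\partial\mathcal{L}}{\partial W}\cdot\nabla W-\frac{\partial\mathcal{L}}{\partial K}\cdot\nabla K\big)\cdot Z$, and letting $Z$ be arbitrary first in the interior and then on $\partial\mathcal{D}$ gives exactly the two lines of \eqref{convected_EL}; the local form of the boundary condition follows by taking $Z=\partial_b$.

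I expect the sole difficulty to be bookkeeping rather than anything conceptual: keeping the two divergences $\operatorname{div}$ and $\operatorname{div}^\nabla$ and the density weights straight when passing between the $\nabla_Z[\mathcal{L}]$ form and the $\operatorname{div}^\nabla(\mathcal{L}\delta)$ form, and --- the one genuine departure from Theorem~\ref{spacetime_reduced_EL} --- correctly carrying the parametric dependence of $\mathcal{L}$ on the fixed fields $W$ and $K$, whose covariant derivatives now appear on the right-hand side of \eqref{convected_EL} in the role that $\nabla\gamma$ plays in \eqref{spacetime_EL}. Beyond Lemma~\ref{technical_lemma}, Proposition~\ref{convected_Lagrangian} and Stokes' theorem, no new ingredient is needed.
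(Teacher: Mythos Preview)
Your proposal is correct and is precisely the approach the paper intends: the paper's own proof reads in full ``The proof is left to the reader. It uses the results of \S\ref{technical_prelim} and is similar to that of Theorem~\ref{spacetime_reduced_EL},'' and you have carried out exactly that analogy, correctly noting the three points of departure (fixed domain $\mathcal{D}$ so \eqref{variation_moving_domain} is not invoked, the sign $\delta\Gamma=+\pounds_Z\Gamma$, and the parametric $W,K$-dependence producing the $\nabla W,\nabla K$ terms on the right of \eqref{convected_EL}). The pointwise identity $\nabla_Z[\mathcal{L}]=\operatorname{div}^\nabla(\mathcal{L}\delta)\cdot Z$ that you use to generate the $\mathcal{L}\delta$ contribution in the interior is correct and is implicitly what happens in the passage from the third to the fourth line of the proof of Theorem~\ref{spacetime_reduced_EL} as well.
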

\begin{proof} The proof is left to the reader. It uses the results of \S\ref{technical_prelim} and is similar to that of Theorem \ref{spacetime_reduced_EL}.
\end{proof}

The result stated in the above theorem still holds when spacetime covariance is satisfied only with respect to the isotropy subgroup $ \operatorname{Diff}_\gamma ( \mathcal{M} )$ of the spacetime tensor field, see Remark \ref{remark_gamma}.

\paragraph{Material covariance.} We now examine the case in which the material Lagrangian density $\mathscr{L}$ in \eqref{Lagrangian_K} is also \textcolor{black}{materially covariant} with respect to $ \operatorname{Diff}( \mathcal{D} )$, see \eqref{equivariance_general}, in addition to the spacetime covariance \eqref{spacetime_equivariance}. The proofs are left to the reader.

\begin{lemma} Let $ \mathscr{L} : J^1( \mathcal{D}  \times \mathcal{M} ) \times T \mathcal{D}  \times T^p_q\mathcal{D}  \times T^r_s \mathcal{M}  \rightarrow \wedge ^{n+1} \mathcal{D} $ be a spacetime covariant Lagrangian density and consider the associated convective Lagrangian density $\mathcal{L} : T \mathcal{D}  \times T ^p_q \mathcal{D} \times  T^r_s \mathcal{D} \rightarrow  \wedge ^{n+1} \mathcal{D}$. Then if $ \mathscr{L} $ is also \textcolor{black}{materially covariant}, we have
\[
\varphi ^* \left[ \mathcal{L} (W,K,\Gamma) \right] =  \mathcal{L} (\varphi ^* W,\varphi ^* K,\varphi ^* \Gamma), \quad \forall \varphi  \in \operatorname{Diff}(\mathcal{D} ).
\]
Therefore, \textcolor{black}{$\mathcal{L}$ satisfies the following equalities}:
\[
\frac{\partial  ^ \nabla \! \mathcal{L} }{\partial X}=0 \quad\text{and}\quad - W \otimes \frac{\partial  \mathcal{L} }{\partial W} +   \textcolor{black}{ \frac{\partial  \mathcal{L} }{\partial K }\!\therefore\!   \widehat{ K  } +   \frac{\partial  \mathcal{L} }{\partial \Gamma  }\!\therefore\!   \widehat{ \Gamma   } =  \mathcal{L} \delta}.
\]
\end{lemma}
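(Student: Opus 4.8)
The plan is to mirror the proof of Lemma~\ref{spacetime_mat_ell}, reading it in the convective channel rather than the Eulerian one.

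First I would establish the displayed equivariance identity. Fix $\varphi\in\operatorname{Diff}(\mathcal{D})$, a world-tube $\Phi$, and set $\Gamma=\Phi^*\gamma$. Applying the defining relation \eqref{scrL_to_calL} of the convective Lagrangian density to the world-tube $\Phi\circ\varphi$ with reference fields $\varphi^*W$ and $\varphi^*K$, and using $(\Phi\circ\varphi)^*\gamma=\varphi^*\Phi^*\gamma=\varphi^*\Gamma$, gives $\mathscr{L}\big(j^1(\Phi\circ\varphi),\varphi^*W,\varphi^*K,\gamma\circ\Phi\circ\varphi\big)=\mathcal{L}(\varphi^*W,\varphi^*K,\varphi^*\Gamma)$. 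On the other hand, \eqref{scrL_to_calL} applied to $\Phi$ itself recasts the material covariance condition \eqref{equivariance_general} as $\mathscr{L}\big(j^1(\Phi\circ\varphi),\varphi^*W,\varphi^*K,\gamma\circ\Phi\circ\varphi\big)=\varphi^*[\mathcal{L}(W,K,\Gamma)]$. Comparing the two right-hand sides yields $\mathcal{L}(\varphi^*W,\varphi^*K,\varphi^*\Gamma)=\varphi^*[\mathcal{L}(W,K,\Gamma)]$; since this is an identity among the point values of the fields (as one checks in coordinates, exactly as in Propositions~\ref{convected_Lagrangian} and \ref{epsilon_N}), it holds for $\mathcal{L}$ as a bundle map.

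Next I would infinitesimalise. Take a path $\varphi_\varepsilon\in\operatorname{Diff}(\mathcal{D})$ with $\varphi_0=\mathrm{id}$ and $Z=\frac{d}{d\varepsilon}\big|_{0}\varphi_\varepsilon\in\mathfrak{X}(\mathcal{D})$, and differentiate the equivariance identity at $\varepsilon=0$. On the left only the field arguments move (the base point of the bundle map is held fixed), giving $\frac{\partial\mathcal{L}}{\partial W}\cdot\pounds_Z W+\frac{\partial\mathcal{L}}{\partial K}\cdot\pounds_Z K+\frac{\partial\mathcal{L}}{\partial\Gamma}\cdot\pounds_Z\Gamma$; on the right, $\frac{d}{d\varepsilon}\big|_{0}\varphi_\varepsilon^*[\mathcal{L}(W,K,\Gamma)]=\pounds_Z[\mathcal{L}(W,K,\Gamma)]=\operatorname{div}(\mathcal{L}\,Z)$ by \eqref{example_div}. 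I would then fix a torsion free covariant derivative $\nabla$ on $\mathcal{D}$ and expand every Lie derivative via \eqref{global_formula}, $\pounds_Z\kappa=\nabla_Z\kappa+\widehat\kappa:\nabla Z$, and the divergence via $\operatorname{div}(\mathcal{L}\,Z)=\nabla_Z[\mathcal{L}(\cdot)]+\mathcal{L}\,\delta:\nabla Z$ together with the convective version of \eqref{nabla_derivative_ell}, i.e.\ $\nabla_Z[\mathcal{L}(W,K,\Gamma)]=\frac{\partial^\nabla\!\mathcal{L}}{\partial X}\cdot Z+\frac{\partial\mathcal{L}}{\partial W}\cdot\nabla_Z W+\frac{\partial\mathcal{L}}{\partial K}\cdot\nabla_Z K+\frac{\partial\mathcal{L}}{\partial\Gamma}\cdot\nabla_Z\Gamma$. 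The $\nabla_Z W$, $\nabla_Z K$, $\nabla_Z\Gamma$ terms cancel between the two sides, and using the identity $\frac{\partial\mathcal{L}}{\partial W}:\widehat W=-W\otimes\frac{\partial\mathcal{L}}{\partial W}$ recorded for a vector-field argument in the proof of Theorem~\ref{spacetime_reduced_EL}, what survives is $\big(-W\otimes\frac{\partial\mathcal{L}}{\partial W}+\frac{\partial\mathcal{L}}{\partial K}:\widehat K+\frac{\partial\mathcal{L}}{\partial\Gamma}:\widehat\Gamma-\mathcal{L}\,\delta\big):\nabla Z=\frac{\partial^\nabla\!\mathcal{L}}{\partial X}\cdot Z$. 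Finally, since at each point of $\mathcal{D}$ the value $Z(X)$ and the covariant derivative $\nabla Z(X)$ may be prescribed independently, the coefficient of $\nabla Z$ and the coefficient of $Z$ must each vanish, which are precisely the two claimed identities.

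The only real point of care I anticipate is the bookkeeping of densities and contractions: one must keep track of which indices the symbols ``$\cdot$'' (full contraction) and ``$:$'' (contraction against $\nabla Z$) act on, so that the cancellation of the $\nabla_Z$-terms and the reduction $\frac{\partial\mathcal{L}}{\partial W}:\widehat W=-W\otimes\frac{\partial\mathcal{L}}{\partial W}$ are legitimate, and one must invoke — as is done implicitly in Lemma~\ref{spacetime_mat_ell} — that a vector field and its first derivative can be specified independently at a point. Everything else is a routine computation fully parallel to the Eulerian case, so the proof can reasonably be left to the reader as the author does.
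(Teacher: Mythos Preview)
Your proposal is correct and follows exactly the approach the paper intends: the lemma is the convective mirror of Lemma~\ref{spacetime_mat_ell}, and the paper explicitly leaves its proof to the reader precisely because it is obtained by reading that argument in the convective channel, as you do. Both the derivation of the equivariance identity from \eqref{scrL_to_calL} combined with \eqref{equivariance_general}, and the infinitesimalisation via \eqref{global_formula} and \eqref{nabla_derivative_ell} followed by the independence of $Z$ and $\nabla Z$ at a point, match the structure of the Eulerian proof line by line.
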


\begin{corollary} Assume that the Lagrangian density \eqref{Lagrangian_K} is material and spacetime covariant. Then the reduced convective Euler-Lagrange equations \eqref{convected_EL} are equivalently written as
\begin{equation}\label{convected_EL_cs_mc} \color{black} 
\left\{
\begin{array}{l}
\displaystyle\vspace{0.2cm}\operatorname{div}\Big(- W \otimes \frac{\partial  \mathcal{L} }{\partial W} +   \frac{\partial  \mathcal{L} }{\partial K }\!\therefore\!   \widehat{ K  } \Big) =\frac{\partial  \mathcal{L} }{\partial W } \cdot \nabla  W+\frac{\partial \mathcal{L} }{\partial K } \cdot \nabla  K \\
\displaystyle i_{ \partial \mathcal{D} } ^* \Big(\operatorname{tr}\Big(\mathcal{L}  \delta   + W \otimes \frac{\partial  \mathcal{L} }{\partial W} -   \frac{\partial  \mathcal{L} }{\partial K }\!\therefore\!   \widehat{ K  }\Big) \cdot Z \Big)=0, \;\;\forall Z \quad \text{on} \quad     \textcolor{black}{[a,b] \times   \partial \mathcal{B}}.
\end{array}
\right.
\end{equation}
The local form of the boundary conditions is
\[
\textcolor{black}{ \Big(\bar{\mathcal{L}}  \delta   + W \otimes \frac{\partial \bar{\mathcal{L}}}{\partial W} -   \frac{\partial  \bar{\mathcal{L}}}{\partial K }\!\therefore\!  \widehat{ K  }\Big)^a_b d^nX_a=0.}
\]
\end{corollary}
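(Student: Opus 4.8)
The plan is to obtain \eqref{convected_EL_cs_mc} by substituting the two consequences of the Lemma preceding this Corollary into the reduced convective Euler--Lagrange equations \eqref{convected_EL} of Theorem \ref{convective_reduced_EL}. This parallels exactly the way Corollary \ref{corol_spacetime} is deduced from Lemma \ref{spacetime_mat_ell} in the Eulerian picture, so the argument is essentially algebraic; all the genuine content already sits in that Lemma, namely the $\operatorname{Diff}(\mathcal{D})$-equivariance of $\mathcal{L}$ and its infinitesimal form $\frac{\partial^\nabla\mathcal{L}}{\partial X}=0$ together with the pointwise identity
$$-W\otimes\frac{\partial\mathcal{L}}{\partial W}+\frac{\partial\mathcal{L}}{\partial K}:\widehat{K}+\frac{\partial\mathcal{L}}{\partial\Gamma}:\widehat{\Gamma}=\mathcal{L}\,\delta .$$

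First I would treat the bulk equation. Rearranging the identity as $\mathcal{L}\,\delta-\frac{\partial\mathcal{L}}{\partial\Gamma}:\widehat{\Gamma}=-W\otimes\frac{\partial\mathcal{L}}{\partial W}+\frac{\partial\mathcal{L}}{\partial K}:\widehat{K}$ and inserting it, together with $\frac{\partial^\nabla\mathcal{L}}{\partial X}=0$, into the first line of \eqref{convected_EL} gives at once $\operatorname{div}^\nabla\!\bigl(-W\otimes\frac{\partial\mathcal{L}}{\partial W}+\frac{\partial\mathcal{L}}{\partial K}:\widehat{K}\bigr)=\frac{\partial\mathcal{L}}{\partial W}\cdot\nabla W+\frac{\partial\mathcal{L}}{\partial K}\cdot\nabla K$. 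To see that the superscript on $\operatorname{div}$ may be dropped, I would apply Lemma \ref{technical_lemma} (formula \eqref{formula_Lemma_global}) with $(\kappa,\pi)=(W,\frac{\partial\mathcal{L}}{\partial W})$ and with $(\kappa,\pi)=(K,\frac{\partial\mathcal{L}}{\partial K})$: contracting the two resulting equations against an arbitrary vector field $Z$ on $\mathcal{D}$ and adding shows that the displayed equation is equivalent to $\operatorname{div}\!\bigl((-W\otimes\frac{\partial\mathcal{L}}{\partial W}+\frac{\partial\mathcal{L}}{\partial K}:\widehat{K})\cdot Z\bigr)=\frac{\partial\mathcal{L}}{\partial W}\cdot\pounds_Z W+\frac{\partial\mathcal{L}}{\partial K}\cdot\pounds_Z K$ for all $Z$, a form in which no connection appears; hence the first equation of \eqref{convected_EL} is independent of the choice of torsion-free $\nabla$ and may be written with $\operatorname{div}$.

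For the boundary condition I would use only $\mathbb{R}$-linearity of $\operatorname{tr}$: the same identity rewritten as $\frac{\partial\mathcal{L}}{\partial\Gamma}:\widehat{\Gamma}=\mathcal{L}\,\delta+W\otimes\frac{\partial\mathcal{L}}{\partial W}-\frac{\partial\mathcal{L}}{\partial K}:\widehat{K}$ converts the term $i_{\partial\mathcal{D}}^*\!\bigl(\operatorname{tr}(\frac{\partial\mathcal{L}}{\partial\Gamma}:\widehat{\Gamma})\cdot Z\bigr)$ in \eqref{convected_EL} into $i_{\partial\mathcal{D}}^*\!\bigl(\operatorname{tr}(\mathcal{L}\,\delta+W\otimes\frac{\partial\mathcal{L}}{\partial W}-\frac{\partial\mathcal{L}}{\partial K}:\widehat{K})\cdot Z\bigr)$, which is exactly the boundary condition of \eqref{convected_EL_cs_mc}. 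Writing $\mathcal{L}=\bar{\mathcal{L}}\,d^{n+1}X$ and expanding $\widehat{K}$ by means of \eqref{hat_kappa}, the $d^nX_a$-component of the resulting boundary $1$-form-valued density gives the stated local form; equivalently one simply reads off the component version of the Lemma's identity.

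I do not expect a real obstacle here: the statement is bookkeeping on top of the preceding Lemma. The only point deserving a sentence of care is the passage from $\operatorname{div}^\nabla$ to $\operatorname{div}$ in the first equation, handled as above via Lemma \ref{technical_lemma}. As an independent check one may avoid \eqref{convected_EL} altogether and re-run the proof of Theorem \ref{convective_reduced_EL}: starting from $\delta\Gamma=\pounds_Z\Gamma$ and using the equivariance $\varphi^*[\mathcal{L}(W,K,\Gamma)]=\mathcal{L}(\varphi^*W,\varphi^*K,\varphi^*\Gamma)$ to replace $\frac{\partial\mathcal{L}}{\partial\Gamma}\cdot\pounds_Z\Gamma$ by $\operatorname{div}(\mathcal{L}Z)-\frac{\partial\mathcal{L}}{\partial W}\cdot\pounds_Z W-\frac{\partial\mathcal{L}}{\partial K}\cdot\pounds_Z K$ before integrating by parts with Lemma \ref{technical_lemma}, one is led directly to \eqref{convected_EL_cs_mc} and its local form.
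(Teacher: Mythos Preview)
Your proposal is correct and matches the paper's intended approach: the paper leaves this proof to the reader, and the natural argument is exactly the substitution of the preceding Lemma's identities into \eqref{convected_EL}, in strict parallel with how Corollary \ref{corol_spacetime} follows from Lemma \ref{spacetime_mat_ell}. Your additional discussion of why the superscript $\nabla$ may be dropped via Lemma \ref{technical_lemma} is a valid and useful clarification of a point the paper leaves implicit.
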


Note that, in a similar way with the case of Corollary \ref{corol_spacetime}, here $W$ and $K$ are not dynamic variables in the convective description, but given fixed parameters.
Also, these results still hold when the spacetime covariance is satisfied only with respect to the isotropy subgroup $ \operatorname{Diff}_{\gamma }( \mathcal{M} )$ of the spacetime tensor fields, see Remark \ref{remark_gamma}. Material covariance with respect to the whole group $ \operatorname{Diff}( \mathcal{D} )$ is however needed.

In the Figure below we give the analogue of Fig. \ref{figure_1} in the relativistic case.

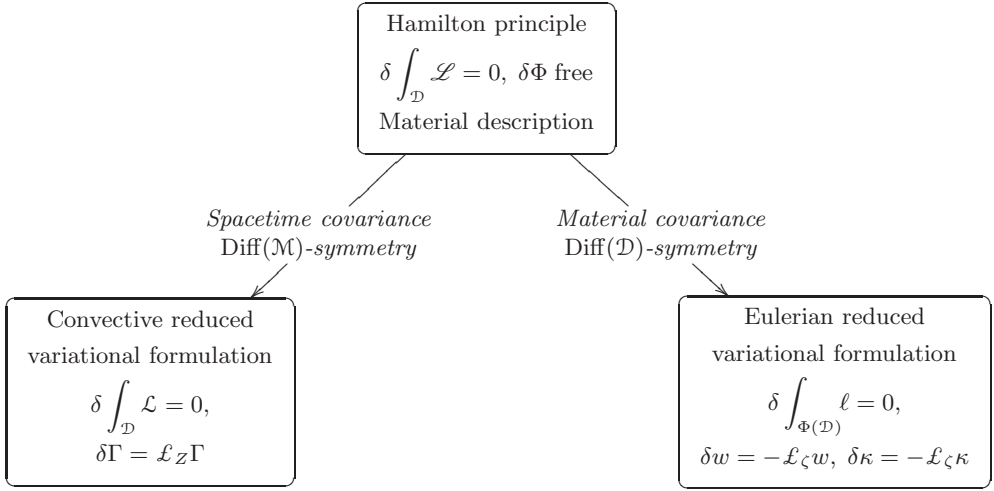
\begin{figure}[h!]
{\noindent
\footnotesize
\begin{center}
\hspace{.3cm}
\begin{xy}
\xymatrix{
& *+[F-:<3pt>]{
\begin{array}{c}
\vspace{0.1cm}\text{Hamilton principle}\\
\vspace{0.1cm}\displaystyle \delta \int_ \mathcal{D}  \mathscr{L}=0,\; \delta \Phi  \;\text{free}\\
\vspace{0.1cm}\text{Material description}
\end{array}
}
\ar[ddl]|{\begin{array}{c}\textit{Spacetime covariance} \\
\textit{$ \operatorname{Diff}( \mathcal{M}  )$-symmetry}\\
\end{array}}
\ar[ddr]|{\begin{array}{c}\textit{Material covariance} \\
\textit{$ \operatorname{Diff}( \mathcal{D} )$-symmetry}\\
\end{array}} & & \\
& & & \\
*+[F-:<3pt>]{
\begin{array}{c}
\vspace{0.1cm}\text{Convective reduced}\\
\vspace{0.1cm}\text{variational formulation}\\
\vspace{0.1cm}\displaystyle\delta \int_ \mathcal{D}  \mathcal{L}=0,\\
\vspace{0.1cm}\delta \Gamma = \pounds _ Z  \Gamma 
\\
\end{array}
}
& &  *+[F-:<3pt>]{
\begin{array}{c}
\vspace{0.1cm}\text{Eulerian reduced}\\
\vspace{0.1cm}\text{variational formulation}\\
\vspace{0.1cm}\displaystyle\delta \int_ { \Phi  (\mathcal{D})}\!\!\ell =0,\\
\vspace{0.1cm}\delta w= - \pounds _ \zeta w, \; \delta \kappa  = - \pounds _ \zeta   \kappa 
\\
\end{array}
}\\
}
\end{xy}
\end{center}
}
\caption{Illustration of the variational principles in the three representations of relativistic continuum mechanics for general material and spacetime tensor fields $K$, $W$, $ \gamma $ with associated dynamic fields $\kappa= \Phi _*K$, $w= \Phi _*W$, $ \Gamma = \Phi ^* \gamma $. Compare to Fig. \ref{figure_1}.}
\label{figure_2}
\end{figure}

\section{Coupling with the Einstein equations and junction conditions}

In this section, we shall show how the variational formulation for continua developed above can be coupled to the gravitation theory. In particular, the variational formulation that we develop in this section is able to produce
\begin{itemize}
\item[\bf (1)] the field equations for the gravitational field created by the relativistic continuum, both at the interior and outside the continuum;
\item[\bf (2)] the equations of motion of the continuum in this gravitational field;
\item[\bf (3)] the junction conditions between the solution at the interior of the relativistic continuum and the solution describing the gravity field produced outside from it.
\end{itemize}

\paragraph{Lagrangian densities.} In general, when the variational setting developed in \S\ref{sec_3}-\S\ref{sec_4} is coupled with field theories, the spacetime tensor fields $ \gamma $ that appear in the Lagrangian densities are not fixed and \textcolor{black}{their dynamics} is governed by Euler-Lagrange equations on spacetime associated to the variations $ \delta \gamma $. We shall restrict here to the coupling with gravitation, i.e., take $ \gamma =\mathsf{g}$ a Lorentzian metric. We thus consider the Lagrangian density of the continuum as a bundle map
\[
\mathscr{L} : J^1( \mathcal{D}  \times \mathcal{M} ) \times T \mathcal{D}  \times T^p_q\mathcal{D}  \times S^2_L \mathcal{M}   \rightarrow \wedge ^{n+1} \mathcal{D}
\]
written as $\mathscr{L} (j^1\Phi, W, K, \mathsf{g} \circ \Phi  )$. For simplicity of the exposition we assume that $\mathscr{L}$ is \textcolor{black}{materially covariant} with respect to the isotropy subgroup of $(W,K)$ so that there is the associated spacetime Lagrangian density 
\begin{equation}\label{spacetime_Lagran_grav} 
\ell: T \mathcal{M}  \times  T ^p _q  \mathcal{M}  \times  S^2_L \mathcal{M} \rightarrow \wedge  ^{n+1} \mathcal{M}
\end{equation} 
written as $\ell(w, \kappa , \mathsf{g})$, see \S\ref{subsec_mat_cov}.

Recall that the Einstein equations in the vacuum are the Euler-Lagrange equations for the Einstein-Hilbert Lagrangian 
\begin{equation}\label{l_geom} 
\ell_{\rm EH}:J^2S^2_L \mathcal{M}  \rightarrow \wedge ^{n+1} \mathcal{M} , \qquad \ell_{\rm EH}(j^2\mathsf{g}) = \frac{1}{2 \chi } R(\mathsf{g}) \mu (\mathsf{g}),
\end{equation} 
defined on the second order jet bundle of $S^2_L \mathcal{M} \rightarrow \mathcal{M} $. Here $\chi = 8 \pi G c^{-4}$ and $R(\mathsf{g})$ is the scalar curvature of $\mathsf{g}$. Recall that it is defined as $R=\mathsf{g}^{ \alpha \beta } Ric_{ \alpha \beta }= \mathsf{g}^{ \alpha \beta } R^ \lambda _{ \alpha \lambda \beta }$ with $R(u,v)= \nabla _u \nabla _v - \nabla _v \nabla _u - \nabla _{[u,v]}$ the Riemann curvature tensor of $\mathsf{g}$, with local coordinates $R^ \lambda {}_{\alpha \mu \beta }= \partial _ \mu \Gamma ^ \lambda _{ \alpha \beta } -  \partial _ \beta  \Gamma ^ \lambda _{ \alpha \mu  }+ \Gamma ^ \lambda _{ \sigma  \mu  }\Gamma ^ \sigma  _{ \alpha \beta }  - \Gamma ^ \lambda _{ \sigma  \beta   }\Gamma ^ \sigma  _{ \alpha \mu  }$, \textcolor{black}{for $ \Gamma^ \lambda _{ \alpha \beta }$ the Christoffel symbols}.
The spacetime covariance of $\ell_{\rm EH}$ reads
\begin{equation}\label{sc_geom}
 \ell_{\rm EH}(j^2( \varphi ^* \mathsf{g}))= \varphi ^* \big[\ell_{\rm EH}(j^2\mathsf{g})\big], \;\;\forall\; \varphi \in \operatorname{Diff}(\mathcal{M} ).
\end{equation}
We recall that the Einstein tensor field $\textcolor{black}{Ein} ^{ \alpha \beta }= Ric^{ \alpha \beta }- \frac{1}{2} R \mathsf{g}^{ \alpha \beta }$ satisfies the Bianchi identities
\[
\textcolor{black}{Ein} ^{ \alpha \beta }{}_{; \beta }=0,
\]
which can be seen as a direct consequence of \eqref{sc_geom}. \textcolor{black}{The semicolon in front of the index $ \beta $ denotes the covariant differentiation associated with $\nabla $.}

\paragraph{Gibbons-Hawking-York boundary term.} In order to couple continuum mechanics with gravitation we shall modify the Einstein-Hilbert action in order to take into account of the boundary of the spacetime $ \mathcal{N} $ occupied by the continuum.
We shall consider a well-known modification due to \cite{Yo1972}, \cite{GiHa1977} for spacetimes with boundary, by adding the integral of the trace of the extrinsic curvature (or second fundamental form) on the boundary of the continuum.

We recall the definition and our conventions on extrinsic curvature. Consider a nondegenerate hypersurface $ \Sigma \subset \mathcal{M} $, choose a normal vector field $n$, and denote $ \epsilon =\mathsf{g}(n,n) \in \{\pm 1\}$. This choice together with the choice of an orientation of $\mathcal{M} $ fixes the orientation of $ \Sigma $. The orthogonal decomposition $T \mathcal{M} |_{\Sigma }= \textcolor{black}{T \Sigma \oplus T \Sigma ^\perp}$ reads \textcolor{black}{$w=w^\parallel + w^\perp$ with $w^\parallel= w- \epsilon \mathsf{g}(w,n) n$ and $w^\perp = \epsilon  \mathsf{g}(w,n) n$}. Given $u,v \in \mathfrak{X}( \Sigma )$, and an extension $\tilde v$ of $v$, to a neighborhood of $ \Sigma $ in $M$, we have the orthogonal decomposition
\[
\nabla _u \tilde v= (\nabla _u \tilde v)^ \parallel + ( \nabla _u \tilde v)^\perp = \nabla_u^{\Sigma } v - \epsilon K(u,v)n,
\]
where $ \nabla ^{\Sigma }$ is the Levi-Civita covariant derivative for the (Riemannian or Lorantzian) metric $ h= i_ {\Sigma }^*g$ and $K(u,v)= - \mathsf{g}( \nabla _u \tilde v, n) = \mathsf{g}( u, \nabla _v n)$ is the extrinsic curvature of $\Sigma $. The trace of $K$ with respect to $h$ is denoted $k= \operatorname{Tr}K= h^{ab}K_{ab}$, where $ h_{ab}dx^adx^b$ is the local expression of $h$ on $\Sigma $. Note that $k$ is, up to a constant factor, the mean curvature of $ \Sigma$. We shall use the notations $K(\mathsf{g})$ and $k(\mathsf{g})$ to emphasize the dependency on Lorentzian metric $\mathsf{g}$. We will also consider the trace $ \operatorname{Tr}(K^2)= h^{ab}K_{bc}h^{cd}K_{da} $.
The Gibbons-Hawking-York (GHY) term associated to an oriented nondegenerate hypersurface $ \Sigma $ is
\begin{equation}\label{GHY_term} 
\frac{1}{ \chi } \int_{ \Sigma }  \epsilon k(\mathsf{g}) \mu(i^* _{ \Sigma }\mathsf{g}),
\end{equation}
with $\mu(i^* _{ \Sigma }\mathsf{g})$ the volume form associated to $i^* _{ \Sigma }\mathsf{g}$ and to the orientation of $ \Sigma $.

\paragraph{Action functional for general relativistic continuum.} We shall denote by $ \mathcal{N} ^-= \Phi ( \mathcal{D} )\subset \mathcal{M} $ the portion of spacetime occupied by the continuum, denoted $\mathcal{N} $ earlier, and we shall write $ \mathcal{N} ^+= \mathcal{M} - \operatorname{int}(\mathcal{N}^-)$. For simplicity we assume that the spacetime $ \mathcal{M} $ has no boundary and we have $ \partial \mathcal{N} ^+= \partial \mathcal{N} ^-$ assumed to be piecewise smooth. While $ \partial \mathcal{N}^+$ and $\partial \mathcal{N} ^-$ are equals as manifolds, they have opposite orientation as boundaries of $ \mathcal{N} ^+$ and $ \mathcal{N} ^-$, the latter having the orientation induced from that of $ \mathcal{M} $.

We denote by $\mathsf{g}^-$ and $\mathsf{g}^+$ the Lorentzian metrics on $ \mathcal{N}^-$ and $ \mathcal{N} ^+$, assumed to be smooth. It is assumed that $\mathsf{g}^+$ and $\mathsf{g}^-$ induce the same metric on the boundary, i.e.,
\begin{equation}\label{junction_first} 
i_{ \partial \mathcal{N}^-} ^* \mathsf{g}^-= i_{ \partial \mathcal{N} ^+} ^* \mathsf{g}^ +=:h
\end{equation} 
with $i_{ \partial \mathcal{N}^\pm}: \partial \mathcal{N}^\pm \rightarrow \mathcal{N}^\pm $ the inclusions.
We also assume that $h$ is nondegenerate, i.e., either Lorentzian or Riemannian, on each piece of the boundary. Condition \eqref{junction_first}, sometimes referred to as the preliminary junction condition, ensures that the boundary has a well-defined geometry. It is valid even in the presence of a singular matter distribution on the boundary. The remaining content of the Israel-Darmois conditions, which is sometimes referred to as the Lanczos-Israel condition, will be obtained from the variational formulation.
 
The total action functional is constructed by adding the action functional of the continuum with Lagrangian density \eqref{spacetime_Lagran_grav} on $ \mathcal{N}^- = \Phi ( \mathcal{D} )$, the Einstein-Hilbert action functionals on $ \mathcal{N} ^-$ and $ \mathcal{N} ^ +$, as well as the corresponding GHY boundary terms, which yields
\begin{equation}\label{total_action}
\begin{aligned} 
&\int_{ \mathcal{N}^- }\! \ell( w , \kappa  , \mathsf{g}^- ) + \frac{1}{2 \chi }\int_{ \mathcal{N}^-}  \!R(\mathsf{g}^-) \mu (\mathsf{g}^-)+ \frac{1}{2 \chi }\int_{\mathcal{N} ^+} \! R(\mathsf{g}^+) \mu (\mathsf{g}^+)\\
&\qquad \qquad  \qquad \qquad \qquad \qquad + \frac{1}{ \chi }\int_{ \partial \mathcal{N}^-} \!\epsilon\, k(\mathsf{g}^-)   \mu ^-(h)+ \frac{1}{ \chi }\int_{ \partial  \mathcal{N} ^+} \!\epsilon\, k(\mathsf{g}^+)   \mu ^+(h).
\end{aligned}
\end{equation}
In the last two terms of \eqref{total_action}, $ \epsilon $ is $1$, resp., $-1$, on the timelike, resp., spacelike piece of the boundary, and $ \mu ^\pm(h)$ is the volume form on $ \partial \mathcal{N} ^\pm$ associated to $h$ and to the boundary orientation of $ \partial \mathcal{N} ^\pm$.

The case when the spacetime $ \mathcal{M} $ itself has a boundary can be easily considered by adding the appropriate GHY term associated to $ \partial  \mathcal{M} $.

Below we shall derive the equations as well as the boundary and junction conditions by extending the covariant reduced variational principle developed above to the action functional \eqref{total_action}. The presence of the GHY is essential for this derivation, as we shall see below.  While the variation of the GHY term with respect to the metric is well-known, we shall derive the variation with respect to the hypersurface, thereby extending previous results.

\subsection{First variation of the Gibbons-Hawking-York term}

In this section we consider a smooth nondegenerate hypersurface $ \Sigma$ in the spacetime $( \mathcal{M} , \mathsf{g})$ and denote $h= i_ \Sigma ^* \mathsf{g}$ the induced metric on $ \Sigma $. We assume that $ \Sigma $ has a piecewise smooth boundary $ \partial \Sigma $, consisting of nondegenerate pieces, with outward pointing unit normal vector field $ \nu $. We write $ \sigma = h( \nu , \nu ) \in \{\pm 1\}$ along $ \partial \Sigma $ and denote $\textcolor{black}{\vartheta} = i_{ \partial \Sigma } ^* h$ the induced metric on $ \partial \Sigma $.

The orientation of $ \mathcal{M} $ and the choice of a normal vector field $n$ on $ \Sigma $ determines the orientations of $ \Sigma $ and of $ \partial \Sigma $. We denote by $ \mu (h)$ and $ \mu ( \textcolor{black}{\vartheta})$ the volume forms associated to the metrics and these orientations.

\paragraph{First variation with respect to the metric.} We recall the variation of the GHY boundary term with respect to the Lorentzian metric in the Lemma below. \textcolor{black}{We refer to \cite{LeMyPoSo2016}, especially (2.19)--(2.26), for the main formulas giving this result, stated here in an intrinsic form}.

\begin{lemma}\label{variation_GBY_g} The first variation of the GHY term with respect to the spacetime metric $\mathsf{g}$ is
\[
2\! \left. \frac{d}{d\varepsilon}\right|_{\varepsilon=0}  \int_{\Sigma } k(\mathsf{g}_ \varepsilon ) \mu (h_ \varepsilon ) =\int_{\Sigma } \big(  (K - k h)\!: \!\delta h ^{-1}  - \mathsf{g}(\deltabar V,n) \big)  \mu (h) -  \int_{ \partial \Sigma }\! \sigma \, h(\deltabar W, \nu ) \mu ( \textcolor{black}{\vartheta} ),
\]
where the vector fields $\deltabar V$ and $\deltabar W$ on $ \mathcal{M} $ and $ \Sigma $ are defined from the variation $ \delta \mathsf{g}$ as
\[
\deltabar V^ \mu = \mathsf{g}^{ \alpha \beta } \delta \Gamma ^ \mu _{ \alpha \beta } \textcolor{black}{ - \mathsf{g}^{ \alpha \mu } \delta \Gamma ^ \beta _{ \alpha \beta  }} \quad\text{and}\quad \deltabar W = \left( i_ \Sigma  ^* ( \mathbf{i} _ n \delta \mathsf{g}) \right) ^{\sharp_h} .
\]
\end{lemma}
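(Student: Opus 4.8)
The plan is to compute the $\varepsilon$-derivative of $\int_\Sigma k(\mathsf{g}_\varepsilon)\mu(h_\varepsilon)$ by treating the two factors separately and then assembling the boundary term via Stokes' theorem. First I would recall the two classical infinitesimal formulas underlying the computation: the variation of the Christoffel symbols, $\delta\Gamma^\mu_{\alpha\beta} = \tfrac12 \mathsf{g}^{\mu\lambda}(\nabla_\alpha \delta\mathsf{g}_{\lambda\beta} + \nabla_\beta \delta\mathsf{g}_{\lambda\alpha} - \nabla_\lambda \delta\mathsf{g}_{\alpha\beta})$, which is tensorial; and the variation of the volume form $\mu(h)$ of the induced metric, $\delta\mu(h) = -\tfrac12 (h_{ab}\,\delta h^{ab})\,\mu(h)$ (equivalently $\tfrac12 h^{ab}\delta h_{ab}\,\mu(h)$), where $\delta h = i_\Sigma^*\delta\mathsf{g}$ since $\Sigma$ is held fixed and only the metric varies. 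The term $-kh\!:\!\delta h^{-1}$ in the statement comes precisely from pairing $k$ with $\delta\mu(h)$, using $h_{ab}\delta h^{ab} = -h^{ab}\delta h_{ab} = -h\!:\!\delta h^{-1}$ under the convention fixed in the lemma.

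Next I would tackle the harder piece, the variation of the integrated mean curvature $k(\mathsf{g}_\varepsilon)$ itself. Writing $k = h^{ab}K_{ab}$ with $K_{ab} = \mathsf{g}(u_a, \nabla_{u_b} n)$ in an adapted frame, one gets two contributions: one from $\delta(h^{ab})$, which produces the $K\!:\!\delta h^{-1}$ term, and one from $\delta K_{ab}$. For the latter the standard move is to express $K_{ab}$ through the full connection and note that $\delta K_{ab}$ contains, besides tangential derivatives of $\delta\mathsf{g}$ along $\Sigma$ (which upon contraction with $h^{ab}$ and integration become a boundary term on $\partial\Sigma$ by the intrinsic divergence theorem, yielding the $\deltabar W$ contribution with $\deltabar W = (i_\Sigma^*(\mathbf{i}_n\delta\mathsf{g}))^{\sharp_h}$), a piece built from $\delta\Gamma$ evaluated with one normal and two tangential legs. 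The key algebraic identity to extract is that $h^{ab}\,\delta\Gamma^\mu_{ab}$-type contractions, once the normal component is separated, recombine with the purely normal $\delta\Gamma$ contraction to form exactly the vector $\deltabar V^\mu = \mathsf{g}^{\alpha\beta}\delta\Gamma^\mu_{\alpha\beta} - \mathsf{g}^{\alpha\mu}\delta\Gamma^\beta_{\alpha\mu}$ of the classical first-variation-of-$R$ formula, contracted with $n$; this is why the "bulk" residue is $-\mathsf{g}(\deltabar V, n)\,\mu(h)$. I would organize this by decomposing $T\mathcal{M}|_\Sigma = T\Sigma \oplus \langle n\rangle$ exactly as in the extrinsic-curvature preliminaries already set up in the excerpt, so that every index contraction is split into tangential and normal parts cleanly.

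Finally I would collect terms: the $K\!:\!\delta h^{-1}$ from $\delta(h^{ab})$, the $-kh\!:\!\delta h^{-1}$ from $\delta\mu(h)$, the $-\mathsf{g}(\deltabar V,n)$ bulk residue, and the $-\sigma\,h(\deltabar W,\nu)\mu(\gamma)$ term on $\partial\Sigma$ produced by applying the divergence theorem on $\Sigma$ (with $\sigma = h(\nu,\nu)$ accounting for the possibly Lorentzian induced metric, via Lemma~\ref{Lemma_boundary}-type bookkeeping on the sign picked up when flattening the normal). Multiplying by $2$ to match the normalization in the statement gives the claimed identity. The main obstacle I anticipate is the careful tracking of the split of $\delta K_{ab}$ into the part that is a total tangential divergence versus the part that contributes to the bulk, and in particular verifying that the bulk part assembles precisely into $-\mathsf{g}(\deltabar V,n)$ rather than $\deltabar V$ plus extra curvature terms — this requires using that $K$ is symmetric and that the extension $\tilde n$ of $n$ can be chosen geodesic or with $\nabla_n n = 0$ along $\Sigma$ to kill spurious terms, a choice that is legitimate since $k$ and $K$ depend only on $n|_\Sigma$. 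The sign conventions ($\epsilon$, $\sigma$, the orientation of $\mu(\mathbf{i}_n\mu(\mathsf{g}))$) are the other place where care is needed, and I would fix them once at the outset by the decomposition $u = u^\parallel + \epsilon\,\mathsf{g}(u,n)n$ already adopted in the paper.
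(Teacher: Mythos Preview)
The paper does not actually supply a proof of this lemma: immediately before the statement it says ``We refer to \cite{LeMyPoSo2016} for a detailed proof of this result, which is stated here in an intrinsic form.'' So there is nothing to compare against except the standard derivation in that reference, and your outline is essentially that derivation: vary $\mu(h)$ to get the $-kh\!:\!\delta h^{-1}$ piece, vary $k=h^{ab}K_{ab}$, split $h^{ab}\delta K_{ab}$ into a tangential divergence producing the $\deltabar W$ boundary term and a normal residue that assembles into $-\mathsf{g}(\deltabar V,n)$.

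One bookkeeping point worth flagging, since you explicitly asked where the factors are delicate: your attribution of the $K\!:\!\delta h^{-1}$ term \emph{solely} to $\delta(h^{ab})K_{ab}$ does not survive the factor of $2$ on the left-hand side. Twice $K_{ab}\,\delta h^{ab}$ gives $2K\!:\!\delta h^{-1}$, so an additional $-K\!:\!\delta h^{-1}$ must come out of $2h^{ab}\delta K_{ab}$. It does, via the variation $\delta n_\mu$ of the unit conormal (which is proportional to $n_\mu$ and feeds a $K$-term back in through $\nabla_\mu\delta n_\nu$). This is precisely the ``careful tracking of the split of $\delta K_{ab}$'' you anticipated, so it is not a gap in your strategy, only a correction to where you said each term originates. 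Your remark that one may take the extension of $n$ with $\nabla_n n=0$ along $\Sigma$ to suppress spurious terms is the right simplifying choice and is the one used in the cited reference.
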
 
We note that in the last term $ \sigma $ can take the values $1$ or $-1$ on various parts of $ \partial \Sigma $.

\paragraph{First variation with respect to the hypersurface.} We derive in this paragraph the variation of the GHY boundary terms with respect to the hypersurface. The resulting first variation formula extends earlier results obtained for the case of mean curvature of surfaces in the Euclidean space, see \cite{DoNo2011} and reference therein, and in a space of constant sectional curvature, see \cite{GrToTr2019}. Our formula shows the occurrence of the Ricci tensor of the ambient Lorentzian metric. As opposed to most of the earlier approaches, our proof is elementary, based on a direct computation. 

\begin{theorem}\label{var_GHY} The first variation of the GHY term with respect to the hypersurface is
\begin{align*} 
\left. \frac{d}{d\varepsilon}\right|_{\varepsilon=0} \int_{ \varphi _ \varepsilon ( \Sigma )} k (\mathsf{g}) \mu (h) &= \int_ \Sigma \delta \varphi ^\perp\big( k ^2 - \operatorname{Tr}(K ^2 ) -  Ric\big(n,n)\big) \mu (h)\\
& \qquad \qquad + \int_{ \partial \Sigma } \sigma \,h\big(  k \delta \varphi ^{\|} - \epsilon \operatorname{grad}^ \Sigma (\delta \varphi ^\perp), \nu \big) \mu ( \textcolor{black}{\vartheta} ),
\end{align*} 
where $ \varphi _ \varepsilon $ is a path of diffeomorphisms of $ \mathcal{M} $ with $ \varphi _{0}=id$ and we use the orthogonal decomposition
\[
\delta \varphi |_ \Sigma = \delta \varphi ^\perp n + \delta \varphi ^{\|} .
\]
\end{theorem}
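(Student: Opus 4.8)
The plan is to reduce the variation of the integral $\int_{\varphi_\varepsilon(\Sigma)} k(\mathsf{g})\,\mu(h)$ over a moving hypersurface to a computation at a fixed hypersurface, exactly as in \eqref{variation_moving_domain}: writing $\zeta = \delta\varphi$ one has
\[
\left.\frac{d}{d\varepsilon}\right|_{\varepsilon=0}\int_{\varphi_\varepsilon(\Sigma)} k(\mathsf{g})\,\mu(h) = \int_\Sigma \pounds_\zeta\big(k(\mathsf{g})\,\mu(h)\big)\Big|_{\text{restricted}} ,
\]
but since $k$ and $h$ are \emph{not} ambient objects we cannot simply Lie-differentiate. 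Instead I would split $\zeta$ along $\Sigma$ into its normal and tangential parts $\zeta = \delta\varphi^\perp n + \delta\varphi^\|$ and treat the two contributions separately. The tangential part merely reparametrizes $\Sigma$, so its contribution is $\int_\Sigma \pounds_{\delta\varphi^\|}\big(k\,\mu(h)\big) = \int_\Sigma d\big(\mathbf{i}_{\delta\varphi^\|}(k\,\mu(h))\big)$, which by Stokes' theorem produces only the boundary term $\int_{\partial\Sigma}\sigma\, k\, h(\delta\varphi^\|,\nu)\,\mu(\gamma)$ — this accounts for the first half of the claimed boundary integrand. The genuine geometric content is entirely in the \emph{normal} variation $\delta\varphi^\perp n$.

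For the normal variation I would set up a tubular-neighborhood (geodesic normal) coordinate system around $\Sigma$: let $\Sigma_t$ be the family of hypersurfaces obtained by flowing along a fixed extension of $n$, so that the deformation with velocity $f n$ (where $f=\delta\varphi^\perp$) moves $\Sigma$ through this foliation. The standard evolution equations then apply: $\partial_t h_{ab} = -2\epsilon f K_{ab}$ (up to sign conventions fixed by $\epsilon=\mathsf{g}(n,n)$), hence $\partial_t \mu(h) = -\epsilon f k\,\mu(h)$, and the well-known variation of the mean curvature under normal flow,
\[
\partial_t k = -\epsilon\,\Delta^\Sigma f - \big(\operatorname{Tr}(K^2) + Ric(n,n)\big)\,\epsilon f ,
\]
which is essentially the traced Riccati/Codazzi equation relating $\partial_t K$ to $\nabla^\Sigma\nabla^\Sigma f$, $K^2$ and the ambient curvature via the Gauss–Codazzi relations. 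Combining these,
\[
\partial_t\big(k\,\mu(h)\big) = -\epsilon\Big(\Delta^\Sigma f + \big(\operatorname{Tr}(K^2)+Ric(n,n) - k^2\big)f\Big)\mu(h),
\]
and integrating over $\Sigma$ the $-\epsilon\,\Delta^\Sigma f$ term integrates by parts to $-\epsilon\int_{\partial\Sigma}\sigma\, h(\operatorname{grad}^\Sigma f,\nu)\,\mu(\gamma)$, producing the remaining boundary term, while the bulk term gives $\int_\Sigma f\big(k^2 - \operatorname{Tr}(K^2) - Ric(n,n)\big)\mu(h)$, matching the statement exactly.

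The main obstacle I anticipate is \textbf{sign and convention bookkeeping}: the paper fixes $K(u,v) = -\mathsf{g}(\nabla_u\tilde v,n) = \mathsf{g}(u,\nabla_v n)$ and $\epsilon = \mathsf{g}(n,n)$, and the normal can be spacelike or timelike on different pieces of $\Sigma$, so every occurrence of $\epsilon$ in the evolution equations must be tracked carefully; moreover $\partial\Sigma$ need not be orthogonal to the deformation, so the tangential-part Stokes computation and the normal-part integration by parts must be reconciled on the \emph{same} boundary $\partial\Sigma$ with the \emph{same} outward normal $\nu$ and sign $\sigma = h(\nu,\nu)$. A secondary technical point is justifying the reduction to a geodesic normal foliation: one must check that only the value of $\zeta$ on $\Sigma$ (not its extension) enters the final formula, which follows because changing the extension changes $\zeta$ by a vector field tangent to $\Sigma$ along $\Sigma$, already handled by the tangential case. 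Once the foliation is fixed the remaining computation is the classical first-variation-of-area-type calculation carried one derivative further, and I would present it as a direct coordinate computation as the authors announce ("our proof is elementary, based on a direct computation").
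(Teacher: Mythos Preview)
Your approach is essentially the same as the paper's: compute the variations of $h$, $\mu(h)$, and $k$ under the deformation, combine them, and integrate the resulting divergence over $\partial\Sigma$. The paper carries the tangential and normal parts together throughout (Lemmas~\ref{lemma_1}, \ref{var_h_mu}, \ref{delta_Kk}) rather than splitting them upfront, and derives the evolution formulas from scratch instead of quoting them; with the paper's convention $K(u,v)=\mathsf{g}(u,\nabla_v n)$ one finds $\delta h = 2fK + \pounds_X h$, $\delta\mu(h) = (fk + \operatorname{div}^\Sigma X)\mu(h)$, and $\delta k = -f\operatorname{Tr}(K^2) - f\,Ric(n,n) - \epsilon\Delta^\Sigma f + \mathbf{d}k\cdot X$, so the extra $-\epsilon$ factors in your quoted normal-evolution equations are spurious and, if left uncorrected, would flip the sign of the bulk term (your own combination already shows the inconsistency: your $\partial_t k$ and $\partial_t\mu(h)$ give $+k^2$, not the $-k^2$ you wrote).
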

The proof proceeds in several lemmas. Some of the computations will be done in local charts $\psi _ \Sigma : U \subset \Sigma \rightarrow \psi _ \Sigma(U) \subset   \mathbb{R} ^n$ of $ \Sigma $ and $ \psi _ \mathcal{M} : V \subset  \mathcal{M}  \rightarrow \psi _ \mathcal{M} (V) \subset  \mathbb{R} ^{n+1}$ of $ \mathcal{M} $. By appropriate restriction, we get the local representation of $ \varphi _ \varepsilon $ as
\[
\varphi _ \varepsilon ^{\rm loc}=\psi _ \mathcal{M}  \circ \varphi  _ \varepsilon \circ \psi _ \Sigma  ^{-1} : \psi_ \Sigma (U) \subset \mathbb{R} ^n \rightarrow \psi _ \mathcal{M} (V) \subset \mathbb{R} ^{n+1}.
\]
For brevity, we shall suppress the notation $( \cdot )^{\rm loc}$ everywhere. Without loss of generality, we can assume that locally
\[
\varphi _ \varepsilon (y)= \varphi (y) + \varepsilon  \big( f(y)n( \varphi (y))+ T _y\varphi (X(y))\big),
\]
where $f= \delta \varphi ^\perp \in C^\infty( \Sigma )$ and $X= \delta \varphi ^{\|}  \in \mathfrak{X} ( \Sigma )$.

\begin{lemma}\label{lemma_1} We have the following formulas:
\begin{itemize}
\item[\bf (1)] $\displaystyle\left. \frac{D}{D\varepsilon}\right|_{\varepsilon=0} T \varphi _ \varepsilon (v)= (\mathbf{d} f  \cdot v) n + f \nabla_v n + \nabla _v^ \Sigma X - \epsilon K(v,X)n$, for all $v \in T \Sigma $;
\item[\bf (2)] $\displaystyle\left. \frac{D}{D\varepsilon}\right|_{\varepsilon=0} n_ \varepsilon \circ  \varphi _ \varepsilon =  - \epsilon \operatorname{grad}^ \Sigma f + (\mathbf{i} _XK)^{\sharp_h}$ on $ \Sigma $;
\end{itemize}
where $ \nabla $ and $ \nabla ^ \Sigma $ are Levi-Civita covariant derivative associated to $\mathsf{g}$ on $ \mathcal{M} $ and to $h$ on $ \Sigma $, where $\frac{D}{D\varepsilon} \alpha ( \epsilon) $ denotes the covariant derivative of a curve $ \alpha ( \varepsilon ) \in T \mathcal{M} $ with respect to $ \nabla $, \textcolor{black}{and $ \mathbf{d} $ denotes the exterior derivative.}
\end{lemma}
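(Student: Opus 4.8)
The plan is to work entirely in the local charts already fixed, where the path is written $\varphi_\varepsilon(y)=\varphi(y)+\varepsilon\bigl(f(y)n(\varphi(y))+T_y\varphi(X(y))\bigr)$ with $\varphi$ the local representative of the inclusion $i_\Sigma$. First I would note that taking this linear path is harmless: the covariant derivative $\frac{D}{D\varepsilon}\big|_{\varepsilon=0}$ of a curve in $T\mathcal{M}$ depends only on the $1$-jet at $\varepsilon=0$ of that curve and of its base curve, so $\varphi_\varepsilon$ may be replaced by any path with the prescribed $1$-jet, and the orthogonal splitting of $\frac{d}{d\varepsilon}\big|_0(\varphi_\varepsilon\circ i_\Sigma)$ into a normal part $fn$ and a tangential part $T\varphi(X)$ is exactly the data $f,X$. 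Throughout, $T\Sigma$ is identified with the corresponding subbundle of $T\mathcal{M}|_\Sigma$ via $T\varphi$.

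For \textbf{(1)}, I would fix $v\in T_y\Sigma$, choose a curve $\gamma$ in $\Sigma$ with $\gamma(0)=y$ and $\dot\gamma(0)=v$, and form the two-parameter map $c(\varepsilon,t)=\varphi_\varepsilon(\gamma(t))$. Then $\partial_t c(\varepsilon,0)=T_y\varphi_\varepsilon(v)$ while $\partial_\varepsilon c(0,t)=f(\gamma(t))n(\varphi(\gamma(t)))+T_{\gamma(t)}\varphi(X(\gamma(t)))$. Torsion-freeness of $\nabla$ gives the symmetry identity $\frac{D}{D\varepsilon}\partial_t c=\frac{D}{Dt}\partial_\varepsilon c$, hence
\[
\left.\frac{D}{D\varepsilon}\right|_{\varepsilon=0}T_y\varphi_\varepsilon(v)=\left.\frac{D}{Dt}\right|_{t=0}\bigl(fn+X\bigr)=\nabla_v\bigl(fn+X\bigr).
\]
Expanding by the Leibniz rule and metric compatibility, $\nabla_v(fn+X)=(\mathbf{d} f\cdot v)n+f\,\nabla_v n+\nabla_v X$, and substituting the Gauss formula $\nabla_v X=\nabla^\Sigma_v X-\epsilon K(v,X)n$ recalled above yields \textbf{(1)}.

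For \textbf{(2)}, I would set $N(\varepsilon)=n_\varepsilon\circ\varphi_\varepsilon$, a curve in $T\mathcal{M}$ over $\varepsilon\mapsto\varphi_\varepsilon(y)$, and exploit its two defining properties. Differentiating $\mathsf{g}(N,N)=\epsilon$ shows $\frac{D}{D\varepsilon}\big|_0 N$ is $\mathsf{g}$-orthogonal to $n$, hence tangent to $\Sigma$. Differentiating $\mathsf{g}\bigl(N(\varepsilon),T\varphi_\varepsilon(v)\bigr)=0$ at $\varepsilon=0$ and inserting \textbf{(1)} gives, for every $v\in T\Sigma$,
\[
\mathsf{g}\!\left(\left.\tfrac{D}{D\varepsilon}\right|_0 N,\,v\right)=-\,\mathsf{g}\bigl(n,\,(\mathbf{d} f\cdot v)n+f\nabla_v n+\nabla^\Sigma_v X-\epsilon K(v,X)n\bigr);
\]
using $\mathsf{g}(n,n)=\epsilon$, $\mathsf{g}(n,\nabla_v n)=\tfrac12\nabla_v\mathsf{g}(n,n)=0$, and $\nabla^\Sigma_v X\perp n$, the right-hand side collapses to $-\epsilon\,\mathbf{d} f\cdot v+K(X,v)$. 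Thus $h\bigl(\tfrac{D}{D\varepsilon}\big|_0 N,\cdot\bigr)=-\epsilon\,\mathbf{d} f+\mathbf{i}_X K$ as a $1$-form on $\Sigma$, and raising the index with $h$ produces \textbf{(2)}.

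The steps are all short computations; the only real care needed is in the bookkeeping — keeping the covariant derivative in $\varepsilon$ distinct from that in $t$ so the symmetry lemma applies cleanly, and consistently regarding $X$, $\nabla_v X$ and $N$ as sections of $T\mathcal{M}|_\Sigma$ before splitting off their normal components through the Gauss formula. I expect that, rather than any analytic point, to be where a careless argument would go wrong.
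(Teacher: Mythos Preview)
Your proposal is correct and follows essentially the same route as the paper: both exchange the $\varepsilon$- and $t$-covariant derivatives via the symmetry lemma (the paper uses the straight line $y+tv$ in coordinates where you use an abstract curve $\gamma$), then apply the Gauss splitting $\nabla_vX=\nabla^\Sigma_vX-\epsilon K(v,X)n$ for \textbf{(1)}, and for \textbf{(2)} both differentiate the unit-norm and orthogonality conditions on $n_\varepsilon\circ\varphi_\varepsilon$, insert \textbf{(1)}, and read off the tangential result $-\epsilon\,\mathbf{d}f+\mathbf{i}_XK$ before raising with $h$. Your preliminary remark that only the $1$-jet of $\varphi_\varepsilon$ matters is a nice justification the paper leaves implicit.
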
 

\begin{proof} For {\bf(1)}, we compute
\begin{align*} 
&\left. \frac{D}{D\varepsilon}\right|_{\varepsilon=0} T _y\varphi _ \varepsilon (v)= \left. \frac{D}{D\varepsilon}\right|_{\varepsilon=0} \left.\frac{d}{dt} \right|_{t=0} \varphi _ \varepsilon (y+ t v)= \left. \frac{D}{Dt}\right|_{t=0} \left.\frac{d}{d\varepsilon } \right|_{\varepsilon =0} \varphi _ \varepsilon (y+ t v)\\
&= \left. \frac{D}{Dt}\right|_{t=0} \big( f(y+ tv)n( \varphi (y+ t v))+ T_{y+ t v} \varphi (X(y+ t v))\big)\\
&=( \mathbf{d} f \cdot v) n( \varphi (y))+ f(y) \nabla _{ T_y \varphi ( v)} n + \nabla _{ T_y \varphi ( v)} T \varphi (X)(y)\\
&= (\mathbf{d} f \cdot v) n( \varphi (y))+ f(y) \nabla _{ T_y \varphi ( v)} n + T _y\varphi (\nabla ^ \Sigma _{v} X) - \epsilon K(v, X) n( \varphi (y)).
\end{align*} 
To prove {\bf(2)} we first note that from the equality $\mathsf{g}( \varphi _ \varepsilon (y))\left( n_ \varepsilon ( \varphi _ \varepsilon (y)), T_y \varphi _ \varepsilon ( v) \right) =0$, $\forall v \in T \Sigma$, we get
\begin{align*}
&\mathsf{g}( \varphi (y))\Big( \left. \frac{D}{D\varepsilon}\right|_{\varepsilon=0}  \!\!n_ \varepsilon ( \varphi _ \varepsilon (y))  , T_y \varphi  (v) \Big) + \mathsf{g}( \varphi  (y))\Big( n ( \varphi  (y)), \left. \frac{D}{D\varepsilon}\right|_{\varepsilon=0} \!\!\textcolor{black}{T _y\varphi _ \varepsilon} (v) \Big) =0, \;\; \forall v.
\end{align*}
By using {\bf(1)} in the second term, we get
\begin{equation}\label{intermediate_step} 
\mathsf{g}( \varphi (y))\Big( \left. \frac{D}{D\varepsilon}\right|_{\varepsilon=0} \!\! n_ \varepsilon ( \varphi _ \varepsilon (y)) , T _y\varphi  ( v) \Big)=-  \epsilon \mathbf{d} f \cdot  v  +   K(v, X), \;\; \forall v.
\end{equation} 
From the equality $\mathsf{g}( \varphi _ \varepsilon (y))\left( n_ \varepsilon ( \varphi _ \varepsilon (y)),n_ \varepsilon ( \varphi _ \varepsilon (y))\right) = \epsilon $, we get
\[
\mathsf{g}( \varphi  (y))\Big( \left. \frac{D}{D\varepsilon}\right|_{\varepsilon=0}\!\! n_ \varepsilon ( \varphi _ \varepsilon (y)),n ( \varphi  (y))\Big)=0
\]
hence we can write $ \left. \frac{D}{D\varepsilon}\right|_{\varepsilon=0} n_ \varepsilon \textcolor{black}{ ( \varphi _ \varepsilon (y))}= T_y \varphi (W(y))$ for some vector field $W$ on $ \Sigma $. Inserting this equality in \eqref{intermediate_step} and using $\mathsf{g}( \varphi (y))\left(T_y \varphi (W(y)), T _y\varphi  (v) \right)=h(y)(W(y), v)$, we get the equality claimed in {\bf(2)}. \end{proof}

\begin{lemma}\label{var_h_mu} The variation of the induced metric $h= i_{ \Sigma } ^* \mathsf{g}$ and associated volume $ \mu (h)$ with respect to the hypersurface are
\begin{itemize}
\item[\bf (1)] $\delta h = 2 f K + \pounds _ X h$;
\item[\bf (2)] $\delta  (\mu (h))= \left( f k + \operatorname{div}^ \Sigma  X \right)  \mu (h)$.
\end{itemize}
\end{lemma}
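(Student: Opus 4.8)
The plan is to compute both variations directly from the definition
$h_\varepsilon(y)(u,v)=\mathsf{g}(\varphi_\varepsilon(y))\bigl(T_y\varphi_\varepsilon(u),T_y\varphi_\varepsilon(v)\bigr)$, $u,v\in T_y\Sigma$, differentiating the tangent maps $T\varphi_\varepsilon$ by means of Lemma~\ref{lemma_1}(1) and using that $\mathsf{g}$ is parallel for its Levi-Civita connection in order to move the $\varepsilon$-derivative past $\mathsf{g}$.

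For \textbf{(1)}, I would differentiate the identity above at $\varepsilon=0$. Since $\nabla\mathsf{g}=0$, the covariant derivative along the curve $\varepsilon\mapsto\varphi_\varepsilon(y)$ commutes with the metric, so
\[
\delta h(u,v)=\mathsf{g}\Bigl(\tfrac{D}{D\varepsilon}\big|_{0}T\varphi_\varepsilon(u),v\Bigr)+\mathsf{g}\Bigl(u,\tfrac{D}{D\varepsilon}\big|_{0}T\varphi_\varepsilon(v)\Bigr).
\]
Substituting Lemma~\ref{lemma_1}(1), and observing that $u,v\in T\Sigma$ are $\mathsf{g}$-orthogonal to $n$ so that the terms proportional to $n$ in $\tfrac{D}{D\varepsilon}|_{0}T\varphi_\varepsilon(\cdot)$ are annihilated upon pairing against a tangent vector, only the contributions $f\nabla n$ and $\nabla^\Sigma X$ survive. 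Using the sign convention $K(u,v)=\mathsf{g}(u,\nabla_v n)$ together with the symmetry of $K$, one gets $\mathsf{g}(\nabla_u n,v)=K(u,v)$, and $\mathsf{g}(\nabla^\Sigma_u X,v)=h(\nabla^\Sigma_u X,v)$ since both vectors are tangent to $\Sigma$. Hence
\[
\delta h(u,v)=2fK(u,v)+h(\nabla^\Sigma_u X,v)+h(\nabla^\Sigma_v X,u),
\]
and the last two terms are exactly $(\pounds_X h)(u,v)$ because $\nabla^\Sigma$ is the torsion-free metric connection of $h$; this gives $\delta h=2fK+\pounds_X h$.

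For \textbf{(2)}, I would invoke the standard first variation of a volume form under a variation of its metric: in a local chart $\mu(h)=\sqrt{|\det h_{ab}|}\,d^ny$, whence $\delta\mu(h)=\tfrac12\,h^{ab}\delta h_{ab}\,\mu(h)$. Inserting the formula of part~(1) and using $h^{ab}K_{ab}=\operatorname{Tr}_h K=k$ together with $\tfrac12\,h^{ab}(\pounds_X h)_{ab}=\nabla^\Sigma_a X^a=\operatorname{div}^\Sigma X$ yields $\delta\mu(h)=(fk+\operatorname{div}^\Sigma X)\,\mu(h)$. Equivalently, by linearity of $h\mapsto D\mu(h)$ one may split the $2fK$-part, handled by the metric formula, from the $\pounds_X h$-part, which is represented by the flow of $X$ and therefore contributes $\pounds_X\mu(h)=(\operatorname{div}^\Sigma X)\mu(h)$ by definition of the divergence.

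There is no serious obstacle here; the only points requiring care are the bookkeeping of the normal/tangential decomposition — in particular that $T\varphi_\varepsilon(u)$ for $u\in T\Sigma$ acquires, to first order, a normal component which drops out when paired against a tangent vector — and consistency with the paper's sign convention for the extrinsic curvature $K$.
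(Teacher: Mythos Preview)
Your proof is correct and follows essentially the same approach as the paper: differentiate $h_\varepsilon(u,v)=\mathsf{g}(T\varphi_\varepsilon(u),T\varphi_\varepsilon(v))$, apply Lemma~\ref{lemma_1}(1), drop the normal contributions against tangent vectors, and identify the remainder; then trace against $h^{ab}$ for the volume form. The only cosmetic difference is that the paper passes through $\delta\det h$ explicitly before taking the square root, whereas you invoke $\delta\mu(h)=\tfrac12 h^{ab}\delta h_{ab}\,\mu(h)$ directly.
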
 
\begin{proof} To prove {\bf(1)}, we note that for all $u,v \in T \Sigma $, we have
\begin{align*} 
\delta h(y)(u,v)&= \left. \frac{d}{d\varepsilon}\right|_{\varepsilon=0} \mathsf{g}( \varphi _{\varepsilon}(y))\big( T_y \varphi _{\varepsilon}(u), T_y \varphi_{\varepsilon} ( v)\big)\\
&=\mathsf{g}( \varphi (y))\Big(  \left. \frac{D}{D\varepsilon}\right|_{\varepsilon=0}T_y \varphi_{ \varepsilon } ( u), T_y \varphi ( v)\Big) \\
& \qquad \qquad \qquad   + \mathsf{g}( \varphi (y))\Big(   T_y \varphi (u), \left. \frac{D}{D\varepsilon}\right|_{\varepsilon=0}T_y \varphi_{ \varepsilon } ( v)\Big).
\end{align*} 
Using Lemma \ref{lemma_1} {\bf(1)}, we get
\begin{align*} 
\delta h(y)(u,v)&=2 f(y)K(y)( u, v) + h(y)\big(  \nabla ^ \Sigma _ u X, v  \big) + h(y)\big( u ,  \nabla ^ \Sigma _ v X \big)\\
&=2 f(y)K(y)( u, v)  + \pounds _X h( u, v).
\end{align*}
To prove {\bf(2)}, we first use {\bf(1)} and $h^{ab}\pounds _X h_{ab}= h^{ab}h_{cb}X^c_{;a}+ h^{ab}h_{ac}X^c_{;b}= 2 \operatorname{div}^ \Sigma X$, \textcolor{black}{with the semicolon indicating the covariant differentiation associated with $ \nabla ^ \Sigma $}, to compute
\begin{align*}
\delta \operatorname{det}h&= (\operatorname{det}h) h^{ab} \delta h_{ab}=  (\operatorname{det}h) h^{ab}( 2 f K_{ab} + (\pounds _X h)_{ab})=2(\operatorname{det}h) ( f k  +  \operatorname{div}^ \Sigma X).
\end{align*}
Then, for $\operatorname{det}(h)>0$, we have
\begin{align*}
\delta \sqrt{ \operatorname{det}(h)}&=\frac{1}{2\sqrt{ \operatorname{det}(h)} }2(\operatorname{det}h) ( f k  +  \operatorname{div}^ \Sigma X)=\sqrt{ \operatorname{det}(h)} ( f k  +  \operatorname{div}^ \Sigma X),
\end{align*} 
from which {\bf (2)} follows, similarly for $ \operatorname{det}(h)<0$. 
\end{proof}
 
\begin{lemma}\label{delta_Kk}  The variation, with respect to the hypersurface, of the extrinsic curvature $K$ and its trace $k$ is
\begin{itemize}
\item[\bf (1)] $\delta K = f K ^2   - f R\textcolor{black}{ (n ,\cdot \,, n, \cdot \, ) }  -  \epsilon   \nabla ^\Sigma \mathbf{d} f   + \pounds _ X K $;
\item[\bf (2)] $\delta k = -f \operatorname{Tr}_h(K ^2 ) - f Ric(n,n)  - \epsilon \Delta ^ \Sigma f  + \mathbf{d} k \cdot X$;
\end{itemize}
where $K ^2 _{ab}=K_{ae}h^{ef}K_{fb}$.
\end{lemma}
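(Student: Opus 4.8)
The plan is to compute $\delta K$ directly by differentiating, along the variation $\varphi_\varepsilon$, the pulled-back second fundamental form, and then to get $\delta k$ by tracing the result with $h^{-1}$. Fix $u,v\in T_y\Sigma$ and use the characterization $K(a,b)=\mathsf{g}(\nabla_a n,b)$ of the extrinsic curvature (equivalent, by symmetry of $K$, to the one in the Gauss formula $\nabla_u\tilde v=\nabla^\Sigma_u v-\epsilon K(u,v)n$ recalled above). Writing $(\varphi_\varepsilon^{\,*}K)(y)(u,v)=\mathsf{g}(\varphi_\varepsilon(y))\big(\nabla_{T_y\varphi_\varepsilon u}n_\varepsilon,\ T_y\varphi_\varepsilon v\big)$ with $n_\varepsilon$ the unit normal of $\varphi_\varepsilon(\Sigma)$, metric compatibility of $\nabla$ gives
\[
\delta K(u,v)=\mathsf{g}\Big(\left.\tfrac{D}{D\varepsilon}\right|_{0}\big(\nabla_{T\varphi_\varepsilon u}n_\varepsilon\big),\,v\Big)+\mathsf{g}\Big(\nabla_u n,\ \left.\tfrac{D}{D\varepsilon}\right|_{0}\big(T\varphi_\varepsilon v\big)\Big).
\]
The second term is handled at once by Lemma~\ref{lemma_1}~{\bf (1)}: since $\nabla_u n,\nabla_v n\in T\Sigma$, the normal pieces drop and one gets $f\,\mathsf{g}(\nabla_u n,\nabla_v n)+\mathsf{g}(\nabla_u n,\nabla^\Sigma_v X)=fK^2(u,v)+K(u,\nabla^\Sigma_v X)$ with $K^2_{ab}=K_{ae}h^{ef}K_{fb}$.

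For the first term, note $\nabla_{T\varphi_\varepsilon u}n_\varepsilon=\left.\tfrac{D}{Dt}\right|_{t=0}Z$ for the two-parameter family $\sigma(\varepsilon,t)=\varphi_\varepsilon(c(t))$, $Z(\varepsilon,t)=n_\varepsilon(\sigma(\varepsilon,t))$, with $c$ a curve in $\Sigma$, $c(0)=y$, $\dot c(0)=u$. Commuting the covariant derivatives of $Z$ along $\sigma$ produces the ambient curvature,
\[
\tfrac{D}{D\varepsilon}\tfrac{D}{Dt}Z=\tfrac{D}{Dt}\tfrac{D}{D\varepsilon}Z+R(\partial_\varepsilon\sigma,\partial_t\sigma)Z,
\]
and at $\varepsilon=t=0$, using $\partial_\varepsilon\sigma|_0=fn+X$, $\partial_t\sigma|_0=u$ and $\frac{D}{D\varepsilon}|_0 Z(\varepsilon,t)=Y(c(t))$ with $Y:=-\epsilon\operatorname{grad}^\Sigma f+(\mathbf{i}_X K)^{\sharp_h}$ from Lemma~\ref{lemma_1}~{\bf (2)}, the right-hand side is $\nabla_uY+fR(n,u)n+R(X,u)n$. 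Expanding $\nabla_uY$ by the Gauss formula and pairing with $v$ kills all $n$-components: the $\operatorname{grad}^\Sigma f$ part of $\nabla^\Sigma_uY$ gives the Hessian $-\epsilon(\nabla^\Sigma\mathbf{d}f)(u,v)$, the $(\mathbf{i}_X K)^{\sharp_h}$ part gives $(\nabla^\Sigma_uK)(X,v)+K(\nabla^\Sigma_uX,v)$, and the curvature gives $f\,\mathsf{g}(R(n,u)n,v)+\mathsf{g}(R(X,u)n,v)$. Collecting,
\[
\delta K(u,v)=fK^2(u,v)+f\,\mathsf{g}(R(n,u)n,v)-\epsilon(\nabla^\Sigma\mathbf{d}f)(u,v)+(\nabla^\Sigma_uK)(X,v)+K(\nabla^\Sigma_uX,v)+K(u,\nabla^\Sigma_vX)+\mathsf{g}(R(X,u)n,v).
\]

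The last four terms combine into $\pounds_XK(u,v)$: the Codazzi--Mainardi identity $(\nabla^\Sigma_uK)(X,v)-(\nabla^\Sigma_XK)(u,v)=-\mathsf{g}(R(X,u)n,v)$ converts $(\nabla^\Sigma_uK)(X,v)+\mathsf{g}(R(X,u)n,v)$ into $(\nabla^\Sigma_XK)(u,v)$, and $(\nabla^\Sigma_XK)(u,v)+K(\nabla^\Sigma_uX,v)+K(u,\nabla^\Sigma_vX)=\pounds_XK(u,v)$ since $\nabla^\Sigma$ is torsion-free; moreover $f\,\mathsf{g}(R(n,u)n,v)$ is $-f\,R(n,\_\,,n,\_\,)$ in the curvature-sign convention of the statement. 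This proves {\bf (1)}. For {\bf (2)} one takes the trace: $k=h^{ab}K_{ab}$, so $\delta k=-h^{ac}h^{bd}(\delta h)_{cd}K_{ab}+h^{ab}(\delta K)_{ab}$; insert $\delta h=2fK+\pounds_X h$ from Lemma~\ref{var_h_mu}~{\bf (1)} and the $\delta K$ formula above. The first term gives $-2f\operatorname{Tr}_h(K^2)-\langle\pounds_X h,K\rangle_h$ (contraction with $h^{-1}$ on both pairs). In the second, $h^{ab}(fK^2)_{ab}=f\operatorname{Tr}_h(K^2)$; $h^{ab}\mathsf{g}(R(n,e_a)n,e_b)=-Ric(n,n)$ (trace identity from $\mathsf{g}^{-1}=h^{-1}+\epsilon\,n\otimes n$ on $T\mathcal{M}|_\Sigma$ and the symmetries of $R$), giving $-f\,Ric(n,n)$; $h^{ab}(\nabla^\Sigma\mathbf{d}f)_{ab}=\Delta^\Sigma f$; and $h^{ab}(\pounds_X K)_{ab}=\pounds_X k+\langle\pounds_X h,K\rangle_h=\mathbf{d}k\cdot X+\langle\pounds_X h,K\rangle_h$. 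Summing, the $\langle\pounds_X h,K\rangle_h$ terms cancel and $f\operatorname{Tr}_h(K^2)-2f\operatorname{Tr}_h(K^2)=-f\operatorname{Tr}_h(K^2)$, which is {\bf (2)}.

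The step I expect to be the main obstacle is the reorganization of the $X$-dependent terms into a Lie derivative: a naive expansion produces $(\nabla^\Sigma_uK)(X,v)$ rather than the $(\nabla^\Sigma_XK)(u,v)$ needed to complete $\pounds_XK$, and it is exactly the curvature term $\mathsf{g}(R(X,u)n,v)$---which itself appears only because of the commutator of covariant derivatives in the two-parameter family---that Codazzi--Mainardi trades for the missing piece. A secondary point needing care is pinning down the argument order and sign of the curvature term coming out of the commutation identity and matching the resulting Riemann and Ricci terms to the sign conventions fixed around \eqref{l_geom}.
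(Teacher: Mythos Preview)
Your proof is correct and follows essentially the same route as the paper's: differentiate the pulled-back second fundamental form, invoke Lemma~\ref{lemma_1} for the variations of $T\varphi_\varepsilon$ and $n_\varepsilon$, commute $\frac{D}{D\varepsilon}$ with $\frac{D}{Dt}$ to produce the ambient curvature, and use Codazzi--Mainardi to trade the tangential curvature piece for the missing covariant derivative that completes $\pounds_X K$. The only cosmetic difference is that you start from $K(u,v)=\mathsf{g}(\nabla_u n,v)$ while the paper starts from $K(u,v)=\mathsf{g}(u,\nabla_v n)$, so the roles of $u$ and $v$ are swapped throughout; by symmetry of $K$ this is immaterial, and the Codazzi cancellation you flag as the main obstacle is handled identically in both arguments.
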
 
\begin{proof} {\bf(1)} From the definition of the extrinsic curvature
\begin{align*} 
K(y)(u,v)&= \mathsf{g}( \varphi (y)) \left( T_y \varphi ( u), \nabla _{ T _y\varphi ( v)} n \right) =  \mathsf{g}( \varphi (y)) \Big( T_y \varphi (u), \left.\frac{D}{Dt} \right|_{t=0}  n( \varphi (y+ t v)) \Big),
\end{align*} 
we get
\begin{equation}\label{two_terms} 
\begin{aligned}
\delta K(y)(u,v)&= \left. \frac{d}{d\varepsilon}\right|_{\varepsilon=0}  \mathsf{g}( \varphi _ \varepsilon (y)) \Big( T_y \varphi_ \varepsilon  ( u), \left.\frac{D}{Dt} \right|_{t=0}  n_ \varepsilon ( \varphi_ \varepsilon  (y+ t v)) \Big)\\
&= \mathsf{g}( \varphi (y)) \Big( \left. \frac{D}{D\varepsilon}\right|_{\varepsilon=0} T_y \varphi _ \varepsilon ( u), \left.\frac{D}{Dt} \right|_{t=0}  n ( \varphi (y+ t v)) \Big)\\
& \qquad +  \mathsf{g}( \varphi (y)) \Big( T_y \varphi  (u), \left. \frac{D}{D\varepsilon}\right|_{\varepsilon=0} \left.\frac{D}{Dt} \right|_{t=0}  n_ \varepsilon ( \varphi _ \varepsilon (y+ t v)) \Big).
\end{aligned}
\end{equation}

\noindent (1.A) We start by treating the first term in \eqref{two_terms}. Since the vector field in the second slot is parallel to $ \Sigma $, by Lemma \ref{lemma_1} {\bf(1)} the first term in \eqref{two_terms} can be written as
\begin{equation}\label{A} 
\!\!\! \mathsf{g}( \varphi (y)) \left(   f(y) \nabla _{ T_y \varphi ( u)} n + T _y\varphi (\nabla ^ \Sigma _u X) , \nabla _{ T _y\varphi (v)} n \right)=f(y) K^2(u,v) + K( \nabla _u^ \Sigma X, v).
\end{equation}
The equality above is obtained from $ \mathsf{g}( \varphi (y)) \left(   \nabla _{ T_y \varphi ( \partial _a)} n  , \nabla _{ T _y\varphi ( \partial _b)} n \right)=K_{ad}h^{df}K_{fb}$, which follows from the fact that the vector $\nabla _{ T _y\varphi ( \partial _b)} n$ is parallel to $ \Sigma $ and can be written as $\nabla _{ T _y\varphi ( \partial _a)} n ^ \alpha = \varphi ^ \alpha _{,b} h^{cb}K_{ca}$ by definition of $K$.

\noindent (1.B) To treat the second term in  \eqref{two_terms}, we use
\[
\frac{D}{D\varepsilon} \frac{D}{Dt}  \alpha ( \varepsilon , t) - \frac{D}{Dt}  \frac{D}{D \varepsilon }\alpha ( \varepsilon , t) = R \Big( \frac{d}{d \varepsilon  }  \gamma  ( \varepsilon , t),  \frac{d}{d t} \gamma  ( \varepsilon , t)\Big)  \alpha  (  \varepsilon  , t) 
\]
for $ \alpha  (\varepsilon , t) \in T _{ \gamma ( \varepsilon , t)}\mathcal{M} $ and we compute
\begin{align*} 
&\left. \frac{D}{D\varepsilon}\right|_{\varepsilon=0} \left.\frac{D}{Dt} \right|_{t=0}  n_ \varepsilon ( \varphi _ \varepsilon (y+ t v)) \\
&= \left.\frac{D}{Dt} \right|_{t=0}  \left. \frac{D}{D\varepsilon}\right|_{\varepsilon=0}  n_ \varepsilon( \varphi _ \varepsilon (y+ t v)) + R \Big( \left. \frac{d}{d\varepsilon}\right|_{\varepsilon=0} \varphi _ \varepsilon (y),  \left. \frac{d}{d t}\right|_{t=0}  \varphi  (y+ t v)\Big) n( \varphi (y))\\
&= \left.\frac{D}{Dt} \right|_{t=0}  T_y \varphi \Big( \big( - \epsilon \operatorname{grad}^ \Sigma f + (\mathbf{i} _XK)^{\sharp_h} \big)  (y+ t v) \Big) \\
& \qquad \qquad \qquad + R \big( f(y)n( \varphi (y))+ T \varphi (X(y)), T_y \varphi  (v)\big) n( \varphi (y)).
\end{align*}
From this equality, we can rewrite the second term in \eqref{two_terms} as  
\begin{align*} 
& \mathsf{g}( \varphi (y)) \Big( T_y \varphi  ( u),T_y \varphi \Big(  \left.\frac{D^ \Sigma }{Dt} \right|_{t=0}  \left( - \epsilon \operatorname{grad}^ \Sigma f + (\mathbf{i} _XK)^{\sharp_h} \right)  (y+ t v) \Big) \Big) \\
& \qquad + R( \varphi (y)) \big( T_y \varphi  ( u),n( \varphi (y)),f(y)n( \varphi (y))+ T \varphi (X(y)), T_y \varphi  ( v) \big) \\
&=h(y) \Big( u,  \nabla ^ \Sigma _ v \big( - \epsilon \operatorname{grad}^ \Sigma f + (\mathbf{i} _XK)^{\sharp_h} \big)  (y) \Big) \\
& \qquad + f(y)R( \varphi (y)) \big( T_y \varphi  ( u),n( \varphi (y)),n( \varphi (y)), T_y \varphi  (v)\big)\\
& \qquad  + R( \varphi (y)) \big( T_y \varphi  ( u),n( \varphi (y)), T \varphi (X(y)), T_y \varphi  (v) \big),
\end{align*} 
where $R(u,v,w,z)= \mathsf{g}(u,R(w,z)v)$. We note that the first term can be written as $- \epsilon \left\langle \nabla _v^ \Sigma \mathbf{d} f, u \right\rangle + \nabla _v^ \Sigma  K(X, u) + K( \nabla _v^ \Sigma X, u) $, while from the Gauss-Codazzi equation, the last term can be written as $\nabla^ \Sigma  _X K( u,v) - \nabla^ \Sigma  _v K( u, X)$. From this, the second term of \eqref{two_terms} can be written as
\begin{equation}\label{B}
\begin{aligned} 
&- \epsilon \left\langle \nabla _v^ \Sigma \mathbf{d} f, u \right\rangle - f(y)R( \varphi (y)) \big( n( \varphi (y)), T_y \varphi  ( u), n( \varphi (y)), T_y \varphi  (v)\big)  \\
& \qquad \qquad \qquad   + K( \nabla _v^ \Sigma X,u) + \nabla^ \Sigma  _X K(u, v) .
\end{aligned}
\end{equation}  
By summing the results obtained in \eqref{A} and \eqref{B}, we get the statement claimed in {\bf(1)}.

{\bf(2)} We have $\delta k = \delta ( h^{ab}K_{ab}) = \delta h^{ab} K_{ab}+ h^{ab} \delta K_{ab}$. By Lemma \ref{var_h_mu} {\bf (1)} the first term is
\begin{align*} 
\delta h^{ab} K_{ab}&=- h^{ad} \delta h_{dc} h^{cb}K_{ab} =  - h^{ad} 2 f K_{dc} h^{cb} K_{ab}  -h^{ad}  (\pounds _X h_{dc}) h^{cb} K_{ab}\\
&=-2 f \operatorname{Tr}_h(K ^2 ) + \pounds _Xh ^{-1}\!:\!K,
\end{align*} 
where we used $ \pounds _X h ^{-1} \!: \!h + h ^{-1} \!:\!\pounds _Xh=0$. By Lemma \ref{delta_Kk} {\bf (1)} the second term is
\[
h^{ab}\delta K_{ab} = f(y) \operatorname{Tr}_h(K ^2 )- f(y) Ric \big(n( \varphi (y)), n( \varphi (y))\big) - \epsilon \Delta ^ \Sigma f + h ^{-1} : \pounds _XK,
\]
where we used
\[
h^{ab}R( \varphi (y)) \Big( n( \varphi (y)), T_y \varphi  ( \partial _a), n( \varphi (y)), T_y \varphi  (\partial _b)\Big)= Ric \big(n( \varphi (y)), n( \varphi (y))\big).
\]
The final result follows from $h ^{-1} \!: \!\pounds _XK + \pounds _Xh ^{-1}\! :\! \pounds _XK=\pounds _X( h ^{-1}\! :\!K)= \mathbf{d} k \cdot  X$.
\end{proof}

\noindent \textbf{Proof of Theorem \ref{var_GHY}.} From Lemma \ref{delta_Kk} {\bf (2)} and Lemma \ref{var_h_mu} {\bf (2)}, we have
\begin{align*} 
\delta ( k \mu (h))&= \delta k \mu (h) + k \delta ( \mu (h))\\
&= -f \operatorname{Tr}_h(K ^2 )\mu (h) - f Ric\big(n( \varphi (y), n( \varphi (y))\big) \mu (h) \\
& \quad - \epsilon \Delta ^ \Sigma f\mu (h) + \mathbf{d} k \cdot X \mu (h) + k (f k + \operatorname{div}^ \Sigma X) \mu (h)\\
&= f\Big( k ^2 - \operatorname{Tr}_h(K ^2 ) -  Ric\big(n( \varphi (y), n( \varphi (y))\Big) \mu (h) \\
& \quad +\operatorname{div}^ \Sigma ( - \epsilon \operatorname{grad}^ \Sigma f + k X) \mu (h).
\end{align*} 
The result then follows from the divergence theorem
\[
\int_ \Sigma \operatorname{div}^ \Sigma X \mu (h) =  \int_{ \partial \Sigma } \!\sigma \,  h(X, \nu ) \mu ( \textcolor{black}{\vartheta} ),
\]
for $ X \in \mathfrak{X} ( \Sigma )$, where we used that $ \partial \Sigma $ consists of nondegenerate pieces with outward pointing unit vector field $ \nu $, and where $ \sigma =h( \nu , \nu ) \in \{\pm 1\}$ on $ \partial \Sigma$. $\qquad\blacksquare$

\subsection{Variational formulation for relativistic continuum coupled to gravity}

From the developments of the previous section, we can now state and prove our main result concerning the variational formulation for relativistic continuum coupled to gravitation, based on the action functional given in \eqref{total_action}.
We shall consider $ \mathcal{D} = [a,b] \times \mathcal{B} $ where $ \mathcal{B} $ has a smooth boundary and we assume that $ \Phi (a, \mathcal{B} )$ and $ \Phi (b, \mathcal{B} )$ are spacelike hypersurfaces, while $ \Phi ( [a,b] , \partial \mathcal{B} )$ is timelike since $ \Phi $ is a world-tube. In particular, the boundary does not contain null hypersurfaces and no contribution arises from the codimension-two surfaces at which the boundary pieces are joined together in our setting, see Remark \ref{non_smooth_and_null} for these delicate issues.

As explained earlier, we assume that the Lorentizian metrics $ \mathsf{g}^-$ and $ \mathsf{g}^+$ on which the action functional is evaluated satisfy the preliminary junction condition \eqref{junction_first} and we derive the Lanczos-Israel junction condition as a critical point condition. It is also possible to obtain \eqref{junction_first} from the variational formulation without assuming it a priori, by including the corresponding Lagrange multiplier term.

\begin{theorem}\label{main} Let $ \mathscr{L} : J^1( \mathcal{D}  \times \mathcal{M} ) \times T \mathcal{D}  \times T^p_q\mathcal{D}  \times S^2_L \mathcal{M}   \rightarrow \wedge ^{n+1} \mathcal{D} $ be the Lagrangian density of the continuum assumed to be material and spacetime covariant, and consider the associated spacetime Lagrangian density $\ell: T \mathcal{M}  \times  T ^p _q  \mathcal{M}  \times  S^2_L \mathcal{M} \rightarrow \wedge  ^{n+1} \mathcal{M} $.

Fix the reference tensor fields $W \in \mathfrak{X} ( \mathcal{D} )$ and $K \in \mathcal{T} _q^p( \mathcal{D} )$.
For each world-tube $ \Phi : \mathcal{D} \rightarrow \mathcal{M} $, define $\mathcal{N}^-= \Phi ( \mathcal{D})$, $ \mathcal{N} ^+= \mathcal{M} - \operatorname{int}( \mathcal{N} ^-)$, $w= \Phi _* W$, and $ \kappa = \Phi _ * K$. Consider smooth Lorentzian metrics $ \mathsf{g}^\pm \in \mathcal{S} ^2_L( \mathcal{N} ^\pm)$ such that $i_{\partial  \mathcal{N} ^-}^* \mathsf{g}^-= i_{ \partial \mathcal{N} ^+}^* \mathsf{g}^+$.

Then the following statements are equivalent:
\begin{itemize}
\item[\bf (i)]  $w \in \mathfrak{X} (\mathcal{N}^- )$, $ \kappa \in \mathcal{T} ^p_q( \mathcal{N}^- )$, $\textcolor{black}{\mathsf{g}^\pm }\in \mathcal{S} ^2_L( \mathcal{N} ^\pm)$ are critical points of the {\bfi Eulerian variational principle}
\begin{equation}\label{total_action_variation}
\begin{aligned} 
&\hspace{-0.7cm}\left. \frac{d}{d\varepsilon}\right|_{\varepsilon=0} \left[\int_{ \mathcal{N}^- _\varepsilon} \ell( w_\varepsilon , \kappa _\varepsilon ,  \mathsf{g}^- _\varepsilon) + \frac{1}{2 \chi }\int_{ \mathcal{N}^-_\varepsilon}  R( \mathsf{g}^-_\varepsilon) \mu ( \mathsf{g}^-_\varepsilon)+ \frac{1}{2 \chi }\int_{\mathcal{N} ^+_\varepsilon}  R( \mathsf{g}^+_\varepsilon) \mu ( \mathsf{g}^+_\varepsilon)\right. \\
&\qquad \qquad  \qquad   \left. + \frac{1}{ \chi }\int_{ \partial \mathcal{N}^-_\varepsilon} \!\epsilon \,k( \mathsf{g}^-_\varepsilon)   \mu ^-(h_\varepsilon)+ \frac{1}{ \chi }\int_{ \partial  \mathcal{N} ^+_\varepsilon} \!\epsilon\, k( \mathsf{g}^+_\varepsilon)   \mu ^+(h_\varepsilon) \right] =0,
\end{aligned}
\end{equation}
\textcolor{black}{for variations $ \delta \mathcal{N} = \zeta |_{ \partial \mathcal{N} } \big/T \partial \mathcal{N}$, $ \delta w = -\pounds _ \zeta w$, $ \delta \kappa = - \pounds _ \zeta \kappa $, where $ \zeta $ is an arbitrary vector field on $ \mathcal{N} $ such that $ \zeta |_{ \Phi (a, \mathcal{B} )}= \zeta |_{ \Phi (b, \mathcal{B} )}=0$, and for variations $ \delta \mathsf{g}^\pm$ connected to $ \zeta $ on $ \partial \mathcal{N} $ via the infinitesimal version of $i_{\partial  \mathcal{N} ^-_ \varepsilon } ^*\mathsf{g}^-_ \varepsilon = i_{ \partial \mathcal{N} ^+_ \varepsilon } ^*\mathsf{g}^+_ \varepsilon $, see the proof for the concrete formula, with $ \delta  \mathsf{g}|_{ \Phi (a, \mathcal{B} )}=\delta  \mathsf{g}|_{ \Phi (b, \mathcal{B} )}=0$.}
\item[\bf (ii)] $w \in \mathfrak{X} (\mathcal{N}^- )$, $ \kappa \in \mathcal{T} ^p_q( \mathcal{N}^- )$, $\mathsf{g}^\pm \in \mathcal{S} ^2_L( \mathcal{N} ^\pm)$ satisfy
\begin{equation}
\left\{
\begin{array}{l}
\displaystyle\operatorname{div}^ \nabla \! \left( \frac{\partial \ell}{\partial \mathsf{g}^-} \right) =0, \qquad \pounds _w \kappa =0\quad  \text{on} \quad    \mathcal{N}^-\\
\displaystyle\vspace{0.2cm} \textcolor{black}{Ein} (\mathsf{g}^-) \mu (\mathsf{g}^-) = 2 \chi  \frac{\partial \ell}{\partial \mathsf{g}^-}\quad  \text{on} \quad     \mathcal{N}^-\\
\displaystyle\vspace{0.2cm} \textcolor{black}{Ein} (\mathsf{g}^+) =0 \quad  \text{on} \quad     \mathcal{N}^+\\
\displaystyle [K]=0 \quad  \text{on} \quad  \textcolor{black}{\partial_{\rm cont} \mathcal{N}},
\end{array}
\right.
\end{equation}
where $2 \mathsf{g}^- \!\cdot\! \frac{\partial \ell}{\partial \mathsf{g}^-}=\ell \delta  + w \otimes \frac{\partial \ell}{\partial w} -   \frac{\partial \ell}{\partial \kappa }: \widehat{ \kappa  }$, \textcolor{black}{$ \nabla $ is the Levi-Civita covariant derivative associated to $ \mathsf{g}^-$}, and $[K]$ denotes the jump of the extrinsic curvature along $ \partial \mathcal{N} $.
\end{itemize}
\end{theorem}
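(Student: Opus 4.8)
The plan is to establish the equivalence (i)$\Leftrightarrow$(ii) by computing the first variation of the action in \eqref{total_action_variation} and testing it against well-chosen families of variations, treating separately the variations $\delta\mathsf{g}^{\pm}$ of the two metrics and the variations $\delta\Phi$ of the world-tube, the latter moving simultaneously the domain $\mathcal{N}^{-}$ (hence $\mathcal{N}^{+}$ and $\partial\mathcal{N}$) and the push-forwards $w=\Phi_{*}W$, $\kappa=\Phi_{*}K$. For the implication (i)$\Rightarrow$(ii) I would extract each equation of (ii) from a special variation. With $\Phi$ fixed and $\delta\mathsf{g}^{+}$ compactly supported in $\operatorname{int}(\mathcal{N}^{+})$, only the Einstein--Hilbert term on $\mathcal{N}^{+}$ contributes and its standard variation gives $G(\mathsf{g}^{+})=0$; with $\delta\mathsf{g}^{-}$ compactly supported in $\operatorname{int}(\mathcal{N}^{-})$, the Einstein--Hilbert term on $\mathcal{N}^{-}$ and $\int_{\mathcal{N}^{-}}\ell$ contribute, and since $\ell$ carries no derivatives of $\mathsf{g}$ its variation is the pure bulk term $\int_{\mathcal{N}^{-}}(\partial\ell/\partial\mathsf{g}^{-}):\delta\mathsf{g}^{-}$, so that $G(\mathsf{g}^{-})\mu(\mathsf{g}^{-})=2\chi\,\partial\ell/\partial\mathsf{g}^{-}$. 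Then, with the metrics fixed and $\zeta=\delta\Phi\circ\Phi^{-1}$ compactly supported in $\operatorname{int}(\mathcal{N}^{-})$, only $\int_{\mathcal{N}^{-}}\ell$ changes through $\delta w=-\pounds_{\zeta}w$, $\delta\kappa=-\pounds_{\zeta}\kappa$; running the computation of Theorem~\ref{spacetime_reduced_EL} (via Lemma~\ref{technical_lemma}) and then using Lemma~\ref{spacetime_mat_ell}, which yields $\partial^{\nabla}\ell/\partial x=0$ and the identity $\ell\delta+w\otimes\frac{\partial\ell}{\partial w}-\frac{\partial\ell}{\partial\kappa}:\widehat{\kappa}=2\mathsf{g}^{-}\!\cdot\!\frac{\partial\ell}{\partial\mathsf{g}^{-}}$, the bulk term collapses, with $\nabla$ the Levi-Civita connection of $\mathsf{g}^{-}$, to $\operatorname{div}^{\nabla}\!\big(\partial\ell/\partial\mathsf{g}^{-}\big)\cdot\zeta$, giving $\operatorname{div}^{\nabla}(\partial\ell/\partial\mathsf{g}^{-})=0$; the relation $\pounds_{w}\kappa=0$ holds by construction from $\kappa=\Phi_{*}K$ and $\pounds_{W}K=0$.

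For the junction condition, take $\delta\Phi=0$ and $\delta\mathsf{g}^{\pm}$ supported near the timelike part of $\partial\mathcal{N}$, vanishing near the spacelike caps, and satisfying the matching constraint $i_{\partial\mathcal{N}^{-}}^{*}\delta\mathsf{g}^{-}=i_{\partial\mathcal{N}^{+}}^{*}\delta\mathsf{g}^{+}=:\delta h$. On-shell the bulk terms vanish by the previous step, and the remaining Einstein--Hilbert boundary terms $\frac{1}{2\chi}\int_{\partial\mathcal{N}^{\pm}}\mathbf{i}_{\deltabar V^{\pm}}\mu(\mathsf{g}^{\pm})$ cancel against the $\deltabar V$-pieces in the GHY variation of Lemma~\ref{variation_GBY_g}---this cancellation is exactly what the GHY terms are designed to produce---leaving $\frac{1}{2\chi}\int_{\partial\mathcal{N}^{\pm}}\epsilon\,(K^{\pm}-k^{\pm}h):\delta h^{-1}\,\mu^{\pm}(h)$. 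Since $\partial\mathcal{N}^{+}$ and $\partial\mathcal{N}^{-}$ carry opposite boundary orientations and opposite outward normals, adding the two contributions produces the \emph{jump} $[K]-[k]h$ rather than a sum; as $\delta h^{-1}$ ranges over arbitrary symmetric $2$-tensor fields on $\partial\mathcal{N}$ (the caps being frozen and the codimension-two corners contributing nothing under the stated hypotheses, cf.\ Remark~\ref{non_smooth_and_null}), we obtain $[K]-[k]h=0$, and tracing with $h$ gives $(1-n)[k]=0$, hence $[k]=0$ and finally $[K]=0$ on $\partial\mathcal{N}$. This yields all four equations of (ii), so (i)$\Rightarrow$(ii); the converse requires, in addition, that the first variation vanish for variations that move $\partial\mathcal{N}$.

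That last point is the step I expect to be the main obstacle. For a general world-tube variation one must collect (a) the boundary term $\int_{\partial\mathcal{N}^{-}}\operatorname{tr}\big(2\mathsf{g}^{-}\!\cdot\!\frac{\partial\ell}{\partial\mathsf{g}^{-}}\big)\cdot\zeta$ from $\int_{\mathcal{N}^{-}}\ell$ (rewritten through Lemma~\ref{Lemma_boundary}), (b) the domain-motion terms $\frac{1}{2\chi}\int_{\partial\mathcal{N}^{\pm}}\mathbf{i}_{\zeta}\big(R(\mathsf{g}^{\pm})\mu(\mathsf{g}^{\pm})\big)$ from the Einstein--Hilbert integrals, and (c) the variation of the GHY terms with respect to the moving hypersurface, supplied by Theorem~\ref{var_GHY} and involving $k^{2}-\operatorname{Tr}(K^{2})-Ric(n,n)$; then, substituting the Einstein equations of (ii) for $\partial\ell/\partial\mathsf{g}^{-}$ and using $G(\mathsf{g}^{+})=0$, one shows that (a)$+$(b)$+$(c) vanishes identically on $\partial\mathcal{N}$ by means of the contracted Gauss (Hamiltonian-constraint) relation among $G(n,n)$, $R$, $Ric(n,n)$, the intrinsic scalar curvature of $\partial\mathcal{N}$ and $k^{2}-\operatorname{Tr}(K^{2})$, together with $[K]=0$, which forces these quantities to agree from both sides (and also reflects the diffeomorphism invariance of the total action). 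The delicate points are precisely this sign and orientation bookkeeping across $\partial\mathcal{N}^{\pm}$, the essential use of the new formula in Theorem~\ref{var_GHY}, and the control of the corners and frozen caps; the rest is routine integration by parts resting on Lemma~\ref{technical_lemma}.
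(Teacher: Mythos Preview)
Your proposal is correct and follows essentially the same route as the paper's proof: metric variations (with the $\deltabar V$ cancellation between the Einstein--Hilbert and GHY pieces via Lemma~\ref{variation_GBY_g}) to obtain the Einstein equations and $[K]=0$, then world-tube variations combining the continuum boundary term, the moving-domain Einstein--Hilbert contributions, and Theorem~\ref{var_GHY}, closed off by the contracted Gauss--Codazzi identity (cf.\ Remark~\ref{OBrien_Synge}). One point you gloss over and should make explicit: because the admissible variations are constrained by $i_{\partial\mathcal{N}^{-}_{\varepsilon}}^{*}\mathsf{g}^{-}_{\varepsilon}=i_{\partial\mathcal{N}^{+}_{\varepsilon}}^{*}\mathsf{g}^{+}_{\varepsilon}$, a pure world-tube variation $\delta\Phi$ with $\delta\mathsf{g}^{\pm}=0$ is not a priori allowed; the paper checks, using Lemma~\ref{var_h_mu}, that the linearized constraint reads $2\zeta^{\perp}_{-}K(\mathsf{g}^{-})+i_{\partial\mathcal{N}^{-}}^{*}\delta\mathsf{g}^{-}=2\zeta^{\perp}_{+}K(\mathsf{g}^{+})+i_{\partial\mathcal{N}^{+}}^{*}\delta\mathsf{g}^{+}$, which is automatically satisfied once $[K]=0$ holds, so that at the critical point $\delta\Phi$ may indeed be varied independently of $\mathsf{g}^{\pm}$.
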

\begin{proof}  \textcolor{black}{We consider arbitrary variations $\Phi _ \epsilon $ and $ \mathsf{g}_ \varepsilon ^\pm$ with fixed endpoint at $ \lambda  = a,b$, and which satisfy $i_{\partial  \mathcal{N} ^-_ \varepsilon } ^*\mathsf{g}^-_ \varepsilon = i_{ \partial \mathcal{N} ^+_ \varepsilon } ^*\mathsf{g}^+_ \varepsilon $, where $ \mathcal{N} _ \varepsilon ^-= \Phi _ \varepsilon  ( \mathcal{D} )$. As earlier, the constrained variations in {\bf (i)} are induced by the variations $\Phi _ \varepsilon  $ of the world-tube by using the relations $ \mathcal{N} _ \varepsilon = \Phi _ \varepsilon  ( \mathcal{D} )$, $w_ \varepsilon = (\Phi_ \varepsilon  )_*W$, $ \kappa _ \varepsilon = (\Phi_ \varepsilon  )_*K$, and defining $ \zeta= \delta \Phi \circ \Phi ^{-1}$}.\\
(1) \textit{Metric variation}: We first fix the world-tube and take the variation with respect to the Lorentzian metrics $ \mathsf{g}^\pm$. The variation of the Einstein-Hilbert terms is well-known and given by
\begin{align*} 
&\left. \frac{d}{d\varepsilon}\right|_{\varepsilon=0}  \left[ \frac{1}{2 \chi }\int_{ \mathcal{N}^-}  R( \mathsf{g}^-_\varepsilon) \mu ( \mathsf{g}^-_\varepsilon)+ \frac{1}{2 \chi }\int_{\mathcal{N} ^+}  R( \mathsf{g}^+_\varepsilon) \mu ( \mathsf{g}^+_\varepsilon) \right] \\
& = - \frac{1}{ 2\chi}\int_{ \mathcal{N} ^-} \textcolor{black}{Ein} ( \mathsf{g}^-): \delta  \mathsf{g}^-\, \mu ( \mathsf{g}^-)  - \frac{1}{ 2\chi}\int_{ \mathcal{N} ^+} \textcolor{black}{Ein} ( \mathsf{g}^+): \delta  \mathsf{g}^+ \,\mu ( \mathsf{g}^+)\\
&\quad \quad + \frac{1}{2\chi} \int_{ \partial \mathcal{N} ^-} \!\epsilon \, \mathsf{g}^-( \deltabar V^-, n^-) \mu ^-(h)+ \frac{1}{2\chi} \int_{ \partial \mathcal{N} ^+} \!\epsilon\,  \mathsf{g}^+( \deltabar V^+, n^+) \mu ^+(h).
\end{align*} 
On the other hand, denoting $ \partial _p \mathcal{N} ^\pm$, $p=1,2,...$, the smooth pieces of the boundary $\partial \mathcal{N} ^\pm$ and using Lemma \ref{variation_GBY_g} with $ \Sigma = \partial _p \mathcal{N} ^\pm$, the variation of the GHY terms is
\begin{align*} 
&\left. \frac{d}{d\varepsilon}\right|_{\varepsilon=0}  \left[ \frac{1}{ \chi }\int_{ \partial \mathcal{N}^-}  k( \mathsf{g}^-_ \varepsilon )   \mu ^-(h_ \varepsilon )+ \frac{1}{ \chi }\int_{ \partial  \mathcal{N} ^+} k( \mathsf{g}^+_ \varepsilon )   \mu ^+(h_ \varepsilon ) \right] \\
&  =\frac{1}{2\chi}\int_{ \partial \mathcal{N} ^-} \left( (K( \mathsf{g}^-)- k( \mathsf{g}^-)h): \delta h ^{-1} -  \mathsf{g}^-(\deltabar V^-, n^-) \right) \mu ^-(h)\\
&   \quad \quad +\frac{1}{2\chi}\int_{ \partial \mathcal{N} ^+} \left( (K( \mathsf{g}^+)- k( \mathsf{g}^+)h): \delta h ^{-1} -  \mathsf{g}^+(\deltabar V^+, n^+) \right) \mu ^+(h)\\
&   \quad \quad  -\frac{1}{2\chi}\sum_p \int_{ \partial ( \partial _p\mathcal{N} ^-)} \!\sigma _p \,h(\deltabar W^-, \nu _p) \mu _p^-( \textcolor{black}{\vartheta} ) - \frac{1}{2\chi}\sum_p \int_{ \partial ( \partial _p\mathcal{N} ^+)} \!\sigma _p\, h(\deltabar W^+, \nu _p) \mu _p^+( \textcolor{black}{\vartheta} ),
\end{align*}
where $ \nu _p$ is the outward pointing unit normal vector field to $ \partial _p \mathcal{N} ^\pm$ and $ \sigma _p= h( \nu _p, \nu _p)\in \{\pm\}$ on $ \partial (\partial _p \mathcal{N} ^\pm) $.

Note that the terms involving $ \deltabar V$ cancel and hence, the critical conditions of \eqref{total_action_variation} with respect to $ \delta  \mathsf{g}^-$, $ \delta  \mathsf{g}^+$, $ \delta h$ are, respectively:
\begin{align} 
\frac{\partial \ell}{\partial  \mathsf{g}^-}&= \frac{1}{2\chi}  \textcolor{black}{Ein} ( \mathsf{g}^-) \mu ( \mathsf{g}^-) \quad \text{on $ \mathcal{N} ^-$}\label{cond_1}\\
0&= \frac{1}{2\chi}  \textcolor{black}{Ein} ( \mathsf{g}^+) \mu ( \mathsf{g}^+) \quad \text{on $ \mathcal{N} ^+$}\label{cond_2}\\
0&= K( \mathsf{g}^-)- k( \mathsf{g}^-)h+ K( \mathsf{g}^+)- k( \mathsf{g}^+)h  \quad \text{on $\textcolor{black}{\partial _{\rm cont}\mathcal{N}}$}\label{cond_3}. \textcolor{white}{\frac{1}{2} }
\end{align}
The last condition is obtained by noting that $\mu ^-(h)=- \mu ^+( h)$ and $ \partial \mathcal{N} ^-$ and $ \partial \mathcal{N} ^+$ have opposite orientation.
Since $K(\mathsf{g}^+)$ and $k(\mathsf{g}^+)$ are computed with respect to the orientation of $ \partial \mathcal{N} ^+$, similarly for $K( \mathsf{g}^-)$, $k( \mathsf{g}^-)$ and $ \partial \mathcal{N} ^-$, choosing a common orientation on $ \partial \mathcal{N} $, this condition reads $[K-kh]=0$ or, equivalently, $[K]=0$.
\textcolor{black}{This condition only appears on $ \partial _{\rm cont} \mathcal{N} $ because $ \delta  \mathsf{g}|_{ \Phi (a, \mathcal{B} )}=\delta  \mathsf{g}|_{ \Phi (b, \mathcal{B} )}=0$. For the same reason, the integrals over the codimension-two surfaces $ \partial ( \partial _p \mathcal{N} ^\pm)$ above vanish.}

\noindent (2) \textit{World-tube variation}: The world-tube cannot be a priori varied independently of the metric. Let us define $ \zeta = \delta \Phi \circ \Phi ^{-1} $ and write the orthogonal decomposition $ \zeta = \zeta ^\perp_\pm n^\pm + \zeta ^{\|}$ along $\textcolor{black}{\partial \mathcal{N} ^\pm} $ with $ \zeta ^\perp_\pm = \epsilon  \mathsf{g}( \zeta , n^\pm)$. 
Taking the variation of the condition $i_{\partial  \mathcal{N} ^-}^* \mathsf{g}^-= i_{ \partial \mathcal{N} ^+}^* \mathsf{g}^+$ we get $ 2  \zeta ^\perp_- K( \mathsf{g}^-) + \pounds _ { \zeta ^{\|}} i_{ \partial \mathcal{N} ^-} ^*  \mathsf{g}^-+ i_{\partial  \mathcal{N} ^-}^*\delta  \mathsf{g}^-=  2 \zeta ^\perp_+ K( \mathsf{g}^+) + \pounds _ { \zeta ^{\|}} i_{ \partial \mathcal{N} ^+} ^*  \mathsf{g}^++ i_{ \partial \mathcal{N} ^+}^*\delta \mathsf{g}^+$ from Lemma \ref{var_h_mu}. From the junction condition already derived and from $i_{\partial  \mathcal{N} ^-}^* \mathsf{g}^-= i_{ \partial \mathcal{N} ^+}^* \mathsf{g}^+$, we obtain that at the critical condition already derived, $ \delta \Phi $ can be varied independently of $ \mathsf{g}^\pm$.

We now take the variation of \eqref{total_action_variation} with respect to the world-tube. As seen in the proof of Theorem \ref{spacetime_reduced_EL}, the variation of the term associated to the continuum is
\begin{align*} 
\left. \frac{d}{d\varepsilon}\right|_{\varepsilon=0} \int_{ \mathcal{N}^- _\varepsilon} \ell( w_\varepsilon , \kappa _\varepsilon ,  \mathsf{g}^-)&= - \int_ {\mathcal{N}  ^-}\Big(    \operatorname{div}^ \nabla \!\Big( \ell \delta  + w \otimes \frac{\partial \ell}{\partial w} -  \textcolor{black}{  \frac{\partial \ell}{\partial \kappa }\!\therefore\!  \widehat{ \kappa  }}\Big) \cdot \zeta - \frac{\partial ^ \nabla \!\ell}{\partial x} \cdot \zeta  \Big)\\
& \qquad \qquad \qquad + \int_{ \partial \mathcal{N} ^-} \operatorname{tr}  \Big( \ell \delta + w \otimes \frac{\partial \ell}{\partial w}- \textcolor{black}{ \frac{\partial \ell}{\partial \kappa }\!\therefore\!   \widehat{ \kappa  }} \Big) \cdot \zeta,
\end{align*}
with $ \nabla $ the Levi-Civita covariant derivative associated to $ \mathsf{g}^-$ and $ \zeta = \delta \Phi \circ \Phi ^{-1} $.
 
The variation of the Einstein-Hilbert terms with respect to the world-tube is
\begin{align*} 
&\left. \frac{d}{d\varepsilon}\right|_{\varepsilon=0} \left[ \frac{1}{2 \chi }\int_{ \mathcal{N}^-_\varepsilon }  R( \mathsf{g}^-) \mu ( \mathsf{g}^-)+ \frac{1}{2 \chi }\int_{\mathcal{N} ^+_ \varepsilon }  R( \mathsf{g}^+) \mu ( \mathsf{g}^+) \right] \\
& = \frac{1}{ 2\chi}\int_{ \partial  \mathcal{N} ^-} R(  \mathsf{g}^-) \zeta ^\perp_- \mu ^-(h) + \frac{1}{ 2\chi}\int_{ \partial  \mathcal{N} ^+} R(  \mathsf{g}^+) \zeta ^\perp_+ \mu ^+(h) .
\end{align*} 
Finally, using Theorem \ref{var_GHY} with $ \Sigma = \partial _p \mathcal{N} ^\pm$, the variation of the GHY terms yields
\begin{align*} 
&\left. \frac{d}{d\varepsilon}\right|_{\varepsilon=0}  \left[ \frac{1}{ \chi }\int_{ \partial \mathcal{N}^-_\varepsilon}\! \epsilon\, k( \mathsf{g}^-)   \mu ^-(h)+ \frac{1}{ \chi }\int_{ \partial  \mathcal{N} ^+_\varepsilon}\! \epsilon\, k( \mathsf{g}^+)   \mu ^+(h)\right] \\
& =\frac{1}{ \chi }\int_ { \partial \mathcal{N} ^-} \epsilon \,\zeta  ^\perp_-\big[ k( \mathsf{g}^-) ^2 - \operatorname{Tr}(K( \mathsf{g}^-) ^2 ) -  Ric^-\big(n^-,n^-)\big] \mu ^-(h) \\
& \quad \quad  + \frac{1}{ \chi }\sum_p\int_{ \partial ( \partial _p\mathcal{N} ^-)} \epsilon \,\sigma_p h\big(  k( \mathsf{g}^-) \zeta ^{\|} - \epsilon \operatorname{grad}^ { \partial \mathcal{N} }  (\zeta  ^\perp_-), \nu_p \big) \mu _p^-( \textcolor{black}{\vartheta} )\\
& \quad \quad+\frac{1}{ \chi }\int_ { \partial \mathcal{N} ^+} \epsilon \,\zeta  ^\perp_+\big[ k( \mathsf{g}^+) ^2 - \operatorname{Tr}(K( \mathsf{g}^+) ^2 ) -  Ric^+\big(n^+,n^+)\big] \mu ^+(h) \\
& \quad \quad+ \frac{1}{ \chi }\sum_p\int_{ \partial ( \partial _p\mathcal{N} ^+)} \epsilon \,\sigma_p h\big(  k( \mathsf{g}^+) \zeta ^{\|} - \epsilon \operatorname{grad}^ { \partial \mathcal{N} }  (\zeta  ^\perp_+), \nu_p \big) \mu _p^+( \textcolor{black}{\vartheta} ),
\end{align*}
where we used the same notations $\partial _p \mathcal{N} ^\pm$, $ \nu _p$, $ \sigma _p$, as earlier. \textcolor{black}{The integrals over the codimension-two surfaces vanish since $ \zeta |_{ \Phi (a, \mathcal{B} )}= \zeta |_{ \Phi (b, \mathcal{B} )}=0$.}

By grouping these results, we get the following conditions
\[
\operatorname{div}^ \nabla \!\Big( \ell \delta  + w \otimes \frac{\partial \ell}{\partial w} -   \textcolor{black}{ \frac{\partial \ell}{\partial \kappa }\!\therefore\!  \widehat{ \kappa  }}\Big)  = \frac{\partial ^ \nabla \!\ell}{\partial x} \quad \text{on} \quad \mathcal{N} ^-
\]
and, using Lemma \ref{Lemma_boundary} \textbf{(1)}, 
\begin{equation}\label{condition_boundary}
\!\!\begin{aligned}
&\epsilon \,\mathbf{i} _{\textcolor{black}{ n^-}} \Big(\ell \delta  + w \otimes \frac{\partial \ell}{\partial w} -  \textcolor{black}{  \frac{\partial \ell}{\partial \kappa }\!\therefore\!  \widehat{ \kappa  }}\Big)( \zeta ^\flat , \textcolor{black}{ n^-}) \\
&  + \frac{1}{ 2\chi} R(  \mathsf{g}^-) \zeta ^\perp_- \mu ^-(h) - \frac{1}{ 2\chi} R(  \mathsf{g}^+) \zeta ^\perp_+ \mu ^+(h)\\
&  + \frac{1}{ \chi} \epsilon \,\zeta  ^\perp_-\big[ k( \mathsf{g}^-) ^2 - \operatorname{Tr}(K( \mathsf{g}^-) ^2 ) -  Ric^-\big(n^-,n^-)\big] \mu ^-(h)\\
&  - \frac{1}{ \chi} \epsilon \,\zeta  ^\perp_+\big[ k( \mathsf{g}^+) ^2 - \operatorname{Tr}(K( \mathsf{g}^+) ^2 ) -  Ric^+\big(n^+,n^+)\big] \mu ^+(h)=0,\;\forall\; \zeta \;\; \text{on}\;\; \textcolor{black}{\partial_{\rm cont} \mathcal{N}},
\end{aligned}
\end{equation}
\textcolor{black}{where we used $ \zeta |_{ \Phi (a, \mathcal{B} )}= \zeta |_{ \Phi (b, \mathcal{B} )}=0$}.
By using $ \zeta ^\perp_+=- \zeta ^\perp_-$, $ \mu ^+(h)= -\mu ^-(h)$, and the junction condition $[K]=0$ already obtained above, see \eqref{cond_3}, condition \eqref{condition_boundary} is equivalently written as 
\begin{equation}\label{intermediate_step1}
\begin{aligned}
&\epsilon \,\mathbf{i} _{n^-} \Big(\ell \delta  + w \otimes \frac{\partial \ell}{\partial w} -  \textcolor{black}{\frac{\partial \ell}{\partial \kappa }\!\therefore\!  \widehat{ \kappa }}\Big)( \zeta ^\flat , n^-) \\
& \quad  + \frac{1}{ \chi } \left(\textcolor{black}{Ein} ( \mathsf{g}^+)(n^+,n^+) -  \textcolor{black}{Ein} ( \mathsf{g}^-)(n^-,n^-) \right) \epsilon \,\zeta _-^\perp \mu  ^-(h)=0 ,\;\forall\; \zeta \;\; \text{on}\;\; \textcolor{black}{ \partial_{\rm cont}\mathcal{N}}.
\end{aligned}
\end{equation}
From the spacetime covariance of $\ell$ we have
\[
\ell \delta  + w \otimes \frac{\partial \ell}{\partial w} -   \textcolor{black}{\frac{\partial \ell}{\partial \kappa }\!\therefore\!  \widehat{ \kappa } }= 2  \mathsf{g}^- \!\cdot\! \frac{\partial \ell}{\partial  \mathsf{g}^-} \qquad\text{and}\qquad \frac{\partial ^\nabla\ell}{\partial x} =0,
\]
see Lemma \ref{spacetime_mat_ell} with $ \gamma = \mathsf{g}^-$. Hence, using \eqref{cond_1} and \eqref{cond_2}, the condition \eqref{intermediate_step1} reads 
\[
\textcolor{black}{Ein} ( \mathsf{g}^-)( \zeta  , n^-)  \mu ^-(h) -  \textcolor{black}{Ein} ( \mathsf{g}^-)(n^-,n^-) \zeta _-^\perp \mu  ^-(h)=0 ,\;\forall\; \zeta \;\; \text{on}\;\; \textcolor{black}{\partial _{\rm cont}\mathcal{N}}.
\]
This is equivalent to
\[
i_{ \partial \mathcal{N} } ^* \left( \mathbf{i} _{n^-}\textcolor{black}{Ein} ( \mathsf{g}^-) \right) =0
\]
which can also be written as $i_{ \partial \mathcal{N} } ^* \left( \mathbf{i} _{n^-}[\textcolor{black}{Ein} ] \right) =0$ since $\textcolor{black}{Ein} ( \mathsf{g}^+)=0$. It turns out that this condition is a consequence of the junction condition already derived, as it is easily seen by using the Gauss-Codazzi equation, see also Remark \ref{OBrien_Synge}.
\end{proof}

\begin{remark}[On the O'Brien-Synge junction conditions]\label{OBrien_Synge}\rm From the Gauss-Codazzi equations on a nondegenerate hypersurface $ \Sigma $, one obtains the equalities $\textcolor{black}{Ein} (n, n) = - \frac{1}{2} ( \epsilon R^ \Sigma + \operatorname{Tr}(K^2) - k^2)$ and $\textcolor{black}{Ein} ( \zeta , n) =\textcolor{black}{  \nabla^ \Sigma  _{ \partial _a}}K ( \zeta , \partial _b) h^{ab}- \textcolor{black}{  \nabla^ \Sigma  _ \zeta} k$ for all $ \zeta \in \mathfrak{X} ( \Sigma )$. Hence the Darmois-Israel junction conditions $[h]=[K]=0$ imply the conditions $[\textcolor{black}{Ein} (n,n)]=[\textcolor{black}{Ein} ( \zeta , n)]=0$ for all $ \zeta \in \mathfrak{X} ( \Sigma )$, which are a generalized form of the O'Brien-Synge conditions, see \cite{Is1966} and \cite{OBSy1952}.
\end{remark}

\begin{remark}[Null hypersurfaces and nonsmooth intersection]\label{non_smooth_and_null}\rm In our setting we have assumed that the boundaries include only nondegenerate hypersurfaces and that no contribution arises from the codimension-two surfaces at which the boundary pieces are joined together.
On null hypersurfaces the GHY term is ill-defined and an appropriate replacement for the boundary contribution has to be considered in the Einstein-Hilbert action, \cite{Ne2012}, \cite{PaChMaPa2016}. Also, when the intersection between two pieces of the boundary is not smooth, the extrinsic curvature is singular and the boundary action acquires additional contributions from the intersection, \cite{HaSo1981}, \cite{Ha1993}. We refer to \cite{LeMyPoSo2016} for a complete treatment of these situations.
\end{remark}

\color{black} 
\begin{remark}[$ \partial \mathcal{N} $ and $ \partial _{\rm cont} \mathcal{N}$]\rm It should be noted that while the boundary conditions only appear on the spacetime region $ \partial _{\rm cont} \mathcal{N} $ occupied by the boundary of the continuum, which follows from the endpoint condition $\zeta |_{ \Phi (a, \mathcal{B} )}= \zeta |_{ \Phi (b, \mathcal{B} )}=0$ and $ \delta \mathsf{g} |_{ \Phi (a, \mathcal{B} )}=  \delta \mathsf{g}  |_{ \Phi (b, \mathcal{B} )}=0$, the GHY boundary terms in the action functional are needed on the whole boundary $\partial \mathcal{N} ^\pm$ (as well as on $\partial \mathcal{M} $ if not empty).
\end{remark} 

\color{black} 

\section{Examples}

We show in this section how the reduced variational framework proposed above applies to relativistic fluids and elasticity. 
It should be noted that this setting allows to obtain the complete system of field equations by a variational principle directly deduced from the Hamilton principle, by using the covariance assumptions and without the need to include any constraints or unphysical variables.
In the case of relativistic elasticity, our setting also allows to clarify the relation between formulations based on the relativistic right Cauchy-Green tensor or on the relativistic Cauchy deformation tensor.
For the examples below, we take $ \mathcal{D} = [a,b] \times \mathcal{B} $ and $W= \partial _ \lambda $.


\subsection{General relativistic fluids}

\paragraph{Material tensor fields.}
For the description of a relativistic fluid, besides the vector field $\partial _ \lambda \in \mathfrak{X} ( \mathcal{D} )$, two other material tensor fields are needed, given by volume forms $R, S \in \Omega ^{n+1}( \mathcal{D} )$ with
\begin{equation}\label{condition_R_S} 
\pounds _{ \partial _ \lambda } R = \pounds _{\textcolor{black}{\partial_ \lambda }} S =0.
\end{equation}
These two forms correspond to the reference mass density and entropy density and are chosen as
\begin{equation}\label{RR_0SS_0} 
R= d \lambda  \wedge \pi _ \mathcal{B} ^* R_0 \quad\text{and}\quad S= d \lambda  \wedge \pi _ \mathcal{B} ^* S_0,
\end{equation}
where $R_0, S_0 \in \Omega ^n( \mathcal{B} )$ are volume forms on $ \mathcal{B} $, called the mass form and entropy form, \textcolor{black}{and $ \pi _ \mathcal{B} :[a,b] \times \mathcal{B} \rightarrow \mathcal{B} $ is the projection.}
As shown in Remark \ref{remark_generalization}, properties \eqref{condition_R_S} hold for $R,S$ given in \eqref{RR_0SS_0}.

The corresponding Eulerian quantities are the generalized velocity $w$, generalized mass density $ \varrho $, and generalized entropy density $ \varsigma $ given by:
\[
w= \Phi _* W, \qquad \varrho = \Phi _* R, \qquad \varsigma= \Phi_*S.
\]
From these quantities, the world-velocity, the proper mass and entropy densities, and the proper specific entropy are defined as
\begin{equation}\label{relation_fluid_1}
u = \frac{cw}{\sqrt{-
\mathsf{g}(w,w)}} , \quad \rho  = \frac{\sqrt{-\mathsf{g}(w,w)}\;\varrho}{c \;\mu (\mathsf{g})} , \quad s  = \frac{\sqrt{-\mathsf{g}(w,w)}\; \varsigma}{c \;\mu (\mathsf{g})} , \quad \eta = \frac{s}{\rho    }=  \frac{\varsigma }{\varrho }.
\end{equation}
The distinction between the generalized and proper densities plays a central role in our approach and it arises since we  are using world-tubes that do not necessarily satisfy the normalisation condition $\sqrt{- \mathsf{g}( \dot \Phi , \dot \Phi  )}=c$. As we already commented earlier, this allows to formulate the variational principle in the material picture as a standard Hamilton principle, without any constraints.

\paragraph{Lagrangian densities for relativistic fluids.}
In the setting described above, assuming spacetime and material covariance, the corresponding Lagrangian densities are functions of the form
\[
\mathscr{L}( j^1 \Phi , \partial _ \lambda , R,S, \mathsf{g} \circ \Phi ) , \qquad \mathcal{L} (\partial  _ \lambda , R,S, \Gamma ), \qquad \ell(w, \varrho , \varsigma , \mathsf{g}),
\]
\textcolor{black}{where $ \Gamma = \Phi ^* \mathsf{g}$ is the pull-back of the spacetime Lorentzian metric.}
For a fluid with specific internal energy given by a function $e( \rho  , \eta )$, the Lagrangian density is usually given in the Eulerian form, and is defined as (minus) the sum of the proper rest-mass energy density and proper internal energy density, namely
\begin{equation}\label{ell_fluid} 
\begin{aligned}
\ell(w, \varrho , \varsigma , \mathsf{g})&= - \rho  ( c^2 + e( \rho  , \eta )) \mu (\mathsf{g})\\
&= -  \frac{1}{c} \sqrt{- \mathsf{g}(w,w)} \Big( c ^2 + e\Big(  \frac{1}{c}  \sqrt{- \mathsf{g}(w,w)} \frac{\varrho}{ \mu (\mathsf{g})}  ,\frac{\varsigma}{\varrho}  \Big) \Big)\varrho,
\end{aligned}
\end{equation} 
where in the second equality we used the expression of $ \rho  $ and $ \eta $ in terms of $w$, $ \varrho $, $ \varsigma $ given in \eqref{relation_fluid_1}.
From this, the material Lagrangian density is found from the relation
\begin{equation}\label{fluid_L_ell}
\mathscr{L}(j^1 \Phi , \partial _ \lambda  , R, S, \mathsf{g} \circ \Phi )= \Phi ^* [\ell(w, \varrho , \varsigma , \mathsf{g})],
\end{equation}
with $w= \Phi _* \partial _ \lambda$, $\varrho = \Phi _*R$, $\varsigma = \Phi _* S$ which gives
\begin{equation}\label{Lagrangian_density_fluid}
\!\!\!\!\!\mathscr{L}(j^1 \Phi , \partial _ \lambda  , R, S, \mathsf{g} \circ \Phi )= -  \frac{1}{c} \sqrt{-  \mathsf{g}( \dot \Phi   ,  \dot \Phi  )} \bigg( c ^2  +  e  \bigg( \frac{1}{c} \sqrt{-  \mathsf{g}( \dot \Phi   ,  \dot \Phi  ) } \frac{R }{\Phi ^* [\mu (\mathsf{g})]} ,\frac{ S }{R} \bigg) \bigg) R.
\end{equation} 
This Lagrangian is automatically \textcolor{black}{materially covariant} with respect to the isotropy subgroup of $ \partial _ \lambda $, see \eqref{isom_sdp}. 
The extension of $ \mathscr{L}$ to arbitrary non vanishing vector fields $W$ with $ \Gamma (W,W)<0$ is found by writing $w= \Phi _*W$ in the relation \eqref{fluid_L_ell} above, from which we deduce that $\mathscr{L}(j^1 \Phi , W  , R, S, \mathsf{g} \circ \Phi )$ is given by the same expression \eqref{Lagrangian_density_fluid} with $\dot \Phi $ replaced by $ T \Phi \! \cdot \!W$.
Now, from its definition, this expression is automatically \textcolor{black}{materially covariant} under the whole group $ \operatorname{Diff}( \mathcal{D} )$. Spacetime covariance is easily checked and leads to the convective Lagrangian density
\[
\mathcal{L} (W, R,S, \Gamma )= -  \frac{1}{c}\sqrt{-   \Gamma ( W, W )}\bigg( c ^2  +  e  \bigg( \frac{1}{c} \sqrt{-   \Gamma (  W,W  )}\frac{R }{\mu ( \Gamma )} ,\frac{ S }{R} \bigg) \bigg) R.
\]

\begin{remark}[Material Lagrangian, relativistic dust, and the relativistic particle]\rm The material Lagrangian for relativistic fluid given in \eqref{Lagrangian_density_fluid} \textcolor{black}{has not} been apparently considered earlier. This expression elegantly relates the Lagrangian for fluids to that of the relativistic particle. For instance, for the relativistic dust ($e=0$), using \eqref{RR_0SS_0}, it takes the form
\[
\mathscr{L}(j^1 \Phi , \partial _ \lambda  , R, \mathsf{g} \circ \Phi )= -   c\sqrt{-  \mathsf{g}( \dot \Phi   ,  \dot \Phi  )} \; d \lambda \wedge R_0.
\]
This Lagrangian is nothing else than a continuum collection of the Lagrangian density for the relativistic particle
\[
\mathscr{L}( x, \dot x) = - mc\sqrt{-  \mathsf{g}( \dot x   ,  \dot x  )} \; d \lambda,
\]
to which it reduces when $ \mathcal{B} $ is just a point $ \mathcal{B} = \{\mathsf{X}\}$.
\end{remark}

\paragraph{Spacetime reduced Euler-Lagrange equations for fluids.} The results stated in Theorem \ref{spacetime_reduced_EL} and \ref{convective_reduced_EL} are directly applicable to the relativistic fluids. In particular, we get the following Eulerian variational principle and general forms of the spacetime reduced Euler-Lagrange equations by direct application of \eqref{Eulerian_VP} and \eqref{spacetime_EL_g}.

\begin{proposition}\label{prop_Eulerian_VP_fluid} The Eulerian variational formulation for relativistic fluid takes the form\color{black} 
\begin{equation}\label{Eulerian_VP_fluid}
\begin{aligned} 
&\!\!\left. \frac{d}{d\varepsilon}\right|_{\varepsilon=0}\int_{ \mathcal{N}_ \varepsilon } \ell\big( w_ \varepsilon , \varrho  _ \varepsilon , \varsigma _ \varepsilon , \mathsf{g} \big)=0 \quad \text{for variations}\\
& \delta \mathcal{N} = \zeta |_{ \partial \mathcal{N} } \big/T \partial \mathcal{N} , \quad \delta w = -\pounds _ \zeta w, \qquad  \delta \varrho  = - \pounds _ \zeta \varrho , \qquad  \delta \varsigma   = - \pounds _ \zeta \varsigma,\phantom{\int_A^B}
\end{aligned} 
\end{equation}\color{black} 
\textcolor{black}{where $ \zeta$} is an arbitrary vector field on $ \mathcal{N} $ \textcolor{black}{such that $ \zeta |_{ \Phi (a, \mathcal{B} )}= \zeta |_{ \Phi (b, \mathcal{B} )}=0$}.

The critical conditions associated to \eqref{Eulerian_VP_fluid} are
\begin{equation}\label{spacetime_EL_fluid} 
\left\{
\begin{array}{l}
\displaystyle\vspace{0.2cm}\operatorname{div}^ \nabla \!\left( \left(  \ell - \varrho \frac{\partial \ell}{\partial \varrho }- \varsigma  \frac{\partial \ell}{\partial \varsigma  } \right)  \delta  + w \otimes \frac{\partial \ell}{\partial w}\right)=0\\
\displaystyle \left(  \left(  \ell - \varrho \frac{\partial \ell}{\partial \varrho }- \varsigma  \frac{\partial \ell}{\partial \varsigma  } \right)  \delta  + w \otimes \frac{\partial \ell}{\partial w}\right) ( \cdot ,n ^\flat )=0\;\;\text{on $\textcolor{black}{\partial_{\rm cont} \mathcal{N}}$}
\end{array}
\right.
\end{equation}
and the variables $w$, $ \varrho $, $ \varsigma $ satisfy
\begin{equation}\label{advections_fluid} 
\pounds _w \varrho =0 \quad\text{and}\quad \pounds _w \varsigma  =0.
\end{equation} 
\end{proposition}

\medskip 

We note that equations \eqref{advections_fluid} are equivalently written in terms of the world-velocity, the proper mass and entropy density as
\[
\pounds _u (\rho  \mu(g))  =0 \quad\text{and}\quad \pounds _u (s  \mu(g))   =0,
\]
which are obtained by using the equality $ \pounds _{\frac{1}{f}w}( \varrho f)= \pounds _ w \varrho $, \textcolor{black}{for $f$ a non-vanishing function}.

\paragraph{Relativistic fluid equations.} From the expression \eqref{ell_fluid} of the Lagrangian density for fluids, one directly computes the partial derivatives
\begin{align*}
\frac{\partial \ell}{\partial w }&=\frac{1}{c ^2 }   u ^\flat \Big( c^2 + e + \rho  \frac{\partial e}{\partial \rho  } \Big)  \varrho, \qquad \frac{\partial \ell}{\partial \varsigma  }= - \frac{1}{c}\sqrt{-\mathsf{g}(w,w)}  \frac{\partial e}{\partial \eta } \\
\frac{\partial \ell}{\partial \varrho }&=  - \frac{1}{c} \sqrt{-\mathsf{g}(w,w)}  \Big( c^2 + e + \frac{\partial e}{\partial \rho  } \rho  - \frac{\partial e}{\partial \eta }\eta  \Big)
\end{align*}  
which yields the \textcolor{black}{stress-energy-momentum tensor density} for fluids 
\begin{equation}\label{SEM_fluid} 
\mathfrak{T}_{\rm fluid}=\left( \ell- \frac{\partial \ell}{\partial \varrho } \varrho- \frac{\partial \ell}{\partial \varsigma  } \varsigma  \right) \delta  + w \otimes \frac{\partial \ell}{\partial w} = \Big( p \,\delta +  \frac{1}{c ^2 } u \otimes    u ^\flat ( \epsilon _{\rm tot}+ p )\Big) \mu (\mathsf{g}),
\end{equation} 
with $ \epsilon _{\rm tot}=  \rho  ( c ^2 + e( \rho, \eta   )) $ the total energy density and $p= \rho  ^2 \partial e/ \partial  \rho  $ the pressure.

The first equation in \eqref{spacetime_EL_fluid} yields the relativistic Euler equation and energy equation
\begin{equation}\label{relativistic_Euler} 
\frac{1}{c^2} (  \epsilon _{\rm tot} + p )  \nabla _ u u =- \mathsf{P} \nabla p\qquad\text{and}\qquad  \operatorname{div}(\epsilon _{\rm tot}u) + p \operatorname{div}u  =0 
\end{equation} 
while the second one gives the vacuum boundary conditions
\begin{equation}\label{vaccum_Euler}
p|_{\textcolor{black}{ \partial_{\rm cont} \mathcal{N} }}=0,
\end{equation}
which follows from $\mathfrak{T}_{\rm fluid}( \cdot , n^\flat)= p \,n^\flat \textcolor{black}{\mu ( \mathsf{g})}$ on $\textcolor{black}{ \partial_{\rm cont} \mathcal{N} }$.
Note that we also have $\mathsf{g}(u,n)=0$ on $\textcolor{black}{ \partial_{\rm cont} \mathcal{N} }$, by definition of $u$ in terms of the world-tube. In \eqref{relativistic_Euler} we introduced the orthogonal projector $\mathsf{P}$ onto $(\operatorname{span}u)^\perp$, see \eqref{expression_p} later. \textcolor{black}{We refer to \cite{MiThWh1973} for a derivation of \eqref{SEM_fluid} and \eqref{relativistic_Euler} via the standard methods used in relativity theory}.

\paragraph{Coupling with the Einstein equations and junction conditions.} The extension of the variational formulation in Proposition \ref{prop_Eulerian_VP_fluid} to the coupling with general relativity can be obtained by particularizing Theorem \ref{main} to the fluid Lagrangian density \eqref{ell_fluid}. The resulting variational principle then yields the Einstein equations on $ \mathcal{N} ^\pm$, the relativistic fluid equations \eqref{relativistic_Euler} on $ \mathcal{N} ^-$, as well as the Israel-Darmois junction conditions on $\textcolor{black}{ \partial_{\rm cont} \mathcal{N} }$. Recall from Remark \ref{OBrien_Synge} that the latter conditions imply the O'Brien-Synge boundary conditions. Using the Einstein equations on $ \mathcal{N} ^\pm$ one directly gets from the O'Brien-Synge conditions and from the expression \eqref{SEM_fluid}, the boundary conditions \eqref{vaccum_Euler}. To summarize, for the general relativistic fluid we have
\[
[h]=[K]=0 \;\Longrightarrow \; p|_{ \textcolor{black}{ \partial_{\rm cont} \mathcal{N} }}=0.
\]

\subsection{General relativistic elasticity}

As we have seen for the Newtonian case in \S\ref{Review_Newtonian}, the description of relativistic elasticity requires the introduction of an additional material tensor field, namely, a Riemannian metric $G_0$ on $ \mathcal{B} $. From it, we define the 2-covariant symmetric positive tensor $G= \pi _ \mathcal{B} ^* G_0$ on $ \mathcal{D} =[a,b] \times \mathcal{B} $. It satisfies
\[
\pounds _{ \partial _ \lambda } G= \mathbf{i} _{ \partial _ \lambda }G=0.
\]
\textcolor{black}{Note in particular that $G$ is not a metric on $ \mathcal{D} $.}

Given a Lorentzian metric $\mathsf{g}$ on $ \mathcal{M} $, a world-tube $ \Phi : \mathcal{D} \rightarrow \mathcal{M} $ and the associated world-velocity $u$, an important tensor on spacetime for the description of relativistic elasticity is the projection tensor $\mathsf{p}$ defined by
\[
\mathsf{p}(v_x,w_x)= \mathsf{g}(\mathsf{P}(v_x) , \mathsf{P}(w_x)), \quad \text{for all $v_x,w_x \in T_x \mathcal{M} $},
\]
where $v_x \mapsto \mathsf{P}(v_x)$ denotes the orthogonal projector onto $(\operatorname{span}u)^\perp$, see \cite{CaQu1972}. This tensor, \textcolor{black}{also called the Landau-Lifshitz radar metric, see \cite{Br2021}}, plays somehow the same role as the Riemannian metric $g$ on $ \mathcal{S} $ in the non relativistic case, see \S\ref{Review_Newtonian}, while being now an unknown in the relativistic case. This point will be made precise below. Explicitly, $\mathsf{p}$ and $\mathsf{P}$ are given by
\begin{equation}\label{expression_p} 
\begin{aligned} 
\mathsf{p} &=\textcolor{black}{ \mathsf{g}} + \frac{1}{c ^2 } u ^\flat \otimes u ^\flat =\textcolor{black}{ \mathsf{g}}- \frac{1}{g(w,w)}  w^\flat \otimes w ^\flat\\
\mathsf{P}&= \delta + \frac{1}{c ^2 } u  \otimes u ^\flat = \delta - \frac{1}{g(w,w) } w  \otimes w ^\flat .
\end{aligned}
\end{equation}

\paragraph{Relativistic deformation tensors.} Here we recall the definition of some relativistic deformation tensors, see, e.g., \cite{GrEr1966,ErMa1990,Ma1978}, then we clarify their intrinsic nature and their relation with the tensors $G$ and $\mathsf{p}$ in simple geometric terms. In particular, the approach presented here also allows to clarify the relation between formulations of relativistic elasticity based on the relativistic right Cauchy-Green tensor and the metric $G_0$ on one hand, or based on the relativistic Cauchy deformation tensor and the projection tensor on the other hand. A main step for this clarification is the result stated in Lemma \ref{tensors_elasticity}.

For $ \mathcal{D} = [ a,b] \times \mathcal{B} \ni X=( \lambda , \mathsf{X})$, we denote by $\mathsf{X}^A$, $A=1,...,n$ the coordinates on $ \mathcal{B} $ and by $X^a$, $a=0,...,n$ the coordinates on $ \mathcal{D} $ with $X^a= \lambda $ for $ a=0$ and $X^a=\mathsf{X}^a$, for $a=1,...,n$.
Given a world-tube $ \Phi $ and a Lorentzian metric $\mathsf{g}$, the relativistic deformation gradient $F^ \alpha _A$ and the relativistic right Cauchy-Green tensor $C_{AB}$ are defined as
\begin{equation}\label{def_F_C} 
F^ \mu  _A =\mathsf{P}^ \mu _ \lambda {\Phi _{,A} }^ \lambda     \quad\text{and}\quad C_{AB}= \mathsf{g}_{ \mu  \nu  } F^ \mu  _A F^ \nu  _B.
\end{equation} 
Note that, contrarily to what the notation may suggest, $C_{AB}$ is not a tensor field on $ \mathcal{B} $ since it depends on $\lambda $. The inverse relativistic deformation gradient $({}^{\text{-1}\!\!}F)^A_ \mu$ and relativistic Cauchy deformation tensor $\mathsf{c}_{ \mu \nu }$ are defined by
\begin{equation}\label{def_Fi_c} 
({}^{\text{-1}\!\!}F)^A_ \mu = (\Phi ^{-1}) ^A_{, \mu } \quad\text{and}\quad \mathsf{c}_{ \mu \nu }= G_{AB} ({}^{\text{-1}\!\!}F)^A_ \mu({}^{\text{-1}\!\!}F)^B_ \nu.
\end{equation}

\begin{lemma}\label{tensors_elasticity} The relativistic right Cauchy-Green tensor $C$ and the relativistic Cauchy deformation tensor $\mathsf{c}$ are related to the tensors $ \mathsf{p}$ and $G= \pi _ \mathcal{B}  ^* G_0$ via the world-tube $ \Phi $ as
\begin{equation}\label{relation_C_c_p_G} 
C= \Phi ^* \mathsf{p}  \qquad\text{and}\qquad \mathsf{c}= \Phi _* G.
\end{equation} 
In particular $C$ and $\mathsf{c}$ are tensor fields on $ \mathcal{D} $ and $ \mathcal{M} $, respectively. They satisfy
\[
\mathbf{i} _{ \partial _ \lambda }C=0 \qquad\text{and}\qquad \mathbf{i} _u \mathsf{c}=0.
\]
For all $( \lambda , \mathsf{X}) \in \mathcal{D} $ and all $ x \in \Phi ( \mathcal{D} )$,
\[
C( \lambda , \mathsf{X}):T_\mathsf{X} \mathcal{B} \times T_\mathsf{X} \mathcal{B} \rightarrow \mathbb{R} \quad\text{and}\quad 
\mathsf{c}(x): \operatorname{span}(u(x))^\perp \times  \operatorname{span}(u(x))^\perp \rightarrow \mathbb{R}
\]
are positive definite.
\end{lemma}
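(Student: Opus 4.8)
The plan is to reduce every assertion to a single algebraic identity, namely $\mathsf{P}w=0$, together with the positivity of $\mathsf{g}$ on $(\operatorname{span}u)^{\perp}$. I would first record, directly from \eqref{expression_p}, that $\mathsf{P}w = w - \frac{1}{\mathsf{g}(w,w)}\mathsf{g}(w,w)\,w = 0$, hence also $\mathsf{P}u=0$ and, since $\mathsf{p}(\cdot,\cdot)=\mathsf{g}(\mathsf{P}\cdot,\mathsf{P}\cdot)$, that $\mathbf{i}_w\mathsf{p}=\mathbf{i}_u\mathsf{p}=0$; in components this last fact reads $\mathsf{p}_{\mu\nu}=\mathsf{g}_{\alpha\beta}\mathsf{P}^{\alpha}_{\mu}\mathsf{P}^{\beta}_{\nu}$.

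Next I would establish the two identities in \eqref{relation_C_c_p_G}. For $C=\Phi^{*}\mathsf{p}$: for spatial indices $A,B\in\{1,\dots,n\}$ one has $(\Phi^{*}\mathsf{p})_{AB}=\mathsf{p}_{\mu\nu}(\Phi)\,\Phi^{\mu}_{,A}\Phi^{\nu}_{,B}=\mathsf{g}_{\alpha\beta}(\mathsf{P}^{\alpha}_{\mu}\Phi^{\mu}_{,A})(\mathsf{P}^{\beta}_{\nu}\Phi^{\nu}_{,B})=\mathsf{g}_{\alpha\beta}F^{\alpha}_{A}F^{\beta}_{B}=C_{AB}$, which is exactly \eqref{def_F_C}; and $(\Phi^{*}\mathsf{p})(\partial_\lambda,\cdot)=\mathsf{p}(T\Phi\!\cdot\!\partial_\lambda,\cdot)=(\mathbf{i}_w\mathsf{p})\circ\Phi=0$ because $T\Phi\!\cdot\!\partial_\lambda=\dot\Phi=w\circ\Phi$. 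Thus $\Phi^{*}\mathsf{p}$ is precisely the tensorial extension to $\mathcal{D}$ of the object defined in \eqref{def_F_C}, and $\mathbf{i}_{\partial_\lambda}C=0$. For $\mathsf{c}=\Phi_{*}G$: since $G=\pi_{\mathcal{B}}^{*}G_0$ has only spatial components $G_{AB}$, the pushforward is $(\Phi_{*}G)_{\mu\nu}=G_{AB}(\Phi^{-1})^{A}_{,\mu}(\Phi^{-1})^{B}_{,\nu}=G_{AB}({}^{\text{-1}\!\!}F)^{A}_{\mu}({}^{\text{-1}\!\!}F)^{B}_{\nu}=\mathsf{c}_{\mu\nu}$, which is \eqref{def_Fi_c}; and $\mathbf{i}_u\mathsf{c}=0$ follows from $\mathbf{i}_{\partial_\lambda}G=0$ via $\mathbf{i}_w(\Phi_{*}G)=\Phi_{*}(\mathbf{i}_{\partial_\lambda}G)=0$ and $u\parallel w$. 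Smoothness of $C$ and $\mathsf{c}$ as genuine tensor fields is then automatic, being the pullback, respectively pushforward, of a smooth tensor field along the embedding $\Phi$ (admissible, as in the footnote, since $\dim\mathcal{D}=\dim\mathcal{M}$).

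Finally, positivity. Because $u$ is timelike ($\mathsf{g}(u,u)=-c^{2}$), the restriction of $\mathsf{g}$ to $(\operatorname{span}u)^{\perp}$ is positive definite and $\operatorname{span}u\cap(\operatorname{span}u)^{\perp}=\{0\}$. Since $\Phi$ is an embedding of manifolds of equal dimension, $T_X\Phi$ is a linear isomorphism onto $T_{\Phi(X)}\mathcal{M}$ carrying $\partial_\lambda$ to $w$ and $T_{\mathsf{X}}\mathcal{B}$ onto a complement of $\operatorname{span}w$. For $0\ne v\in T_{\mathsf{X}}\mathcal{B}$ we get $C(v,v)=\mathsf{g}(\mathsf{P}T\Phi v,\mathsf{P}T\Phi v)\ge 0$, with equality only if $T\Phi v\in\operatorname{span}w$; by injectivity of $T\Phi$ and transversality of $\partial_\lambda$ to $T_{\mathsf{X}}\mathcal{B}$ this forces $v=0$, so $C(\lambda,\mathsf{X})$ is positive definite on $T_{\mathsf{X}}\mathcal{B}$. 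Dually, for $0\ne z\in(\operatorname{span}u(x))^{\perp}$, setting $Z=(T\Phi)^{-1}z$ and decomposing $Z=Z^{0}\partial_\lambda+Z_{\mathcal{B}}$ with $Z_{\mathcal{B}}\in T\mathcal{B}$ gives $\mathsf{c}(z,z)=G(Z,Z)=G_0(Z_{\mathcal{B}},Z_{\mathcal{B}})\ge 0$, with equality only if $Z\in\operatorname{span}\partial_\lambda$, i.e.\ $z\in\operatorname{span}w=\operatorname{span}u$, which combined with $z\in(\operatorname{span}u)^{\perp}$ forces $z=0$. Hence $\mathsf{c}(x)$ is positive definite on $(\operatorname{span}u(x))^{\perp}$. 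I do not expect a genuine obstacle here: the only care needed is the bookkeeping with the two index ranges ($a=0,\dots,n$ on $\mathcal{D}$ versus $A=1,\dots,n$ on $\mathcal{B}$) and keeping track of the meaning of pullback and pushforward of a tensor along the equidimensional embedding $\Phi$, as already fixed in the text.
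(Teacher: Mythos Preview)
Your argument is correct and follows essentially the same route as the paper's own proof: the identification $C=\Phi^*\mathsf{p}$ via $\mathsf{p}_{\mu\nu}=\mathsf{g}_{\alpha\beta}\mathsf{P}^\alpha_\mu\mathsf{P}^\beta_\nu$ together with $\mathsf{P}w=0$, and the identification $\mathsf{c}=\Phi_*G$ via $G=\pi_{\mathcal B}^*G_0$ and $\mathbf{i}_{\partial_\lambda}G=0$, are exactly the computations carried out in the paper. Your proof is in fact slightly more complete, since you supply the positive-definiteness arguments (using that $\mathsf{g}$ is positive definite on $(\operatorname{span}u)^\perp$ and the transversality of $\partial_\lambda$ to $T\mathcal{B}$), which the paper's proof leaves to the reader.
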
 
\begin{proof} Consider the tensor field on $ \mathcal{D} $ defined by $C= \Phi ^* \mathsf{p}$. In coordinates, we have
\[
C_{ab}= \mathsf{p}_{ \mu \nu } \Phi ^ \mu _{,a} \Phi ^ \nu _{,b}=\mathsf{g}_{ \mu \nu } \mathsf{P}^ \mu _ \lambda \Phi  _{,a} ^ \lambda   \mathsf{P}^ \nu _ \kappa \Phi _{,b} ^ \kappa .
\]
On one hand we have $C_{0b}=0$ for all $b=0,...,n$ since $\mathsf{P}^ \mu _ \lambda \Phi ^ \lambda _ {,0}= \mathsf{P}^ \mu _ \lambda w^ \lambda  \circ \Phi =0$, on the other hand $C_{ab}$ coincides with the local expression given in \eqref{def_F_C} for $a,b=1,...,n$. Hence the tensor field $C$ defined by $C= \Phi ^* \mathsf{p}$ is canonically identified with the relativistic right Cauchy-Green tensor as defined in \eqref{def_F_C}, and satisfies $ \mathbf{i} _{ \partial _ \lambda }C=0$.

To prove the second equality we note that $ \mathsf{c}= \Phi _* G= \Phi _* \pi _ \mathcal{B} ^* G_0= ( \pi _ \mathcal{B} \circ \Phi ^{-1} ) ^* G_0$ and it is easily seen that the local expression of $( \pi _ \mathcal{B} \circ \Phi ^{-1} ) ^* G_0$ coincides with that given in \eqref{def_Fi_c}. We also have $ \mathbf{i} _ u \mathsf{c}=  \mathbf{i} _{ \Phi _* \partial _ \lambda } \Phi _* G= \Phi _*( \mathbf{i} _{ \partial _ \lambda } G)=\Phi _*( \mathbf{i} _{ \partial _ \lambda } \pi _ \mathcal{B} ^*  G_0)=0$.
\end{proof}

\paragraph{Lagrangian densities for relativistic elasticity.} In the setting described above, assuming spacetime covariance, the corresponding Lagrangian densities are functions of the form
\[
\mathscr{L}( j^1 \Phi , \partial _ \lambda , R,G, \mathsf{g} \circ \Phi ) , \qquad \mathcal{L} (\partial  _ \lambda , R,G, \Gamma ).
\]
Under material covariance we can further define the Lagrangian density
\[
\ell(w, \varrho ,  \mathsf{c} , \mathsf{g}).
\]
We consider a general expression $\mathcal{W}(G_0, C)$ for the specific stored energy function for (possibly anisotropic) elasticity, written in terms of the given Riemannian metric $G_0$ on $ \mathcal{B} $ and the relativistic right Cauchy-Green tensor $C$. As a map it reads $ \mathcal{W}: S^2_+ \mathcal{B} \times S^2_+ \mathcal{B} \rightarrow \mathbb{R} $, where $S^2_+ \mathcal{B} \rightarrow \mathcal{B} $ denotes the bundle of 2-covariant symmetric positive tensors, i.e., locally, we have $ \mathcal{W}\big(\mathsf{X}^A,G_0(\mathsf{X})_{AB}, C( \lambda ,\mathsf{X})_{AB}\big)$. Note that an explicit dependence of $ \mathcal{W}$ on $\mathsf{X} \in \mathcal{B} $ is possible.
Inspired by the expression \eqref{Lagrangian_density_fluid} of the material Lagrangian density for fluids, we construct the following Lagrangian density for elasticity with stored energy $ \mathcal{W}$:
\begin{equation}\label{Lagrangian_density_elasticity}
\mathscr{L}(j^1 \Phi , \partial _ \lambda  , R, G, \mathsf{g} \circ \Phi )= -  \frac{1}{c}\sqrt{-  \textcolor{black}{\mathsf{g}} ( \dot \Phi   ,  \dot \Phi  )}\big( c ^2  +  \mathcal{W}(G_0, C) \big) R.
\end{equation} 
It is important to note here that $C= \Phi ^* \mathsf{p}$ on the right hand side, is considered as a function of the material tensor $ \partial _ \lambda $, the \textcolor{black}{spacetime} tensor $\mathsf{g}$ and the first jet extension $j ^1 \Phi $, namely, we have
\[
C=C(j^1 \Phi , \partial _ \lambda , \mathsf{g} \circ \Phi )= \Phi ^* \mathsf{p} = \Phi ^* \left( \mathsf{g} - \frac{1}{\mathsf{g}(w,w)}  w^\flat \otimes w ^\flat \right) , \qquad w= \Phi _* \partial _ \lambda .
\]
\textcolor{black}{The Lagrangian density \eqref{Lagrangian_density_elasticity} for relativistic elasticity has been considered in \cite{Br2021} in the case $ \mathcal{W} $ is a function of the Lagrangian strain tensor, i.e., $ \mathcal{W} (G_0, C)= \overline{ \mathcal{W} }\big((C-G_0)/2\big)$, for $R= d \lambda  \wedge \pi _ \mathcal{B} ^*  R_0$ with $R_0= \mu (G_0)$ the volume form associated to $ G_0$ on $ \mathcal{B} $, see also \cite{DW1962}.}

The Lagrangian density \eqref{Lagrangian_density_elasticity} is spacetime covariant, which follows from the fact that $\sqrt{-  \mathsf{g}( \dot \Phi   ,  \dot \Phi  )}$ is spacetime covariant and from the identity
\[
C( j ^1 ( \psi  \circ \Phi ), \partial _ \lambda , \psi _* \mathsf{g} \circ \psi \circ \Phi )=C(j^1 \Phi , \partial _ \lambda , \mathsf{g} \circ \Phi ),
\]
for all $ \psi \in \operatorname{Diff}( \mathcal{M} )$. The associated reduced convective Lagrangian density is
\[
\mathcal{L} ( \partial _ \lambda , R, G, \Gamma ) = -  \frac{1}{c} \sqrt{-   \Gamma ( \partial _\lambda ,\partial _\lambda  )}\big( c ^2  +  \mathcal{W}(G_0, C) \big) R,
\]
where now $C$ appears as a function of $ \partial _ \lambda $ and $ \Gamma $ 
\[
C=C( \partial _ \lambda ,\Gamma ) = \Gamma - \frac{1}{ \Gamma ( \partial _ \lambda , \partial _ \lambda )} \partial _ \lambda ^{\;\flat_ \Gamma } \otimes \partial _ \lambda ^{\;\flat_ \Gamma }.
\]
Based on the expression above, one directly verifies that $ \mathscr{L}$ is \textcolor{black}{materially covariant} with respect to $ \operatorname{Diff}_{\partial _ \lambda }( \mathcal{D} )$, i.e. satisfies $ \mathscr{L}( j ^1 ( \Phi \circ \varphi ) , \partial _ \lambda , \varphi ^* R,\varphi ^* G, \mathsf{g} \circ \Phi \circ \varphi  )= \varphi ^* [ \mathscr{L}( j ^1 \Phi , \partial _ \lambda , G, \mathsf{g} \circ \Phi )]$ for all $ \varphi \in \operatorname{Diff}_{ \partial _ \lambda }( \mathcal{D})$, if and only if $ \mathcal{W}$ satisfies
\begin{equation}\label{isot_hom} 
\mathcal{W}( \psi ^* G_0, \psi ^* C)= \mathcal{W}( G_0, C) \circ \psi , \quad  \text{for all $ \psi \in \operatorname{Diff}( \mathcal{B} )$}, 
\end{equation} 
see Remark \ref{remark_W_K}, especially \eqref{isom_sdp}. This condition on $ \mathcal{W}$ means that the elastic material is homogeneous isotropic, see \cite[\S3.5]{MaHu1983}.

Under material covariance, it is possible to define the corresponding spacetime Lagrangian density, see \S\ref{subsec_mat_cov}, which is found as
\begin{equation}\label{ell_elasticity} 
\ell(w, \varrho , \mathsf{c},\mathsf{g} )= - \rho  \left( c ^2 + \varpi (\mathsf{c}, \mathsf{p} ) \right) \mu (\mathsf{g} ),
\end{equation} 
where the Eulerian version $\varpi ( \mathsf{c}, \mathsf{p} )$ of the specific stored energy is defined by
\begin{equation}\label{def_varpi} 
\varpi ( \mathsf{c}, \mathsf{p} )= \mathcal{W}( \Phi ^* \mathsf{c}, \Phi ^* \mathsf{p} ) \circ \Phi ^{-1} 
\end{equation}
for a world-tube $ \Phi : \mathcal{D} \rightarrow \mathcal{N} $ with $ \Phi _* \partial _ \lambda = w$. Note that $\varpi$ is defined only on symmetric covariant tensors $\mathsf{c}$, $\mathsf{p}$ such that their pull-back with respect to some word-tube are positive definite on $T \mathcal{B} $ and degenerate on $ \partial _ \lambda $, see \eqref{relation_C_c_p_G}. One uses \eqref{isot_hom} to check that this definition does not depend on the world-tube with $ \Phi _* \partial _ \lambda =w$.

\begin{remark}[On the need of $\mathsf{c}$ and $\mathsf{p}$]\rm One can be surprised that only the projection tensor $\mathsf{p}$, instead of the Lorentzian metric $\mathsf{g}$, is enough to intrinsically determine the invariants of $\mathsf{c}$ via $\varpi( \mathsf{c},\mathsf{p})$. This is due to the fact that both $\mathsf{c}$ and $ \mathsf{p} $ are nondegenerate on the same subspace, namely $(\operatorname{span}u) ^\perp$. Note that from \eqref{def_varpi} it follows that $\varpi$ is well-defined only on such pair $( \mathsf{c}, \mathsf{p} )$ in general. Our approach thus includes the forms of energy considered in \cite{GrEr1966} and \cite{KiMa1992}. Our approach also naturally shows that both tensors $\mathsf{c}$ and $\mathsf{p}$ are needed to express the stored energy of a relativistic material in the spacetime description, thereby confirming a remark of G\'erard Maugin, see \cite[Appendix I]{Ma1978b} about the unsoundness of some proposed formulations of relativistic elasticity based exclusively on $\mathsf{p}$.
\end{remark} 

\color{black} 
\begin{remark}[Material covariance for anisotropic elasticity]\label{MT_aniso}\rm As we have mentioned above, material covariance \eqref{isot_hom} means that the elastic material is isotropic homogeneous. In particular, $ \mathcal{W} $ admits as a symmetry group, the group of $G_0(\mathsf{X})$-orthogonal transformations, \cite{MaHu1983}. For anisotropic materials, such as transverse isotropic or orthotropic materials, only a subgroup of those transformations preserves $ \mathcal{W} $, see \cite{Li1982} and \cite{Bo1987} for instance. Such subgroups arise as isotropy groups of the so called structural (or anistotropic) tensors. We refer to \cite{ZhSp1993} and references therein for the determination of the structural tensors associated to a given symmetry group. It is clear that material covariance, as written in \eqref{isot_hom} cannot hold for anisotropic materials. However, the abstract notion of material covariance can still be achieved for anisotropic elasticity by making explicit the dependence of the stored energy function on the structural tensors, and letting the material diffeomorphisms act on them, see \cite{SiMaKr1988} and \cite{LuPa2000}. As an illustration, let us consider the case of structural tensors given by $n$ linearly independent one-forms $\{ \alpha ^K\}_{K=1}^n$ on $ \mathcal{B}$. By writing the stored energy function as $\mathcal{W}( G_0, \{\alpha ^K\} , C)$, one notes that spacetime covariance trivially holds while material covariance now reads
\begin{equation}\label{isot_hom_anisot} 
\mathcal{W}( \psi ^* G_0,  \{\psi ^* \alpha ^K\}, \psi ^* C)= \mathcal{W}( G_0, \{\alpha ^K\}, C) \circ \psi , \quad  \text{for all $ \psi \in \operatorname{Diff}( \mathcal{B} )$}.
\end{equation}
In this case, the symmetric group of $ \mathcal{W} $ does indeed correspond to the subgroup of the $G(\mathsf{X})$-orthogonal transformations which consists of those transformations that preserve the structural tensors $ \alpha ^K(\mathsf{X})$, $K=1,...,n$.

Thanks to the geometric setting developed in the present paper, these considerations directly allow our setting to cover relativistic anisotropic media, such as transverse isotropic or orthotropic. For instance, an Eulerian version of the stored energy function above can be defined in the anisotropic case, similarly to \eqref{def_varpi}, which now reads $\varpi(\mathsf{c}, \{\mathsf{f}^K\}, \mathsf{p})$, with the additional dependence on the one-forms $\mathsf{f}^K= \Phi _* \alpha ^K$. From $ \pounds _{ \partial _ \lambda } \alpha ^K=0$ and $ \mathbf{i} _{ \partial _ \lambda } \alpha ^K=0$ one gets $ \pounds _w \mathsf{f}^K=0$ and $  \mathbf{i} _w \mathsf{f}^K=0$, from which the continuity equation $ \pounds _u \mathsf{f}^K=0$ for the Eulerian structural tensors follows by using the formula $ \pounds _{fu} \mathsf{f}^K= f\pounds _u \mathsf{f}^K+ df \wedge \mathbf{i} _u \mathsf{f}^K$. We refer to Remark \ref{spacetime_anis} for further comments.

The concept of material covariance, both in the isotropic and anisotropic case, plays a crucial role for several aspects of current interest in Newtonian nonlinear elasticity, such as in the study of inverse problems in anisotropic media, \cite{MaRa2006}, or for the geometric treatment of anelasticity as developed in \cite{SoYa2020}.
\end{remark} 


\color{black} 

\paragraph{Spacetime reduced Euler-Lagrange equations.} By exploiting the setting presented above, the results stated in Theorem \ref{spacetime_reduced_EL} and \ref{convective_reduced_EL} are directly applicable to isotropic relativistic elasticity. In particular, we get the following Eulerian variational principle and spacetime reduced Euler-Lagrange equations by direct application of \eqref{Eulerian_VP} and \eqref{spacetime_EL_g}.

\begin{proposition}\label{prop_Eulerian_VP_elasticity} The Eulerian variational formulation for relativistic isotropic elasticity takes the form
\color{black} 
\begin{equation}\label{Eulerian_VP_elasticity}
\begin{aligned} 
&\!\!\left. \frac{d}{d\varepsilon}\right|_{\varepsilon=0}\int_{ \mathcal{N}_ \varepsilon } \ell\big( w_ \varepsilon , \varrho  _ \varepsilon , \mathsf{c} _ \varepsilon , \mathsf{g} \big)=0 \quad \text{for variations}\\
& \delta \mathcal{N} = \zeta |_{ \partial \mathcal{N} } \big/T \partial \mathcal{N} , \quad \delta w = -\pounds _ \zeta w, \qquad  \delta \varrho  = - \pounds _ \zeta \varrho , \qquad  \delta \mathsf{c}   = - \pounds _ \zeta \mathsf{c},\phantom{\int_A^B}
\end{aligned} 
\end{equation}\color{black} 
\textcolor{black}{where $ \zeta$} is an arbitrary vector field on $ \mathcal{N} $ \textcolor{black}{such that $ \zeta |_{ \Phi (a, \mathcal{B} )}= \zeta |_{ \Phi (b, \mathcal{B} )}=0$}.

The critical conditions associated to \eqref{Eulerian_VP_elasticity} are
\begin{equation}\label{spacetime_EL_elasticity} 
\left\{
\begin{array}{l}
\displaystyle\vspace{0.2cm}\operatorname{div}^\nabla\!\left( \left(  \ell - \varrho \frac{\partial \ell}{\partial \varrho } \right)  \delta  + w \otimes \frac{\partial \ell}{\partial w}- 2 \frac{\partial \ell}{\partial \mathsf{c}} \cdot \mathsf{c} \right)=0\\
\displaystyle \left(  \left(  \ell - \varrho \frac{\partial \ell}{\partial \varrho }  \right)  \delta  + w \otimes \frac{\partial \ell}{\partial w} - 2 \frac{\partial \ell}{\partial \mathsf{c}} \cdot \mathsf{c}\right) ( \cdot ,n ^\flat )=0\;\;\text{on $\textcolor{black}{\partial_{\rm cont} \mathcal{N}}$}
\end{array}
\right.
\end{equation}
and the variables $w$, $ \varrho $, $ \mathsf{c} $ satisfy
\begin{equation}\label{advections_elasticity} 
\pounds _w \varrho =0 \quad\text{and}\quad \pounds _w \mathsf{c}   =0.
\end{equation} 
\end{proposition}

We note that equations \eqref{advections_elasticity} are equivalently written in terms of the world-velocity as
\[
\pounds _u (\rho  \mu(\mathsf{g}))  =0   \quad\text{and}\quad \pounds _u \mathsf{c}=0.
\]
Indeed, from the formula $\pounds _{fu} \mathsf{c}= f \pounds _u \mathsf{c}+ \nabla f \otimes \mathbf{i} _u \mathsf{c}+ \mathbf{i} _u \mathsf{c} \otimes \nabla f$
and the property $\mathbf{i}_u \mathsf{c}=0$, see Lemma \ref{tensors_elasticity}, we obtain the equivalence $ \pounds _u\mathsf{c}=0 \Leftrightarrow \pounds _w \mathsf{c}=0$.

\begin{remark}[Anisotropic elasticity]\label{spacetime_anis}\rm In the anisotropic case, see Remark \ref{MT_aniso}, the Lagrangian $\ell$ also depends on the one-forms $\mathsf{f}^K$. From the general expression \eqref{spacetime_EL_g} of the reduced Euler-Lagrange equations, this results in the inclusion of the term
\[
-\sum_{K=1}^n \frac{\partial \ell}{\partial \mathsf{f}^K} \otimes \mathsf{f}^K
\]
in both equations in \eqref{spacetime_EL_elasticity}. Also, the additional the continuity equations $ \pounds _w \mathsf{f}^K=0$ \textcolor{black}{for the structural tensors}, or its equivalent formulation $\pounds _u\mathsf{f}^K=0$,  appear in \eqref{advections_elasticity}. 
\end{remark}

\paragraph{Relativistic elasticity equations.}
From the expression \eqref{ell_elasticity} of the Lagrangian density for relativistic elasticity, one directly computes the partial derivatives
\begin{align*}
\frac{\partial \ell}{\partial w }&=\frac{1}{c^2}u ^\flat ( c^2+ \varpi) \varrho  - \rho  \frac{\partial \varpi}{\partial \mathsf{p} _{ \mu \nu } }\frac{\partial \mathsf{p} _{ \mu \nu }}{\partial w} \mu (\mathsf{g}), \qquad \frac{\partial \ell}{\partial \mathsf{c}}= - \rho  \frac{\partial \varpi}{\partial \mathsf{c}} \mu (\mathsf{g}),\\
\frac{\partial \ell}{\partial \varrho }&=  - \frac{\sqrt{-\textcolor{black}{ \mathsf{g}}(w,w)}}{c}   ( c^2 +  \varpi   ).
\end{align*}  
We note that
\[
\frac{\partial \mathsf{p} _{ \mu \nu } }{\partial w ^ \lambda } = \frac{2w_ \mu w_ \nu }{\textcolor{black}{ \mathsf{g}}(w,w) ^2 }   \textcolor{black}{ \mathsf{g}}_{ \alpha \lambda  }   w ^ \alpha - \frac{1}{\textcolor{black}{ \mathsf{g}}_{ \alpha \beta } w ^ \alpha w^ \beta } (\textcolor{black}{ \mathsf{g}}_{ \mu \lambda } w_ \nu   +  \textcolor{black}{ \mathsf{g}}_{ \nu \lambda }w_ \mu  )
\]
which then gives
\[
\frac{\partial \varpi}{\partial \mathsf{p} _{ \mu \nu }} \frac{\partial \mathsf{p} _{ \mu \nu } }{\partial w ^ \lambda } =-\frac{2}{\textcolor{black}{ \mathsf{g}}(w,w)} \frac{\partial \varpi}{\partial \mathsf{p} _{ \mu \nu }} \mathsf{p} _{ \mu \lambda } w_ \nu=\frac{2}{\textcolor{black}{ \mathsf{g}}(w,w)} \frac{\partial \varpi}{\partial \mathsf{c} _{ \mu \nu }} \mathsf{c} _{ \mu \lambda } w_ \nu ,
\]
where we used the spacetime covariance of $ \varpi$, i.e., $\varpi( \varphi ^* \mathsf{c}, \varphi ^* \mathsf{p} ) = \varpi( \mathsf{c}, \mathsf{p} ) \circ \varphi$ , for all $\varphi \in \operatorname{Diff}( \mathcal{M} )$, which yields
\[
\frac{\partial \varpi}{\partial \mathsf{c}} \cdot \mathsf{c} +  \frac{\partial \varpi}{\partial \mathsf{p} } \cdot \mathsf{p}  =0.
\]
From this, one gets the \textcolor{black}{stress-energy-momentum tensor density} for elasticity as
\[
\mathfrak{T}_{\rm el}=\left( \ell- \frac{\partial \ell}{\partial \varrho } \varrho \right) \delta  + w \otimes \frac{\partial \ell}{\partial w} - 2 \frac{\partial \ell}{\partial \mathsf{c}} \cdot \mathsf{c}= \big(  \epsilon _{\rm tot}\frac{1}{c ^2 } u \otimes    u ^\flat - \mathfrak{t}_{\rm el} \big) \mu (\mathsf{g}),
\]
where $ \epsilon _{\rm tot}=  \rho  ( c ^2 + \varpi( \mathsf{c}, \mathsf{p})) $ is the total energy density and with the relativistic stress tensor
\[
\mathfrak{t}_{\rm el}= - 2 \rho\,\mathsf{P} \!\cdot \!\frac{\partial \varpi}{\partial \mathsf{c}} \!\cdot\! \mathsf{c}, \qquad (\mathfrak{t}_{\rm el})^ \mu _ \nu  = - 2 \rho\,\mathsf{P}^ \mu _ \alpha    \frac{\partial \varpi}{\partial \mathsf{c}_{\lambda \alpha }}  \mathsf{c}_{ \lambda \nu } .
\]

The first equation in \eqref{spacetime_EL_elasticity} yields the relativistic Euler-Cauchy equation and energy equation
\begin{equation}\label{relativistic_Euler_Cauchy} 
\frac{1}{c^2} (  \epsilon _{\rm tot}   \nabla _ u u + u \;\mathfrak{t}_{\rm el}\!:\!\nabla u ) = \operatorname{div}\mathfrak{t} _{\rm el} \qquad\text{and}\qquad \operatorname{div}(\epsilon _{\rm tot}u) = \mathfrak{t}_{\rm el}\!:\!\nabla u
\end{equation} 
while the second one gives the zero traction boundary conditions
\begin{equation}\label{vaccum_Euler_Cauchy}
\mathfrak{t}_{\rm el}( \cdot , n^\flat) =0\quad\text{on}\quad \textcolor{black}{ \partial_{\rm cont} \mathcal{N} =0},
\end{equation}
which follows from $\mathfrak{T}_{\rm el}( \cdot , n^\flat)= \mathfrak{t}_{\rm el}( \cdot , n^\flat)\textcolor{black}{\mu (\mathsf{g})}$ on $\textcolor{black}{ \partial_{\rm cont} \mathcal{N} }$. We have $\mathsf{g}(u, n)=0$ on $\textcolor{black}{\partial _{\rm cont}\mathcal{N}}$ as before. 

\color{black}We refer to \cite{GrEr1966} for the derivation of the stress-energy-momentum tensor and equations \eqref{relativistic_Euler_Cauchy} for special relativistic elasticity from the point of view of relativistic balance laws and constitutive equations. We are not aware of such a standard derivation for the general setting we are treating here based on a specific stored energy $\varpi(\mathsf{c},\mathsf{p})$ involving both $\mathsf{c}$ and $\mathsf{p}$, similarly for the anisotropic case based on $\varpi(\mathsf{c}, \{\mathsf{f}^K\}, \mathsf{p})$, see Remarks \ref{MT_aniso} and \ref{spacetime_anis}. 

\textcolor{black}{Regarding the variational derivation, the closest approach seems to be that developed in \cite{Br2021} in which the equations for relativistic elasticity in the Eulerian description were obtained from the Lagrangian density \eqref{Lagrangian_density_elasticity} in the material description, for the case $ \mathcal{W} (G_0, C)= \overline{ \mathcal{W} }\big((C-G_0)/2\big)$.}
\color{black}

\paragraph{Coupling with the Einstein equations and junction conditions.} In a similar way with the fluid case, the extension of the variational formulation to the coupling with general relativity can be obtained by particularizing Theorem \ref{main} to the Lagrangian density \eqref{ell_elasticity} of elasticity. One obtains the Einstein equations on $ \mathcal{N} ^\pm$, the equations \eqref{relativistic_Euler} on $ \mathcal{N} ^-$, as well as the Israel-Darmois junction conditions on $ \partial \mathcal{N} $. On the boundary one gets
\[
[h]=[K]=0 \;\Longrightarrow \; \mathfrak{t}_{\rm el}( \cdot , n^\flat) =0\quad\text{on}\quad \textcolor{black}{ \partial_{\rm cont} \mathcal{N} }.
\]

\subsection{General relativistic continua}

The variational approach presented above for fluid and elasticity can be easily extended to more general continua with internal energy function $ \mathcal{W}( \rho  , \eta , G_0, C)$, which thus covers both the fluid and elasticity cases. We very briefly describe this situation below. For such continua the material Lagrangian takes the form
\begin{equation}\label{Lagrangian_density_general}
\begin{aligned} 
&\mathscr{L}(j^1 \Phi , \partial _ \lambda  , R, S, G, \mathsf{g} \circ \Phi )\\
&= - \frac{1}{c}  \sqrt{-  \mathsf{g}( \dot \Phi   ,  \dot \Phi  )}\Bigg( c ^2  +  \mathcal{W}  \bigg( \frac{1}{c} \sqrt{-  \mathsf{g}( \dot \Phi   ,  \dot \Phi  )}\frac{R }{\Phi ^* [\mu (\mathsf{g})]} ,\frac{ S }{R} , G_0, C\bigg) \Bigg) R.
\end{aligned}
\end{equation} 
This Lagrangian density is spacetime covariant. It is \textcolor{black}{materially covariant} if and only if the continua is isotropic homogeneous,
\begin{equation}\label{isot_hom_general} 
\mathcal{W}(\rho  \circ \psi , \eta \circ \psi ,  \psi ^* G_0, \psi ^* C)= \mathcal{W}(\rho  , \eta , G_0, C) \circ \psi , \quad  \text{for all $ \psi \in \operatorname{Diff}( \mathcal{B} )$}.
\end{equation} 
Material covariance can be also achieved for the anisotropic case by following Remarks \ref{MT_aniso} and \ref{spacetime_anis}, and expressing the internal energy as a function of the form $ \mathcal{W}( \rho  , \eta , G_0, \{ \alpha ^K\}, C)$. A detailed study of this situation will be pursued elsewhere .

Assuming isotropy for simplicity, the spacetime Lagrangian density reads
\[
\ell(w, \varrho ,\varsigma , \mathsf{c}, \mathsf{g})= - \rho  \left( c ^2 + \varpi (\rho  , \eta ,\mathsf{c}, \mathsf{p} ) \right) \mu (\mathsf{g})
\]
where we defined $\varpi ( \rho  , \eta , \mathsf{c}, \mathsf{p} )= \mathcal{W}( \rho  \circ \Phi , \eta \circ \Phi , \Phi ^* \mathsf{c}, \Phi ^* \mathsf{p} ) \circ \Phi ^{-1}$, as in \eqref{def_varpi}.
Propositions \ref{prop_Eulerian_VP_fluid} and \ref{prop_Eulerian_VP_elasticity} generalize to this case, yielding the spacetime reduced Euler-Lagrange equations in the general form
\begin{equation}\label{spacetime_EL_continua} 
\left\{
\begin{array}{l}
\displaystyle\vspace{0.2cm}\operatorname{div}^ \nabla \!\left( \left(  \ell - \varrho \frac{\partial \ell}{\partial \varrho }- \varsigma  \frac{\partial \ell}{\partial \varsigma  } \right)  \delta  + w \otimes \frac{\partial \ell}{\partial w}- 2 \frac{\partial \ell}{\partial \mathsf{c}} \cdot \mathsf{c} \right)=0\\
\displaystyle \left(  \left(  \ell - \varrho \frac{\partial \ell}{\partial \varrho }- \varsigma  \frac{\partial \ell}{\partial \varsigma  } \right)  \delta  + w \otimes \frac{\partial \ell}{\partial w} - 2 \frac{\partial \ell}{\partial \mathsf{c}} \cdot \mathsf{c}\right) ( \cdot ,n ^\flat )=0\;\;\text{on $\textcolor{black}{ \partial_{\rm cont} \mathcal{N} }$}
\end{array}
\right.
\end{equation}
and $\pounds _w \varrho =0$, $\pounds _w \varsigma  =0$,  and $\pounds _w \mathsf{c}=0$.

From this, one computes the \textcolor{black}{stress-energy-momentum tensor density} as
\begin{align*} 
\mathfrak{T}_{\rm cont}&=\left( \ell- \frac{\partial \ell}{\partial \varrho } \varrho - \varsigma  \frac{\partial \ell}{\partial \varsigma  }\right) \delta  + w \otimes \frac{\partial \ell}{\partial w} - 2 \frac{\partial \ell}{\partial \mathsf{c}} \cdot \mathsf{c}= \big(  \epsilon _{\rm tot}\frac{1}{c ^2 } u \otimes    u ^\flat - \mathfrak{t} \big) \mu (\mathsf{g}),
\end{align*} 
where $ \epsilon _{\rm tot}=  \rho  ( c ^2 + \varpi( \rho, \eta ,\mathsf{c}, \mathsf{p})) $ is the total energy density and with the relativistic stress tensor
\[
\mathfrak{t}= - \mathsf{P} \!\cdot \! \left( \!\rho  ^2 \frac{\partial \varpi}{\partial \rho  } \, \delta + 2 \rho\,\frac{\partial \varpi}{\partial \mathsf{c}} \!\cdot\! \mathsf{c}\! \right) =  - \mathsf{P} p+ \mathfrak{t}_{\rm el} , \quad \mathfrak{t}^ \mu _ \nu  = - \mathsf{P}^ \mu _ \alpha    \left( \! \rho  ^2 \frac{\partial \varpi}{\partial \rho  } \,  \delta ^ \alpha _ \nu + 2 \rho \frac{\partial \varpi}{\partial \mathsf{c}_{\lambda \alpha }}  \mathsf{c}_{ \lambda \nu }\! \right),
\]
written in terms of the pressure $p$ and relativistic elastic stress tensor $\mathfrak{t}_{\rm el}$.

The first equation in \eqref{spacetime_EL_continua} yields the relativistic Euler-Cauchy equation and energy equation for the relativistic continuum as
\[
\frac{1}{c^2} (  (\epsilon _{\rm tot} +p)  \nabla _ u u + u \;\mathfrak{t}_{\rm el}\!:\!\nabla u ) = -\mathsf{P} \nabla p + \operatorname{div}\mathfrak{t} _{\rm el} \;\;\text{and}\;\; \operatorname{div}(\epsilon _{\rm tot}u) + p\operatorname{div}u = \mathfrak{t}_{\rm el}\!:\!\nabla u
\]
while the second one gives the zero traction boundary conditions
\begin{equation}\label{vaccum_Euler_Cauchy_cont}
\mathfrak{t}_{\rm el}( \cdot , n^\flat) =p\, n^ \flat \quad\text{on}\quad \textcolor{black}{ \partial_{\rm cont} \mathcal{N} }.
\end{equation}

The coupling with general relativity follows again from Theorem \ref{main} and on the boundary one gets
\[
[h]=[K]=0 \;\Longrightarrow \; \mathfrak{t}_{\rm el}( \cdot , n^\flat) =p\, n^ \flat\quad\text{on}\quad \textcolor{black}{ \partial_{\rm cont} \mathcal{N} }.
\]

We give below two tables that summarize the analogies between the variational framework for nonrelativistic and relativistic Lagrangian continuum theories. We focus on the spacetime reduced description. The convective reduced description can be described similarly. 

\begin{table}[h!]
\centering
\begin{tabular}{|c | c |} 
\hline
\textsf{Nonrelativistic} & \textsf{Relativistic}\\ [0.25ex] 
\hline
\hline
Material tensor fields & Material tensor fields \\[0.25ex]
\& associated spatial versions & \& associated spacetime versions \\[0.25ex]
\hline 
$\begin{array}{c}
R  \\
S\\
G
\end{array}\xrightarrow{ \;\;\varphi _*\;\;}
\begin{array}{c}
\varrho \\
\varsigma  \\
\mathsf{c}
\end{array}$
&
$\begin{array}{l}
\partial _ \lambda \\
R\\
S\\
G
\end{array}
\xrightarrow{ \;\;\Phi _*\;\;}
\begin{array}{c}
w\\
\varrho \\
\varsigma  \\
\mathsf{c}
\end{array}$
\\
[0.25ex] 
\hline
\hline
Spatial tensor field & Spacetime tensor field \\[0.25ex]
\& associated material version & \& associated material version \\[0.25ex]
\hline 
$ C\;\; \xleftarrow{\;\; \varphi ^*\;\;} \;\; g$ &
$ C\;\; \xleftarrow{ \;\;\Phi  ^*\;\;} \;\; \mathsf{p}=\mathsf{p}(\mathsf{g}, w)$
\\[0.25ex] 
\hline\hline
Spatial covariance & Spacetime covariance \\[0.25ex]
\hline 
$
\begin{array}{c}
\phantom{\frac{\int_A}{\int} \!\!\!\!}\leadsto \mathscr{L}(\dot \varphi , T \varphi , R, S, G,g\circ \varphi )\phantom{\int^A\!\!\!\! }\\
\phantom{\frac{\int^A}{\int} \!\!\!\!}=\frac{1}{2} g( \dot \varphi , \dot \varphi ) R - \mathcal{W}\big(\textcolor{black}{  \frac{R}{\mu (C)} , \frac{S}{R}  , G}, C\big)R\vspace{0.2cm}
\end{array}
$
&
$
\begin{array}{c}
\phantom{\frac{\int_A}{\int} \!\!\!\!}\leadsto \mathscr{L}( j^1 \Phi , \partial _ \lambda ,R, S, G, \mathsf{g} \circ \Phi  )\phantom{\frac{\int_A}{\int} \!\!\!\!}\\
\phantom{\frac{\int_A}{\int}\!\!\!\!\!\!\!}= -  \frac{\sqrt{-  \mathsf{g}( \dot \Phi   ,  \dot \Phi  )}}{c} \big( c ^2  +  \mathcal{W}  \big(\textcolor{black}{  \frac{\sqrt{-  \mathsf{g}( \dot \Phi   ,  \dot \Phi  )}R }{ c \Phi ^* [\mu (\mathsf{g})]}   , \frac{S}{R} }, G_0, C\big) \big) R\vspace{0.2cm}
\end{array}$
\\[0.25ex] 
\hline\hline
Material covariance & Material covariance\\ [0.5ex] 
\hline
$\begin{array}{cc}
&\phantom{\int^A \!\!\!\!}\leadsto  \mathscr{L}( \dot \varphi , T \varphi , R, S, G, g \circ \varphi )\phantom{\int^A \!\!\!\!}\\
&=\varphi ^* [\ell(u, \varrho , \varsigma , \mathsf{c}, g)]\vspace{0.2cm}
\end{array}$
&
$\begin{array}{cc}
&\phantom{\int^A \!\!\!\!}\leadsto  \mathscr{L}( j^1 \Phi , \partial _ \lambda  , R, S, G, \mathsf{g} \circ \varphi ) \phantom{\int^A \!\!\!\!}\\
&= \Phi  ^* [\ell( w, \varrho , \varsigma , \mathsf{c}, \mathsf{g})]\vspace{0.2cm}
\end{array}$
\\
with & with \\
$
\begin{array}{c}
\phantom{\int^A \!\!\!\!}\ell(u, \varrho   , \varsigma , \mathsf{c},g)\phantom{\int^A \!\!\!\!}\\
=   \frac{1}{2} g(u,u) \varrho - \varpi( \rho  , \eta , \mathsf{c}, g) \varrho\vspace{0.2cm}
\end{array}$
&
$
\begin{array}{c}
\phantom{\int^A \!\!\!\!}\ell(w, \varrho ,\varsigma , \mathsf{c}, \mathsf{g})\phantom{\int^A \!\!\!\!}\\
= -  \frac{\sqrt{-\mathsf{g}(w,w)}}{c }   \left( c ^2 + \varpi (\rho  , \eta ,\mathsf{c}, \mathsf{p} ) \right) \varrho\vspace{0.2cm}
\end{array}$
\\ [0.25ex] 
\hline
\end{tabular}
\caption{Schematic correspondence between relativistic and nonrelativistic Lagrangian continuum theories.}
\label{comparison}
\end{table}

\begin{table}[h!]
\centering
\begin{tabular}{|c | c |} 
\hline
& \textsf{Equations of motion for the continuum} \\ [0.25ex] 
\hline
\textsf{Nonrelativistic}  & $
\begin{array}{c}
\partial _t \frac{\partial \ell}{\partial u} + \operatorname{div}^ \nabla \left(  \left(  \ell - \varrho \frac{\partial \ell}{\partial \varrho }- \varsigma  \frac{\partial \ell}{\partial \varsigma  } \right)  \delta  + u \otimes \frac{\partial \ell}{\partial u} - 2 \frac{\partial \ell}{\partial \mathsf{c}} \cdot \mathsf{c} \right) =0\\
\vspace{0.1cm}\partial _t \varrho + \pounds  _u \varrho =0,\quad  \partial _t \varsigma  + \pounds  _u \varsigma  =0, \quad\partial _t \mathsf{c} + \pounds  _u  \mathsf{c}  =0
\end{array}
$ \\
\hline 
\textsf{Relativistic} & $
\begin{array}{c}
\operatorname{div}^ \nabla \!\left( \left(  \ell - \varrho \frac{\partial \ell}{\partial \varrho }- \varsigma  \frac{\partial \ell}{\partial \varsigma  } \right)  \delta  + w \otimes \frac{\partial \ell}{\partial w}- 2 \frac{\partial \ell}{\partial \mathsf{c}} \cdot \mathsf{c} \right)=0\\
\vspace{0.1cm}\pounds  _w \varrho =0,\qquad   \pounds  _w \varsigma  =0, \qquad \pounds  _w  \mathsf{c}  =0
\end{array}
$ \\
\hline
\end{tabular} \qquad  
\caption{Relativistic and nonrelativistic spacetime reduced Euler-Lagrange equations for continuum theories.}
\label{comparison_equations}
\end{table}

Despite the apparent analogies in these tables, one should not forget crucial differences between the two situations. For instance (choosing $n=3$) the push-forward and pull-back operations $ \varphi _*$ and $ \varphi ^* $ act on tensor fields defined on the three dimensional manifolds $ \mathcal{B} $ and $ \mathcal{S} $, at each fixed time $t$, while $ \Phi _*$ and $ \Phi ^* $ act on tensor fields defined on the four dimensional manifolds $ \mathcal{D} =[a,b] \times \mathcal{B} $ and $ \mathcal{M} $. Similarly, the divergence and Lie derivative operators act on objects defined on three dimensional manifolds, at each fixed time $t$ for the nonrelativistic case, while they act on objects defined on four dimensional manifolds in the relativistic case. Note that we have kept the first jet notation $j^1 \Phi $ to encode the first derivatives of the world-tube $ \Phi : \mathcal{D} \rightarrow \mathcal{M} $ for the relativistic case. While for $ \mathcal{D} =[a,b] \times \mathcal{B}\ni X=( \lambda , \mathsf{X}) $ we could also write it as the couple $( \partial _ \lambda\Phi  , T \Phi )$ with $T \Phi $ the tangent map to $\Phi ( \lambda , \cdot ):  \mathcal{B} \rightarrow \mathcal{M} $ in analogy with the non-relativistic notation $(\dot \varphi , T \varphi )$, we prefer to use the first jet notation since $ \lambda $ is not treated differently from $\mathsf{X}$ when passing from a picture to another, as opposed to the time $t$ of the Newtonian case.

\section{Conclusion}

We have established a Lagrangian variational framework for general relativistic continuum theories that systematically parallels the Lagrangian reduction approach to nonrelativistic fluids and elasticity. Using spacetime and material covariance properties, our approach allows to rigorously deduce variational formulations in the convective and Eulerian descriptions by starting from the most natural and physically justified principle, namely, a continuum version of the Hamilton principle for the relativistic particle. In particular, our formulation does not need the inclusion of constraints or unphysical variables in the action functional.

We showed how this setting can be extended to the coupling of the continuum dynamics with gravitation theory by including the Gibbons-Hawking-York term associated to the boundary of the relativistic continuum. In particular, the critical condition of the resulting variational principle gives the Einstein equations for the gravitational field created by the relativistic continuum, both at the interior and outside the continuum; the equations of motion of the continuum in this gravitational field; the junction conditions between the solution at the interior of the relativistic continuum and the solution describing the gravity field produced outside from it.
The variation of the Gibbons-Hawking-York term term with respect to the boundary has necessitated some extensions of previous results on the first variation of mean curvature integrals.

The setting was then applied to relativistic fluids and relativistic elasticity by focusing on appropriate choices of reference tensor fields and Lagrangian densities. For elasticity, the variational setting also allowed to clarify the relation between formulations based on the relativistic right Cauchy-Green tensor and the metric $G$ on one hand, or based on the relativistic Cauchy deformation tensor and the projection tensor, \textcolor{black}{or radar metric}, on the other hand. The resulting Israel-Darmois junction conditions were shown to imply the zero traction type boundary conditions for the continuum, via the O'Brien-Synge conditions.  The setting developed here offers new tools for the modelling of general relativistic continuum theories by variational principles, which will be further exploited in subsequent parts of the paper.

{\footnotesize

\bibliographystyle{new}
\addcontentsline{toc}{section}{References}

}

\end{document}